\documentclass[aps,pra,10pt,twocolumn,showpacs,superscriptaddress,preprintnumbers]{revtex4-1}

\usepackage[utf8]{inputenc}  
\usepackage[T1]{fontenc}
\usepackage[english]{babel}  

\usepackage{graphicx}
\usepackage{bm}

\usepackage{amsmath,amssymb,amscd,amsthm} 

\usepackage{color}
\usepackage{verbatim} 
\usepackage{xspace}
\usepackage{rotating}


\usepackage[bookmarks=false,pdfstartview={FitH}]{hyperref}

\theoremstyle{plain}
\newtheorem{theorem}{Theorem}
\newtheorem{corollary}{Corollary}

\theoremstyle{definition}
\newtheorem{example}{Example}

\renewcommand{\Re}{\operatorname{Re}}
\newcommand{\unity}{\ensuremath{{\rm 1 \negthickspace l}{}}}

\newcommand{\trace}{\operatorname{tr}}
\newcommand{\diag}{\operatorname{diag}}

\newcommand{\spec}{\sigma} 
\newcommand{\wrt}[1]{\:\mathrm{d}#1\:} 

\newcommand{\bra}[1]{\ensuremath{\langle #1 |}{}}
\newcommand{\ket}[1]{\ensuremath{| #1 \rangle}{}}

\newcommand{\ketbra}[2]{\ensuremath{| #1 \rangle \langle #2 |}{}}

\newcommand{\expt}[1]{\ensuremath{\langle #1 \rangle}{}}

\newcommand{\SWAP}{{\sc swap}}
\newcommand{\iSWAP}{i{\sc swap}}

\newcommand{\Partial}[2]{\ensuremath \frac{\partial{#1}}{\partial{#2}}{}}

\newcommand{\sys}[1]{\ensuremath{_{#1}}} 

\newcommand{\tauf}{\tau}

\newcommand{\su}{\mathfrak{su}}

\newcommand{\MSC}{\ensuremath{\sf{MSC}}\xspace}
\newcommand{\MMC}{\ensuremath{\sf{MMC}}\xspace}
\newcommand{\KSC}{\ensuremath{\sf{KSC}}\xspace}
\newcommand{\KMC}{\ensuremath{\sf{KMC}}\xspace}
\newcommand{\DSC}{\ensuremath{\sf{DSC}}\xspace}

\newcommand{\reach}{\mathfrak{reach}}
\newcommand{\pos}{\mathfrak{pos}}

\newcommand{\grape}{{\sc grape}\xspace}
\newcommand{\dynamo}{{\sc dynamo}\xspace}

\newcommand{\be}{\begin{equation}}
\newcommand{\ee}{\end{equation}}

\newcommand{\expfactorbitflip}{\varepsilon}
\newcommand{\expfactoramp}{\varepsilon}

\newcommand{\pla}{n(\omega_0)}
\newcommand{\boltz}{e^{-\beta \hbar \omega_0}}
\newcommand{\bboltz}{b}                
\newcommand{\rhoalg}{\rho_{\text{alg}}}

\newcommand{\Hlamb}{H_{\text{LS}}}     
\newcommand{\Hrf}{H_{\text{rot}}}      
\newcommand{\carrier}{\tilde{\omega}}  
\newcommand{\omegacut}{\omega_{\text{cut}}}  

\newcommand{\speccorr}{C}  


\newcommand{\I}{\unity} 

\begin{document}


\title{Arbitrary $n$-Qubit State Transfer Implemented by\\ Coherent Control and Simplest Switchable Local Noise}

\author{Ville Bergholm}
\email{ville.bergholm@iki.fi}
\affiliation{Dept.~Chemistry, Technical University of Munich (TUM), D-85747 Garching, Germany}
\author{Frank K.~Wilhelm}\email{fwm@lusi.uni-sb.de}
\affiliation{Institute for Theoretical Physics, University of Saarland, 66123 Saarbr{\"u}cken, Germany}
\author{Thomas Schulte-Herbr{\"u}ggen}\email{tosh@ch.tum.de}
\affiliation{Dept.~Chemistry, Technical University of Munich (TUM), D-85747 Garching, Germany}

\date{\today}

\pacs{03.67.-a, 03.67.Lx, 03.65.Yz} 
\keywords{quantum control of decoherence, open systems}

\begin{abstract}
We study reachable sets of open $n$-qubit quantum systems, whose coherent parts
are under full unitary control, by adding as a further degree of incoherent control
switchable Markovian noise on a single qubit.
In particular, adding bang-bang control of amplitude
damping noise (non-unital) allows the dynamic system to act transitively on the entire 
set of density operators. Thus one can transform any initial quantum state into 
any desired target state. Adding switchable bit-flip noise (unital) instead 
suffices to get all states majorised by the initial state. Our 
open-loop optimal control package \dynamo is extended by incoherent control to
exploit these unprecedented reachable sets in experiments. 
We propose implementation by a GMon, a superconducting device
with fast tunable coupling to an open transmission line, and illustrate
how open-loop control with noise switching achieves
all state transfers without
measurement-based closed-loop feedback
and resettable ancilla.
\end{abstract}

\maketitle



Recently, dissipation was exploited for quantum state
engineering~\cite{VWC09,KMP11} so that evolution under constant noise leads
to long-lived entangled fixed-point states. Earlier, Lloyd and Viola~\cite{VioLloyd01} showed 
that closed-loop feedback from one {\em resettable ancilla qubit} suffices to simulate any 
quantum dynamics of open systems. 
Both concepts were used to combine coherent  dynamics with optical pumping
on an ancilla qubit for dissipative preparation of entangled states~\cite{BZB11} or quantum maps~\cite{SMB12}. 
Clearly full control over the Kraus operators~\cite{Rabitz07b} 
or the environment~\cite{Pechen11} allows for interconverting arbitrary
quantum states.

Manipulating quantum systems with high precision is paramount to exploring their
properties for pioneering experiments, e.g., in view of
new technologies~\cite{DowMil03, control-roadmap2015}.
Superconducting qubits count among the most promising designs for 
scalable
quantum simulation and quantum information processing.
First adjustable couplers were introduced in flux qubits~\cite{Hime_2006, Harris_2007}.
Recently, {\em fast tunable couplers} were implemented for transmon qubits, e.g.,
in the GMon design~\cite{Mart09,Mart13,Mart14}.
Thus the goal to extend the current toolbox of optimal control~\cite{WisMil09, PRA11,control-roadmap2015}
by
dissipative controls has come within reach.

In this letter, first we prove that it suffices to include as a new control parameter 
a single bang-bang switchable Markovian noise amplitude on one qubit (no ancilla)
into an otherwise noiseless and coherently controllable network to increase the power
of the dynamic system so that {\em any target state can be reached from any initial state}.
We then study several state transfer problems using our numerical
optimal control platform {\sc dynamo}~\cite{PRA11} extended by
controlled Markovian noise.
Ultimately we propose an experimental implementation of this control method by a chain of GMons
with a tunable coupling to an open transmission line as are now available. We demonstrate numerically
the initialisation, erasure and preparation steps~\cite{VincCriteria} of quantum computing,
as well as noise-assisted generation of maximally entangled states.

\medskip
{\em Overview and Theory.}
We treat the quantum Markovian master equation~\footnote{\label{f:markovianity}
	We are well aware of different notions of Markovianity,
	see for instance~\cite{Plenio_NMarkov_Review2014, Breuer_NMarkov_Review2016}.
        Here, in line with the seminal 
	work of Wolf, Cirac {\em et al.} \cite{Wolf08a,Wolf08b},
	we adopt the following mathematical definition of Markovianity:
	A quantum map $F(t)$ is time-dependent (resp. time-independent) Markovian, if it is the solution of
	a time-dependent (resp. time-independent) Lindblad master equation
	$$ \dot F(t) = -(iH_u(t) + \Gamma_L(t)) F(t).$$
	The above ensures that (in the connected component) $F(t)$ is by construction infinitesimally 
	(resp. infinitely) divisible into \protect{{\sc cptp}}-maps
	in the terminology of Refs.~\cite{Wolf08a,Wolf08b,DHKS08}
	and hence Markovian. In other words, $F(t)$ can be exponentially constructed as a Lie
	semigroup \cite{DHKS08} and thus has no memory terms.
	This notion is precise and well defined in the sense of not invoking approximations at the level of definition.
	In a second step (extensively discussed in the Supplement), for a {\em physical realisation} 
	we check to which extent a Markov approximation in line with, e.g.,
	Refs.~\cite{BreuPetr02,ABLZ12} does hold on the
	{\em operational level of a concrete experimental setting}. ---
        In contrast note that even in the connected component there are Kraus maps, which are {\em not}
	solutions of a time-dependent (or time-independent) Lindblad master equation~\cite{Wolf08a,DHKS08},
        but of more general recent master equations~\cite{DiosiFerialdi_NMarkovMasterEqn2014,
	Ferialdi_NMarkovMasterEqn2016, Vacc_NMarkovMasterEqn2016,
	Ferialdi_NMarkovSpinBosonJayCumm2017}.
	These Kraus maps then are truely non-Markovian in the sense of going beyond time-dependent
	Markovian maps.
	} 
of an $n$-qubit system
as a bilinear control system~$\Sigma$:
\begin{equation}\label{eqn:master}
\dot\rho(t) = -(i\hat H_u +\Gamma)\rho(t)\quad\text{and}\quad\rho(0)=\rho_0
\end{equation}
with $H_u:= H_0 + \sum_j u_j(t) H_j$ comprising the free-evolution Hamiltonian $H_0$, the control
Hamiltonians $H_j$ switched by piecewise constant control amplitudes $u_j(t)\in\mathbb R$ and $\hat H_u$
as the corresponding commutator superoperator.
Take $\Gamma$ to be of Lindblad form
\begin{equation}\label{eqn:Lindblad}
\Gamma(\rho) := -\sum_\ell \gamma_\ell(t)\big(V_\ell\rho V_\ell^\dagger - 
	\tfrac{1}{2}(V^\dagger_\ell V_\ell \rho + \rho V^\dagger_\ell V_\ell)\big)\;,
\end{equation}
where now $\gamma_\ell(t)\in[0,\gamma_*]$ with $\gamma_*>0$ will be used as
additional piecewise constant control parameters.

Henceforth we consider systems with a single dominant Lindblad generator
(while small additional noise is treated in Appendix~E).
In the {\em non-unital case} it is the Lindblad generator for {\em amplitude damping},~$V_a$,
while in the {\em unital case} \footnote{
	A relaxation process $\dot{\rho} = -\Gamma(\rho)$ is {\em unital} if it preserves multiples of the
	identity, i.e.\  $\Gamma(\unity)= 0$, otherwise it is {\em non-unital}; here we first
	consider the amplitude-damping extreme case of non-unital noise, before generalising
	non-unital processes in~\cite[App.~B]{SuppMat}.
} 
it is the one for {\em bit flip},~$V_b$,  defined as
\begin{equation}\label{eqn:amp-damp+bit-flip}
V_a := \unity_2^{\otimes (n-1)}\otimes\ketbra{0}{1}\quad\text{and}\quad
V_b := \unity_2^{\otimes (n-1)}\otimes X/\sqrt{2}\;,
\end{equation}
where $X$, $Y$ and~$Z$ are the Pauli matrices.

As in~\cite{DHKS08}, we simply say a control system on $n$~qubits meets the condition for (weak)
Hamiltonian controllability if the Lie closure under commutation of its Hamiltonians comprises all unitary directions in the sense
\begin{equation}\label{eqn:closure-wh}
\hspace{-3mm}
\expt{iH_0, iH_j\,|\,j=1,\dots , m}_{\sf Lie} = \su(N)\;\text{with}\; N:=2^n .
\end{equation}
For the Lie-algebraic setting, see~\cite{JS72,SchiFuSol01,ZS11,ZS11add,Alt03,DHKS08}.
Now the {\em reachable set} $\reach_\Sigma(\rho_0)$ is defined as the set of all
states $\rho(\tauf)$ with $\tauf\geq 0$ that can be reached from $\rho_0$ following the dynamics of $\Sigma$.
If Eqn.~\eqref{eqn:closure-wh} holds,
without relaxation one can steer from any initial state $\rho_0$ to
any other state $\rho_{\rm target}$ with the same eigenvalues.
In other words, for $\gamma=0$ the control system $\Sigma$ acts transitively on the 
unitary orbit $\mathcal U(\rho_0):=\{U\rho_0 U^\dagger\,|\,U\in SU(N)\}$ 
of the respective initial state $\rho_0$. This holds for
any $\rho_0$ in the set of all density operators, termed $\pos_1$ henceforth.

Under coherent control and {\em constant noise} ($\gamma>0$ non-switchable) it is difficult to give precise
reachable sets for general $n$-qubit systems that satisfy Eqn.~\eqref{eqn:closure-wh}
only upon including the drift Hamiltonian~$H_0$~\cite{DHKS08,ODS11}.
Based on work by Uhlmann~\cite{Uhlm71, Uhlm72, Uhlm73},
majorisation criteria that are powerful if $H_0$ is not needed to meet  Eqn.~\eqref{eqn:closure-wh} 
\cite{Yuan10, Yuan11} now just give upper bounds to reachable sets by inclusions. 
Even worse, with increasing number of qubits $n$, these inclusions get increasingly inaccurate
and have to be replaced by Lie-semigroup methods as described in~\cite{ODS11}.

In the presence of {\em bang-bang switchable relaxation on a single
qubit} in an $n$-qubit system, here we show that the situation
improves significantly and one obtains 
two major results. Both are proven in the Supplement~\cite{SuppMat},
yet a {\em bird's-eye view} is added in the discussion below.

\begin{theorem}[non-unital]\label{thm:transitivity}
Let $\Sigma_a$ be an $n$-qubit bilinear control system as in Eqn.~\eqref{eqn:master}
satisfying Eqn.~\eqref{eqn:closure-wh} for $\gamma=0$.
Suppose the $n^{\rm th}$ qubit (say) undergoes 
amplitude-damping relaxation, the noise amplitude of which can be
switched in time between two values as $\gamma(t)\in\{0,\gamma_*\}$ with $\gamma_*>0$. 
If there are no further sources of decoherence, then 
the control system $\Sigma_a$
acts transitively on the set of all density operators $\pos_1$, i.e.
\begin{equation}
\overline{\reach_{\Sigma_a}^{\phantom{1}}(\rho_0)}=\pos_1 \quad\text{for all }\, \rho_0\in\pos_1\;,
\end{equation}
where the closure is understood as the limit $\gamma_* \tauf\to\infty$.
\end{theorem}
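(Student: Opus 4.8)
The plan is to reduce the claim to a statement purely about spectra and then to realise every spectral transformation using global unitaries interleaved with the damping channel. Since the system is fully unitarily controllable, for $\gamma=0$ it already acts transitively on every unitary orbit $\mathcal U(\rho_0)$ (as noted above Eqn.~\eqref{eqn:closure-wh}); hence $\overline{\reach_{\Sigma_a}(\rho_0)}$ is a union of full unitary orbits, and it suffices to show that every eigenvalue vector in the probability simplex can be reached from every initial one. The structural fact I would exploit is that in the limit $\gamma_*\tauf\to\infty$ the amplitude-damping channel on the $n^{\rm th}$ qubit acts as a perfect reset, $\rho\mapsto\trace_n(\rho)\otimes\ketbra{0}{0}$, whereas for finite $\gamma_*\tauf$ it interpolates between the identity and this reset; both features are used below.

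First I would prove a cooling step: from any $\rho_0$ one can reach a pure state. Alternate (i) a global unitary that diagonalises the current state in the computational basis and pairs its nonzero eigenvalues so that each pair occupies two basis states $\ket{j0}$ and $\ket{j1}$ differing only in qubit $n$, with (ii) a reset of qubit $n$. Each reset merges every such pair, $(\lambda,\lambda')\mapsto\lambda+\lambda'$, and empties the qubit-$n=\ket{1}$ sector, so the number of nonzero eigenvalues is at least halved (and the spectrum majorises the previous one). After at most $n$ resets the state is pure, and a final unitary carries it to $\ketbra{0\cdots0}{0\cdots0}$.

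Next I would prove a generation step: from $\ketbra{0\cdots0}{0\cdots0}$ every spectrum is reachable. The basic move is an eigenvalue split: a global unitary places a chosen nonzero eigenvalue $\mu$ on a basis state with qubit $n=\ket{1}$ while leaving its partner state $\ket{j0}$ empty, and partial damping with parameter $p=1-e^{-\gamma_*\tauf}$ then replaces $\mu$ by the pair $((1-p)\mu,\,p\mu)$. Iterating this along a binary splitting tree realises any probability vector $(\lambda_1,\dots,\lambda_N)$ exactly, and a concluding unitary rotates it into any desired eigenbasis, yielding an arbitrary target density operator.

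Composing the two steps, any $\rho_0$ is cooled (in the limit $\gamma_*\tauf\to\infty$) to the pure state and then steered to any target, so $\overline{\reach_{\Sigma_a}(\rho_0)}=\pos_1$. The step I expect to be the main obstacle is making the cooling rigorous inside the closure: the resets are exact only as $\gamma_*\tauf\to\infty$, so the $n$-fold protocol is a nested limit, and one must verify both the strict reduction of rank (equivalently, the monotone majorisation increase of the spectrum) and that composing a limiting reset with the subsequent finite-damping generation stays within $\overline{\reach_{\Sigma_a}(\rho_0)}$ by continuity of the dynamics. Conceptually, the crucial point is that non-unitality is exactly what lets the trajectory leave a unitary orbit in both directions---towards higher purity via reset and towards higher mixedness via splitting---which is impossible with unital noise alone.
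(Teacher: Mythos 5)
Your overall strategy (reduce to spectra, cool to a pure state, then regenerate an arbitrary spectrum by binary splits) is sound in outline, but your generation step has a genuine gap exactly where the paper needs its key technical device. The damping acts simultaneously, and with the same decay factor $\varepsilon=e^{-\gamma_* t}$, on all $2^{n-1}$ pairs $(\ket{j0},\ket{j1})$; your split move protects the other nonzero eigenvalues only implicitly, by parking them in the qubit-$n=\ket{0}$ sector. That sector has only $2^{n-1}$ slots, one of which (the partner $\ket{j0}$) must stay empty for the split, so the placement argument works only while the current rank is at most $2^{n-1}$. Any target of rank greater than $N/2$ (with $N=2^n$), in particular any full-rank target, forces some nonzero populations onto qubit-$n=\ket{1}$ states during the later splits, and these are then corrupted by the very damping you are using: they decay with the same ratio as the pair you meant to split. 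Hence the claim that ``iterating this along a binary splitting tree realises any probability vector exactly'' fails from rank $N/2+1$ onwards. Nor can you rescue it by performing the last level of splits simultaneously, since all pairs would then split with one and the same ratio $\varepsilon$, which is non-generic. (A secondary omission: during noise-on intervals the drift $H_0$ also acts; the paper handles a non-diagonal $H_0$ by the Trotter decoupling of Corollary~\ref{cor:Trotter-Decoup}, a step your proposal silently assumes.)

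The missing ingredient is the paper's switching-time (neutralization) lemma, Eqn.~\eqref{eqn:switch-time1}: if a pair $(\rho_{ii},\rho_{jj})$ must sit in the damped sector for a total noise interval $\tau$, permuting the two populations at the intermediate time $\tau_{ij}=\tfrac{1}{\gamma_*}\ln\big(\tfrac{(\rho_{ii}/\rho_{jj})\,e^{\gamma_*\tau}+1}{(\rho_{ii}/\rho_{jj})+1}\big)$ makes the net effect of the whole interval on that pair the identity, and for pure amplitude damping one checks $0\le\tau_{ij}\le\tau$ always holds. With $2^{n-1}-1$ such mid-interval permutations every pair except the chosen one can be protected regardless of where the populations sit, which is precisely how the paper confines the relaxative transfer to a single pair. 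Once you add this lemma, your split move becomes legitimate at any rank and your two-phase proof (cool, then split) goes through; the paper itself skips the cooling phase and steers directly between arbitrary diagonal spectra by sequences of such protected pairwise transfers. Your closing worry about the nested infinite-time limits in the cooling phase is real but minor, since the theorem concerns closures and finite-time resets approximate the exact merges uniformly; the rank-$>N/2$ obstruction above is the substantive gap.
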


\begin{theorem}[unital]\label{thm:majorisation}
Let $\Sigma_b$ be an $n$-qubit bilinear control system as in Eqn.~\eqref{eqn:master}
satisfying Eqn.~\eqref{eqn:closure-wh} ($\gamma=0$)
now with the $n^{\rm th}$ qubit  undergoing 
bit-flip relaxation with switchable noise amplitude 
$\gamma(t)\in\{0,\gamma_*\}$.
If there are no further sources of decoherence, then 
(in the limit $\gamma_* \tauf\to\infty$) the reachable set to $\Sigma_b$
explores all density operators majorised by the initial state $\rho_0$, i.e.
\begin{equation}
\hspace{-3mm}
\overline{\reach_{\Sigma_b}^{\phantom{1}}(\rho_0)}=\{\rho\in\pos_1 \,|\,\rho\prec\rho_0\}\;\text{for any }\, \rho_0\in\pos_1\,.
\end{equation}
\end{theorem}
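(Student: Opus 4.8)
The plan is to prove the two set inclusions separately, the outer one being essentially a restatement of Uhlmann's theorem and the inner one requiring an explicit control protocol.

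First I would establish $\overline{\reach_{\Sigma_b}(\rho_0)} \subseteq \{\rho \in \pos_1 \,|\, \rho \prec \rho_0\}$. Every elementary step available to $\Sigma_b$ is a unital, completely positive, trace-preserving map: coherent evolution acts by conjugation with some $U \in SU(N)$ (a unital unitary channel), while a segment of bit-flip relaxation of duration $t$ acts, as a short Lindblad computation shows, as the channel
\be
\mathcal E_q(\rho) = (1-q)\,\rho + q\, X_n \rho X_n, \qquad q = \tfrac{1}{2}\big(1 - e^{-\gamma_* t}\big) \in [0, \tfrac12],
\ee
with $X_n := \unity_2^{\otimes(n-1)}\otimes X$; this too is unital. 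Since unital maps compose to unital maps, and since by Uhlmann's theorem a unital channel never increases a state in the majorisation order, any $\rho$ reachable from $\rho_0$ satisfies $\rho \prec \rho_0$; the closure only adds the boundary value $q=\tfrac12$ attained as $\gamma_*\tauf \to \infty$.

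The reverse inclusion is the substantial part. Because coherent control is transitive on every unitary orbit (recalled above for $\gamma=0$), it suffices to reach every eigenvalue vector $\mu$ with $\mu \prec \spec(\rho_0)$, the final eigenbasis being fixed by one more unitary. My key step would be an \emph{isolation lemma}: a single application of $\mathcal E_q$, sandwiched by suitable unitaries, realises on the eigenvalues an elementary two-level $T$-transform that sends $\lambda_i,\lambda_j \mapsto (1-q)\lambda_i+q\lambda_j,\ q\lambda_i+(1-q)\lambda_j$ while fixing all other eigenvalues. The construction exploits that $X_n$ has a balanced $\pm1$ spectrum: I would diagonalise the current state by a coherent $U$ so that the two eigenvectors carrying $\lambda_i,\lambda_j$ span one computational pair $\{\ket{x0},\ket{x1}\}$ (which $X_n$ swaps), while the remaining $N-2$ eigenvectors are placed on the mutually orthogonal $X_n$-eigenvectors $\ket{y\pm}$, $y\ne x$. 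Since $\mathcal E_q$ fixes every $X_n$-eigenvector, the spectators are protected and only the chosen pair is averaged. This use of coherence is essential: acting on populations alone, the relaxation couples all $X_n$-pairs with one common rate and cannot isolate a single pair.

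It then remains to assemble the target. By the Hardy--Littlewood--P\'olya / Muirhead characterisation of majorisation, $\mu \prec \spec(\rho_0)$ implies that $\mu$ is reached from $\spec(\rho_0)$ by a finite sequence of elementary $T$-transforms, each with parameter in $[0,\tfrac12]$; I would realise them one after another by the isolation lemma, interspersing the appropriate coherent relabellings, and finish with the unitary that rotates $\diag(\mu)$ into the desired target. Transforms with parameter $q<\tfrac12$ need only finite relaxation time, whereas full equalisation $q=\tfrac12$ is approached as $\gamma_*\tauf\to\infty$, which is precisely why the statement is phrased for the closure. The main obstacle I anticipate is the isolation lemma itself --- verifying that a common-rate, single-qubit unital generator can nonetheless be steered to an arbitrarily targeted two-level $T$-transform through the choice of eigenbasis, and then controlling the finite-step approximation so that the union over all admissible $T$-sequences fills the majorisation polytope up to closure.
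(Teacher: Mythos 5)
Your overall strategy coincides with the paper's own proof: the converse inclusion follows from unitality of the flow (the paper invokes Ando's theorem on doubly stochastic maps where you invoke Uhlmann --- same content), and the forward inclusion proceeds by (i) reduction to spectra via unitary transitivity, (ii) an isolation mechanism that confines the bit-flip averaging to one eigenvalue pair by parking the $N-2$ spectator eigenvalues on eigenvectors of $X_n$, and (iii) the Hardy--Littlewood--P\'olya decomposition of majorisation into at most $N-1$ successive $T$-transforms with mixing parameter in $[0,\tfrac{1}{2}]$, the value $\tfrac{1}{2}$ arising only in the closure. Your ``isolation lemma'' is exactly the paper's conjugation by $U_{12}=\unity_2\oplus R^{\oplus 2^{n-1}-1}$ with $R=e^{-i\pi Y/4}$, which moves the spectator populations onto the $X$-invariant (protected) subspace of the bit-flip generator.

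There is, however, one genuine gap: you treat the channel $\mathcal{E}_q$ as a primitive that can be applied ``sandwiched by suitable unitaries,'' i.e.\ you implicitly assume that all Hamiltonian dynamics can be switched off while the noise runs. In $\Sigma_b$ the drift $H_0$ (and, in the physical derivation, the Lamb shift $\Hlamb$, which switches on and off \emph{together} with $\Gamma$) acts throughout every dissipative segment, so the generator there is $i\hat H_u+\gamma_*\Gamma$, not $\gamma_*\Gamma$ alone. This breaks the isolation step as stated: the protected configuration does not commute with $H_0$ in general --- e.g.\ the Ising term $\tfrac{1}{2}Z_{n-1}Z_n$ precesses the spectators' $\ket{\pm}$ components of the noisy qubit about the $z$-axis, and the bit-flip noise then damps the resulting $Y$-components --- so the spectators are \emph{not} fixed points of the actual evolution, and the claimed two-level $T$-transform is not an operation the control system can perform. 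The paper closes precisely this hole: for diagonal $H_0+\Hlamb$ it splits $H_0'=H_{0,1}\otimes\unity_2+H_{0,2}\otimes Z$, notes that the first term commutes with the protected state, and refocuses the second by $\pi$-pulses in $x$ on the noisy qubit --- crucially, pulses that leave the bit-flip generator invariant --- combined with a Trotter limit; for arbitrary $H_0+\Hlamb$, full unitary controllability gives a Trotterised decoupling (Supplement, App.~A, Corollary~1). Since that limit is again asymptotic, it is consistent with the closure in the statement. Your proof becomes complete once this decoupling step is added; without it, the protocol fails for any nonzero drift.
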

%
The proofs in Supplement~\cite[App.~A]{SuppMat} explicitly include possible Lamb shifts, and  App.~B
generalises the results to finite-temperature baths in order to go beyond
algorithmic cooling.

Thm.~1 is the first to show that for an any pair of states $(\rho_0,\rho_\text{target})$
connected via a {\em non}-Markovian Kraus map~$N$ 
(see footnote~\cite{Note1} for definition of Markovianity)
by $\rho_\text{target}=N(\rho_0)$, there always is a
time-dependent Markovian map $M$ made of coherent control with amp-damp noise switching
that 
takes the same initial state to the same target
by $\rho_\text{target}=M(\rho_0)$.
Yet, even close to the identity there are Kraus maps that cannot be obtained as
(necessarily Markovian) solutions of the Lindblad master equation~\cite{Wolf08a, DHKS08}.
For details see App.~G.

For implementation the main requirement is a fast switchable dominant noise source
on top of unitary controllability.
The preconditions for the Markovian setting of the Lindblad equations are well
approximated in experiments as soon as one has separate time scales for the system dynamics ($\tau_S$),
the coherent controls ($\tau_C$), the relaxation ($\tau_R$) and the bath correlation time ($\tau_B$):
The Born-Markov approximation holds if $\tau_R \gg \tau_B$, while
the secular approximation holds if $\tau_R \gg \tau_S$ and $\tau_C \gg \tau_S$,
as discussed extensively in App.~C.

Before suggesting an experimental implementation meeting these conditions for coherent control
extended by simplest noise switching in
a fast tunable-coupler qubit design called GMon as devised in the
Martinis group~\cite{Mart09, Mart13,Mart14}, we show basic features
in simple illustrative models.

\begin{figure}[!ht]
\hspace{10mm}{\sf (a)}\hspace{34mm}\sf{(b)} $\hfill$\\[-0mm]
\includegraphics[width=0.43\columnwidth]{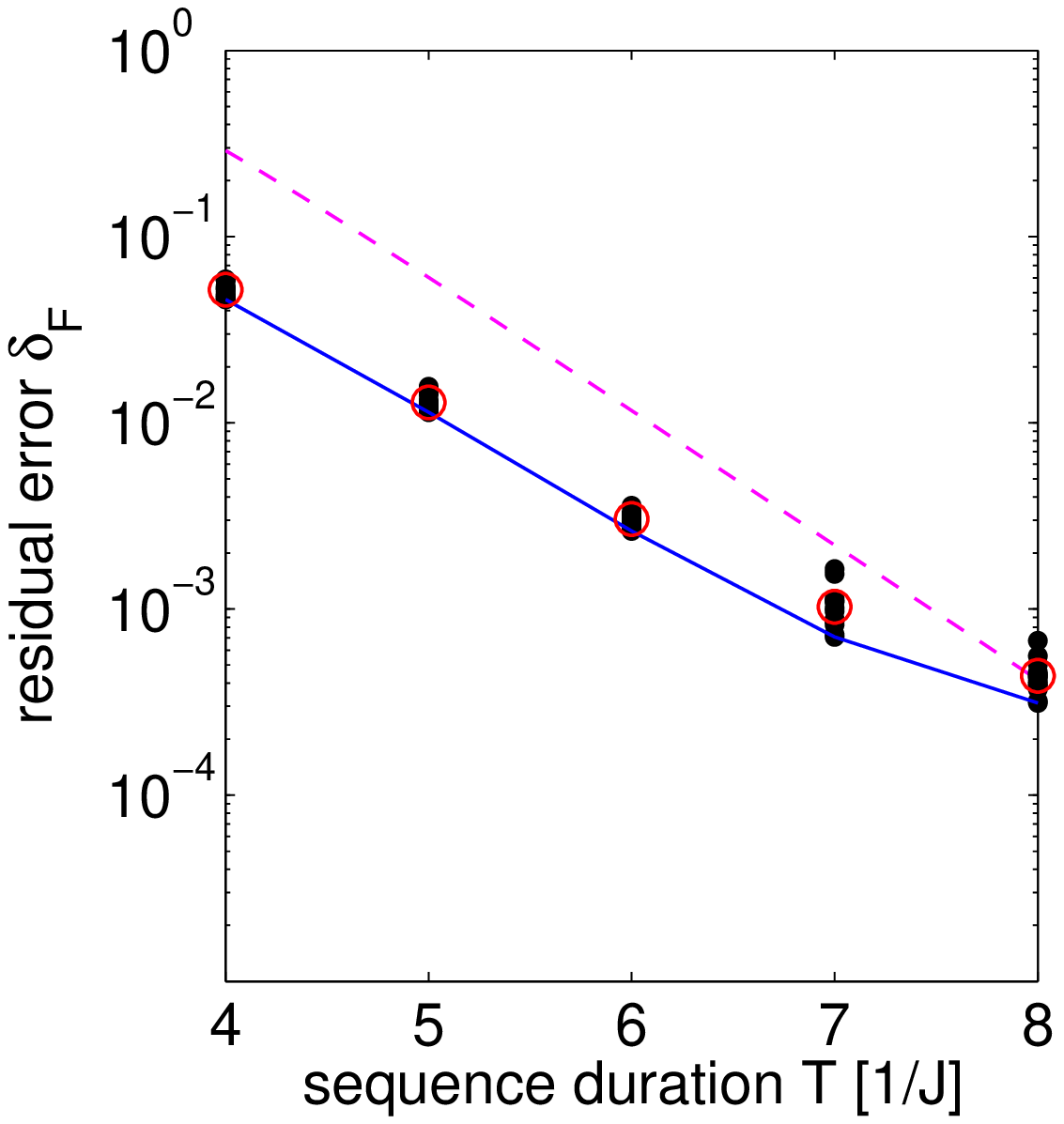}
\includegraphics[width=0.43\columnwidth]{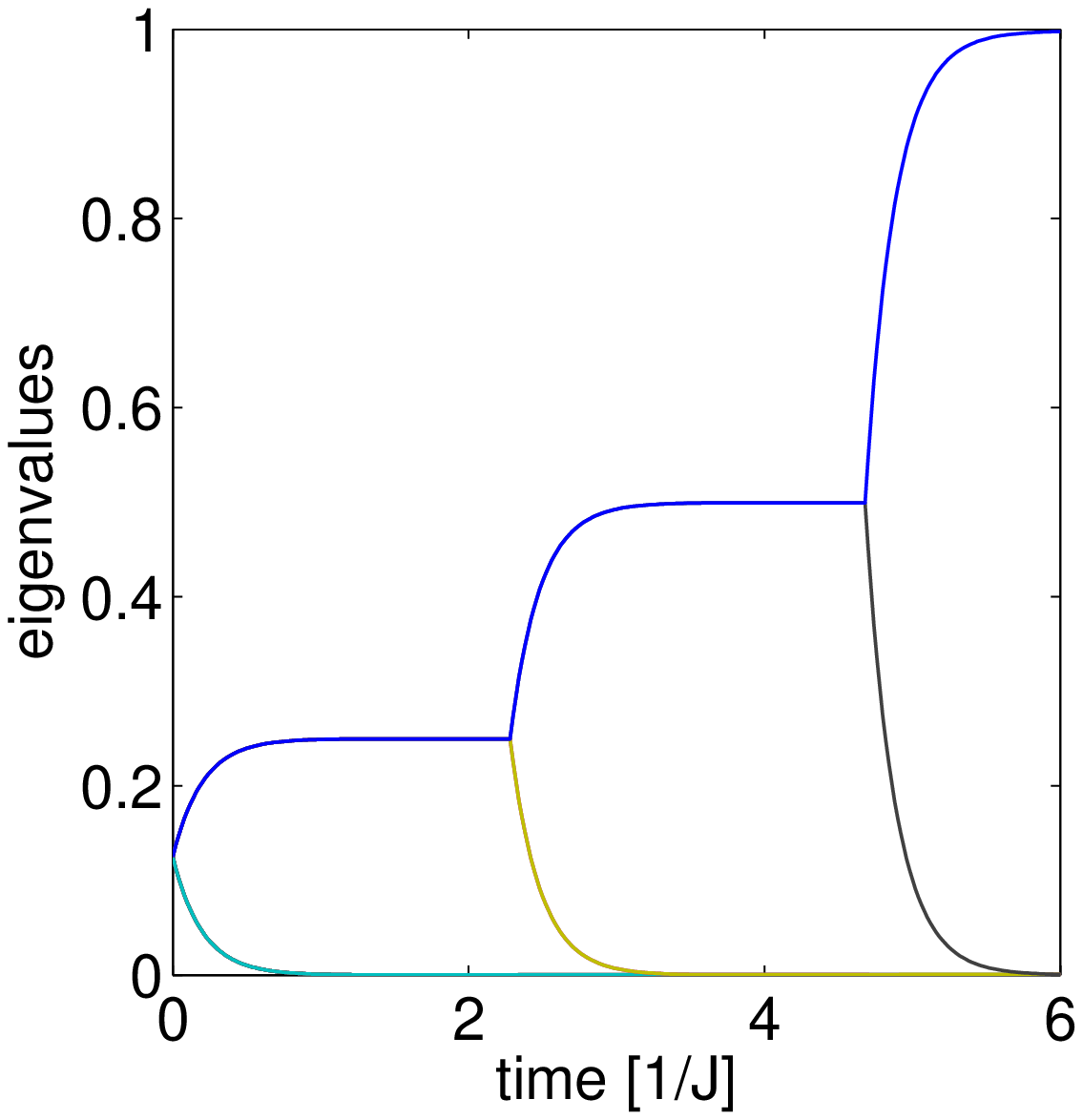}\\
\hspace{10mm}{\sf (c)}$\hfill$\\
\includegraphics[width=0.86\columnwidth]{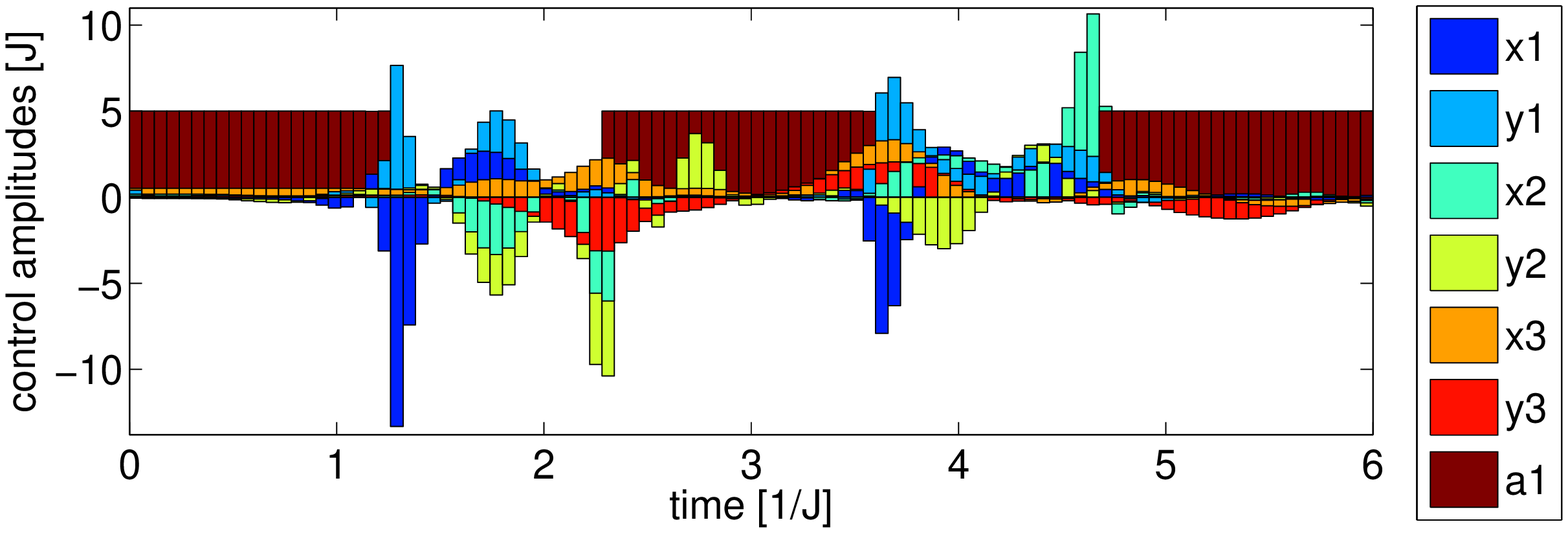}
\caption{\label{fig:th0}
Cooling the max.~mixed state~$\rho_\text{th}=\frac{1}{8}\unity$ to the ground state~$\ket{000}$
in a \mbox{3-qubit} Ising-{\sc zz} chain
with controlled amplitude-damping noise on qubit one
as in {\bf Example~1}.
(a)~Frobenius-norm error versus total sequence duration~$\tauf$.
The dashed line gives the upper bound from Eqn.~\eqref{eq:time-ex1},
and the dots (red circles for averages) individual optimization runs with 
random initial sequences.
(b)~Evolution of the eigenvalues under the best of the $\tauf=6/J$
solutions.
This sequence (c) shows three relaxative periods with maximal noise
amplitude~$\gamma_{a1}$ for transforming eigenvalues together with
unitary actions. 
Each purely unitary segment is of the approximate duration~$1/J$,
corresponding to the duration of a single \iSWAP.
}
\end{figure}
%
{\em Explorative Model Systems.}
To challenge our optimal control algorithm,
we first consider two examples of state transfer where the target states
can only be reached {\em asymptotically} ($\gamma_* \tauf\to\infty$), i.e.\ they are
in the closure of the 
reachable sets.
To illustrate Thms.~\ref{thm:transitivity} and~\ref{thm:majorisation},
next we
show noise-driven transfer (i)~between \emph{random pairs of states} under controlled
amplitude damping 
and (ii)~between random pairs of states satisfying $\rho_{\rm target}\prec\rho_0$
under controlled bit-flip noise in Examples 3 and~4.

In Examples 1--4, our system is an $n$-qubit chain with uniform
\mbox{Ising-{\sc zz}} nearest-neighbour couplings given by
$H_0 := \pi J \sum_k \tfrac{1}{2} Z\sys{k} Z\sys{k+1}$,
and piecewise constant $x$ and~$y$ controls (that need not be bounded) on each qubit
locally, so the control system satisfies Eqn.~\eqref{eqn:closure-wh}.
We add controllable noise (amplitude-damping or
bit flip) with amplitude $\gamma(t)\in[0,\gamma_*]$
and~$\gamma_* = 5 J$
acting on one terminal qubit.
The control system is detailed in~\cite{SuppMat}.

\begin{example}

Here, as for initialising a quantum computer~\cite{VincCriteria},
the task is to turn the high-$T$ initial state
$\rho_{\text{th}}:=\tfrac{1}{2^n}\unity$ into the pure target state
$\ket{00\ldots0}$
by unitary control and controlled amplitude damping.
For $n$ qubits, the task can be accomplished in an \mbox{$n$-step} protocol: 
let the noise act on each qubit~$q$ for the time~$\tau_q$ to 
populate the state $\ketbra{0}{0}$,
and permute the qubits between the steps.
For a linear chain this requires
$\sum_{q=1}^n (q-1)=\binom{n}{2}$ 
nearest-neighbour \SWAP s.
Since all the intermediate states are diagonal, the \SWAP s can be
replaced with \iSWAP s, each taking a time of~$\frac{1}{J}$
under the Ising-{\sc zz} coupling.
The residual Frobenius-norm error~$\delta_F$
is minimised when all the $\tau_q$ are equal, giving
$\delta_{F_a}^2(\expfactoramp)
= 1 -2 \left(1-\tfrac{\expfactoramp}{2}\right)^n +\left(1 -\expfactoramp +\tfrac{1}{2} \expfactoramp^2\right)^n$,
where $\expfactoramp:=e^{-\gamma_* \tau_q}$.
Linearizing
at $\expfactoramp = 0$
and adding time for the \iSWAP{}s, the total duration
of this simple protocol as a function of~$\delta_F$
amounts (to first order in $\expfactoramp$) to
\begin{equation}\label{eq:time-ex1}
\tauf_a \approx \binom{n}{2}\tfrac{1}{J} +\tfrac{n}{\gamma_*} 
	\ln \Big(  \tfrac{\sqrt{n(n+1)}}{2 \delta_{F}^{\phantom{|}} } \Big).
\end{equation}

Fig.~\ref{fig:th0} shows that optimal control can outperform
this simple scheme by parallelising 
unitary
transfer with the amplitude-damping driven \/`cooling\/' steps.
Moreover, initialisation can still be accomplished to a good
approximation  if unavoidable constant dephasing noise on all the
three qubits is present, as shown in App.~E,
while App.~B shows how combining coherent control concomitant with incoherent
control goes beyond algorithmic cooling on a general scale.
It is a generic example of shorter and more efficient control sequences than obtained by
conventional
separation of (algorithmic) cooling and processing.

\end{example}

\begin{example}

In turn, consider \/`erasing\/' the pure initial
state~$\ket{00\ldots0}$
to the high-$T$ state~$\rho_{\text{th}}$.
Under controlled amplitude damping
this can be accomplished exactly, each round splitting the populations in half with a total
time of
$
\tauf'_a = \binom{n}{2}\tfrac{1}{J} + \tfrac{n}{\gamma_*} \ln(2).
$
Yet with {\em bit-flip noise} this transfer
can only be obtained asymptotically.
One may use a similar $n$-step protocol as in the previous example,
this time approximately erasing each qubit to a state proportional
to~$\unity$.
Again, optimal control 
is much faster than
this simple scheme
(results and details in~App.~E).

\begin{figure}[!ht]
\hspace{2mm}{\sf (a)}\hspace{25mm}{\sf (b)}\hspace{25mm}{\sf(c)}$\hfill$\\
\raisebox{-.6mm}{\includegraphics[width=0.327\columnwidth]{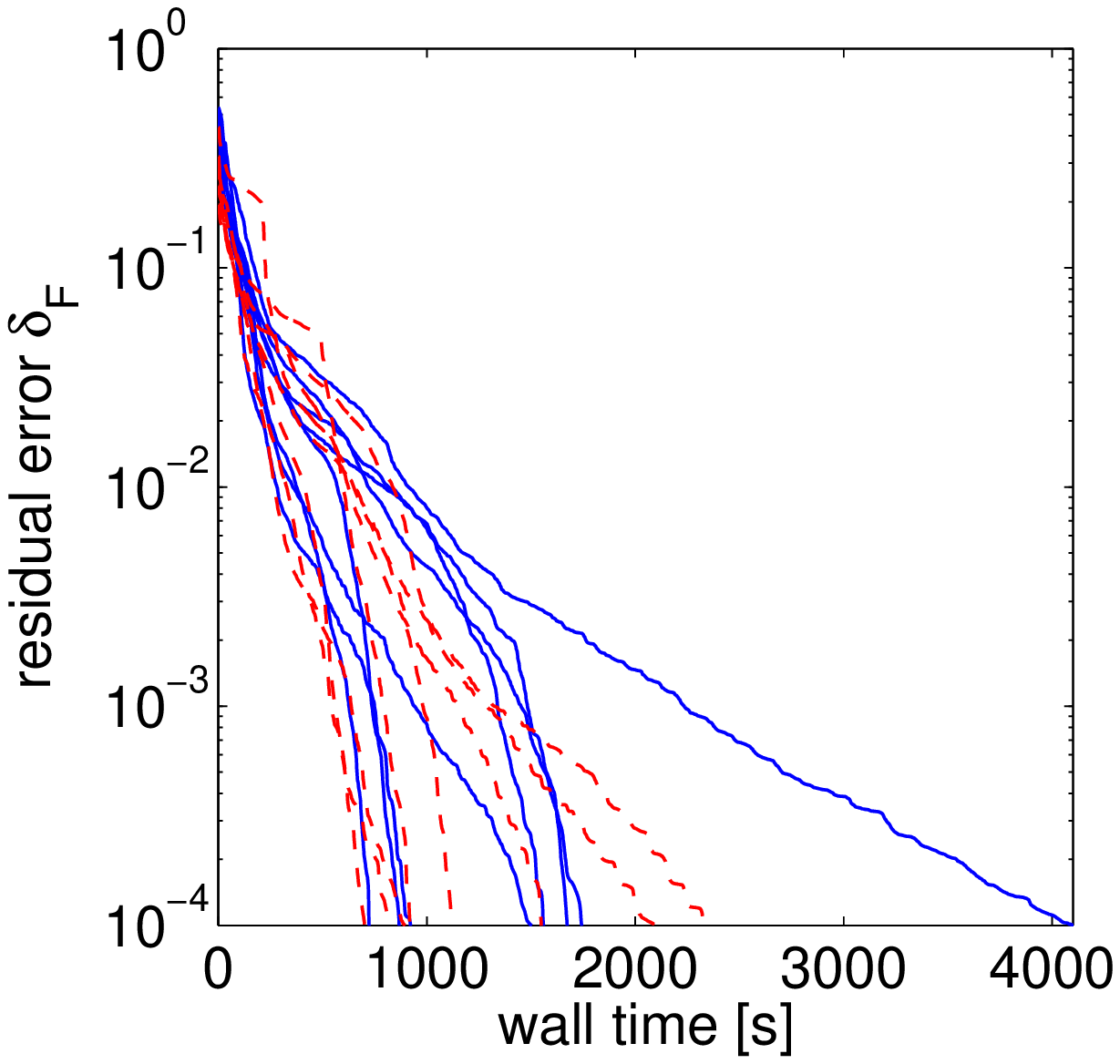}}
\includegraphics[width=0.32\columnwidth]{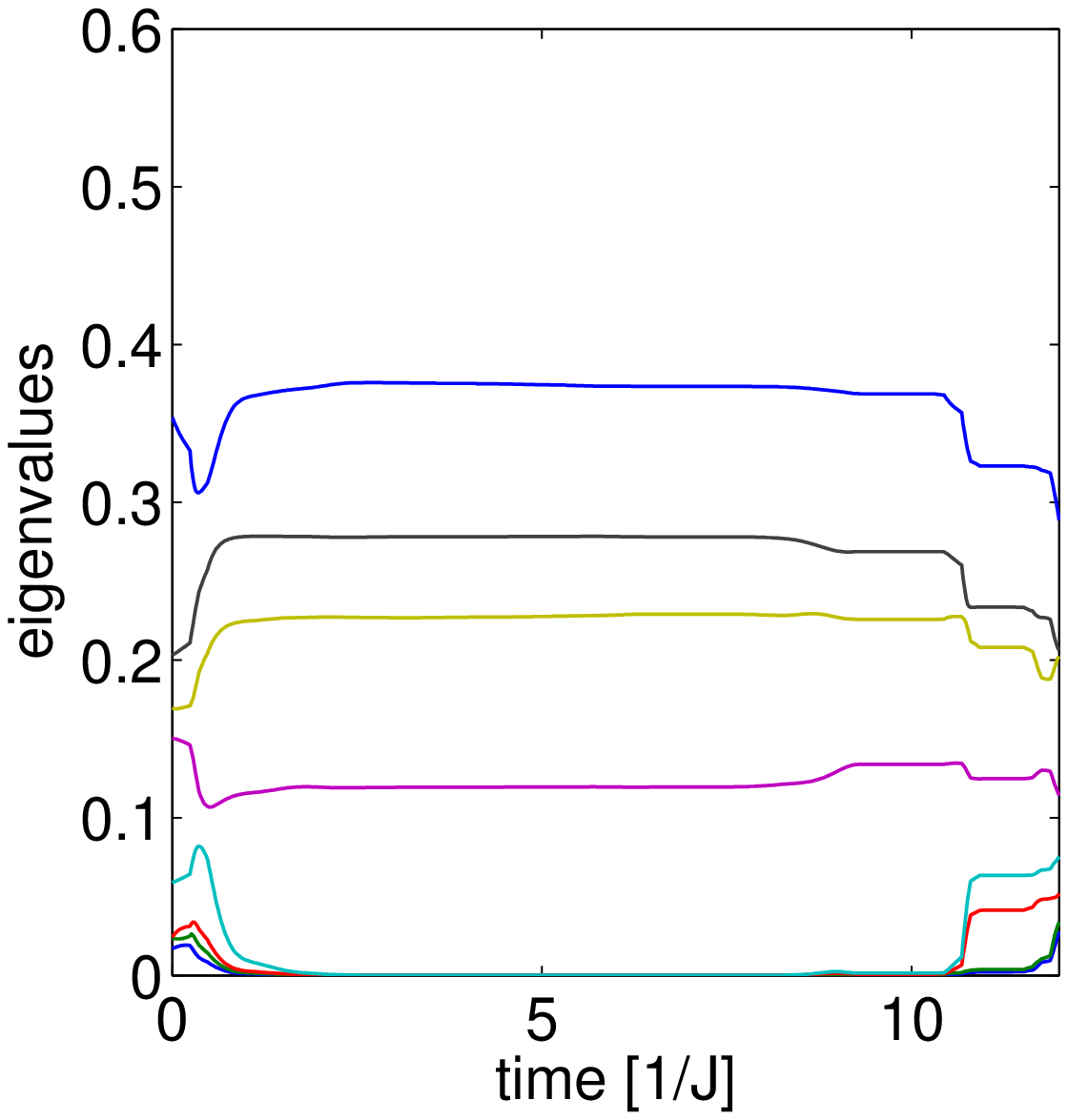}
\includegraphics[width=0.32\columnwidth]{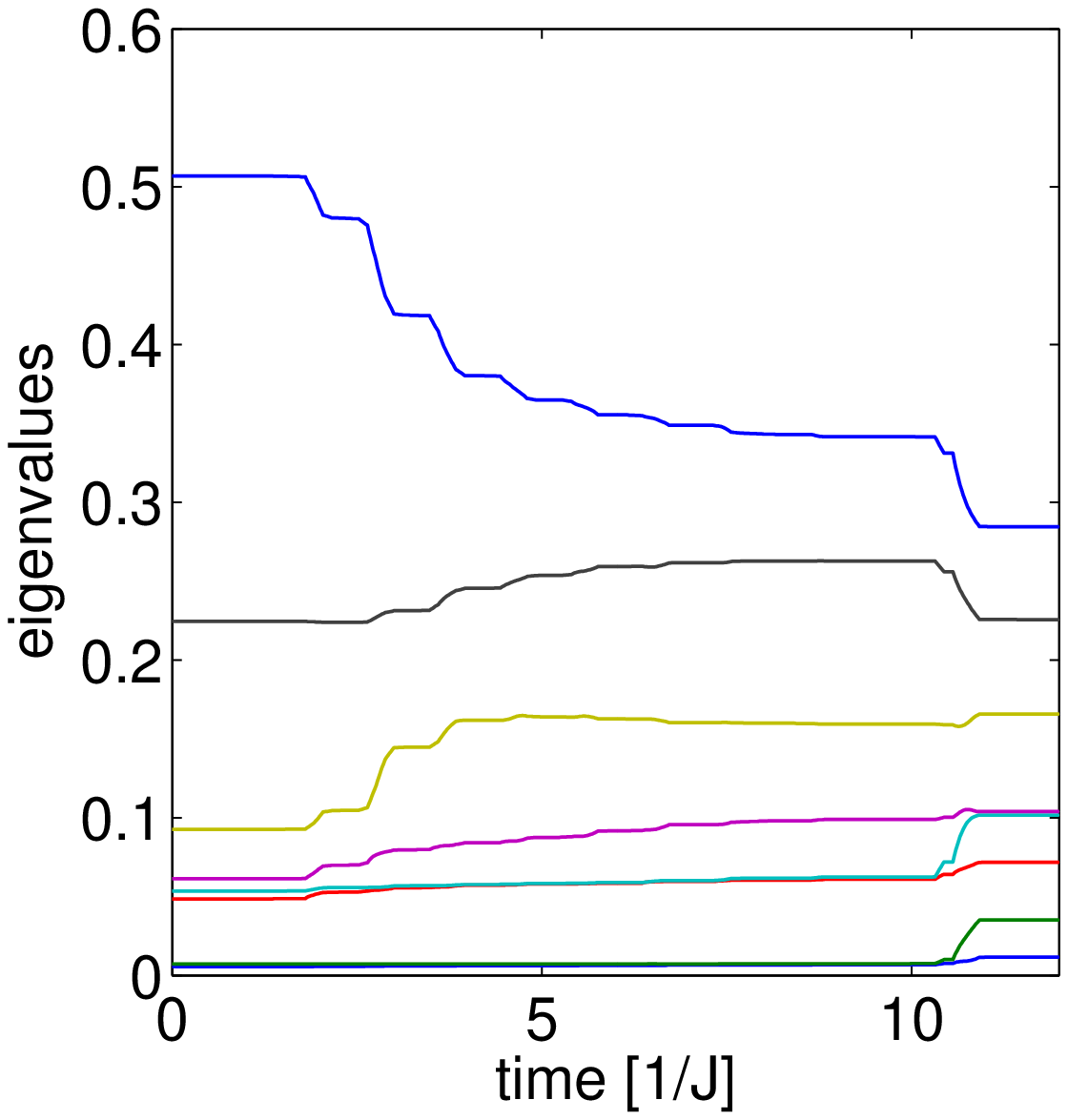}
\caption{\label{fig:rand}
(a) Error vs.~computation time for state transfer between
pairs ($\rho_0,\rho_{\rm target}$)
of random \mbox{3-qubit} states using controlled amplitude-damping
noise (solid) in addition to local unitary control. 
Same for random pairs  ($\rho_0,\rho_{\rm target}$)
with $\rho_{\rm target}\prec\rho_0$
under controlled bit-flip noise (dashed).
In both cases (a) shows the median of 9
optimisation runs for each of the eight random state pairs.
Representative examples of evolution of the eigenvalues for an amplitude-damping transfer (b) and
for a bit-flip transfer (c). In the former case, a typical feature is the
initial zeroing of the smaller half of the eigenvalues while the larger
half are re-distributed among themselves. Only at the very end the
smaller eigenvalues revive.
}
\end{figure}
\end{example}

\begin{example}
We illustrate transitivity under
controlled amplitude damping on one qubit plus general unitary control
by {\em transfers between pairs of random} \mbox{3-qubit} density operators.
Fig.~\ref{fig:rand}(a) shows results well within~$\delta_F = 10^{-4}$.
\end{example}

\begin{example}
Similarly, with controlled bit-flip noise on one qubit plus
general unitary control,
Fig.~\ref{fig:rand} 
illustrates transfer between pairs of random 3-qubit states 
solely constrained by the unitality condition  $\rho_{\rm target}\prec\rho_0$.
\end{example}

In the final formal {\bf Example~5}, App.~E shows
how
to drive a system of  four trapped ion qubits 
from the high-$T$ initial state
$\rho_{\text{th}}:=\tfrac{1}{2^n}\unity$ to the pure entangled target state
$\ket{\text{\sc ghz}_4} = \tfrac{1}{\sqrt{2}}(\ket{0000}+\ket{1111})$.
In contrast to~\cite{BZB11}, where an ancilla qubit was
added  (following~\cite{VioLloyd01}) for a {\em measurement-based
circuit on the $4+1$ system}, one can do {\em without ancilla qubit} by
controlled amplitude-damping just on the terminal qubit together with coherent control.

\medskip

{\em Proposed Experimental Implementation by GMons.}
%

Superconducting charge and flux qubits have passed various iterations of designs
leading to transmons~\cite{Koch_2007, Schreier_2008}, i.e.\
weakly anharmonic oscillators whose energy spectrum is insensitive
to slow charge noise.
For circuit {\sc qed},
the coupling element is a spatially distributed resonator and qubits
are frequency-tuned relative to it. Tunable couplers were implemented
for flux qubits~\cite{Mooji_1999,Orlando_1999,Makhlin_1999,Makhlin_2001,Plourde_2004,Hime_2006}.

The {\em fast} tunable-coupler-qubit design devised in the Martinis group is 
called GMon~\cite{Mart09, Mart13,Mart14}.
It allows the implementation of tunable couplings
with similar parameters between qubits and between transmission lines \cite{Mart14}
(one of them open).
This can straightforwardly be extended to the case of coupling a qubit to a line.
Recently, the GMon has solved a lot of technological challenges, 
rendering it an effective tunable-coupling strategy between qubits and resonators 
(see also Refs.~\cite{Hoffman_2011, Srinivasan_2011}).

In App.~C we give a detailed derivation how
this fast tunable coupling to an open transmission line can be used to experimentally
implement controllable Markovian amplitude damping noise (with an effective Boltzmann factor of~$10^{-3}$).
We discuss the weak-coupling and singular-coupling limits and describe
scenarios ensuring the Lamb-shift Hamiltonian term induced by switching on the
noise
does not compromise Thms.~1 and 2.

Fig.~\ref{fig:GMon-Siewert} shows how switchable noise, in parallel with coherent controls,
produces a {\sc ppt}-entangled state~\cite{Siewert2016}
on 
two coupled GMon qutrits.
In App.~C
we add further numerical results to show how initialisation, 
erasure, and preparing a {\sc ghz}-type state can be implemented likewise.

\begin{figure}[!ht]
\hspace{7mm}{\sf (a)}$\hfill$\\
\includegraphics[width=.85\columnwidth]{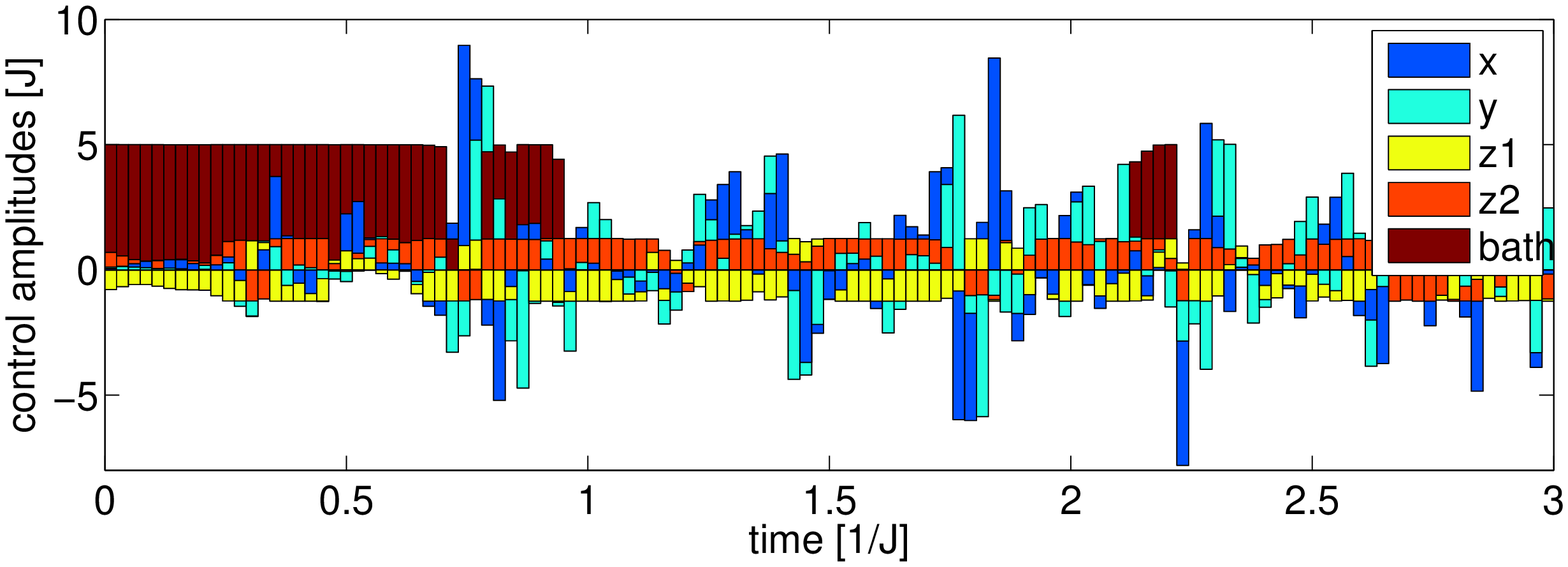}\\
\hspace{7mm}{\sf (b)}$\hfill$\\
\includegraphics[width=.85\columnwidth]{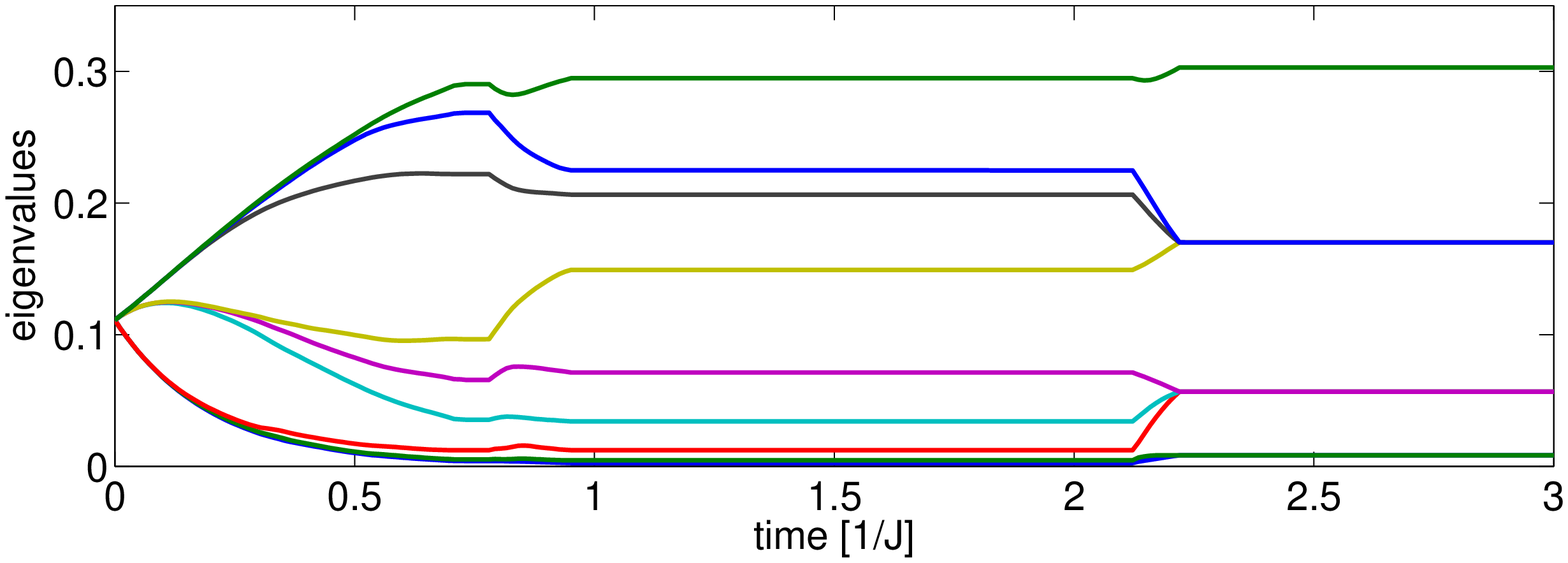}\\
\hspace{7mm}{\sf (c)}$\hfill$\\[-2mm]
\includegraphics[width=.6\columnwidth]{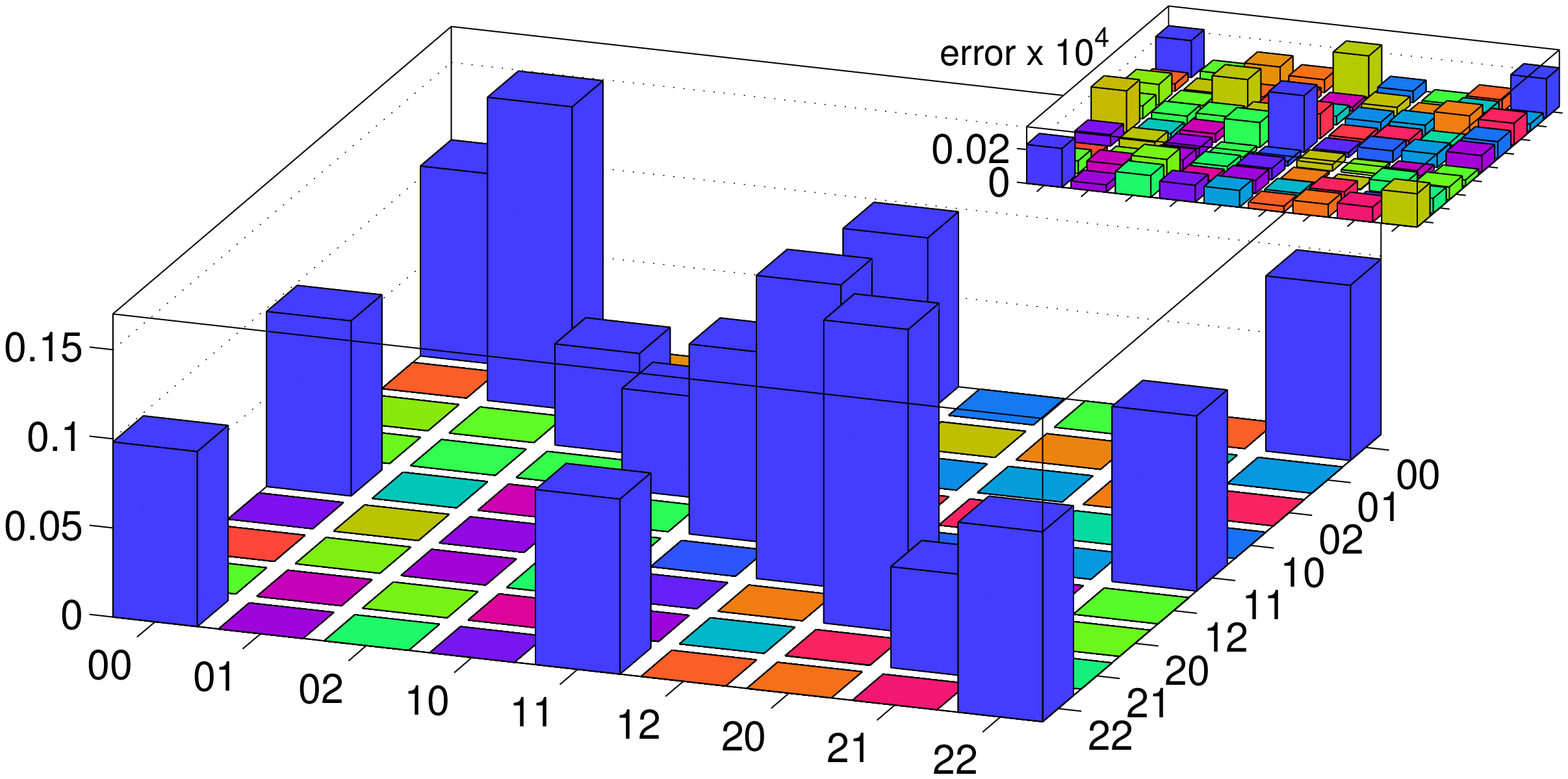}
\caption{\label{fig:GMon-Siewert}
Transfer from the maximally
mixed state~$\rho_\text{th} = \tfrac{1}{9}\unity$ to a {\sc ppt}-entangled 2-qutrit state~\cite{Siewert2016,SES16}
in a device of two coupled GMons~\cite{Mart09, Mart13,Mart14} with a
geometrized $J$~coupling of $160$~MHz.
Switchable amplitude-damping noise is implemented by a
tunable coupling to an open transmission line at~$35$~mK.
(a) Optimized sequence of coherent controls and bath couplings.
Coherent controls amount to joint $x$ and $y$ pulses on both GMon qutrits,
while $z$ pulses can be performed individually.
The sequence shows noise controls
in parallel to the unitary actions.
The residual error $\delta_F<10^{-4}$.
(b) Eigenvalue evolution under the sequence.
(c) Tomography of the final state. The inset shows the
error multiplied by a factor of~$10^4$.
}
\end{figure}

{\em Discussion.}
In bird's-eye view, our scheme may be described as follows:
By unitary controllability, we may diagonalise the initial and the target states.
So transferring a diagonal initial state into a diagonal target state 
can be considered as the {\em normal form of the state-transfer problem}. 
This form can be treated analytically, because
it is easy to separate dissipation-driven {\em changes of eigenvalues} from unitary coherent
actions of {\em permuting eigenvalues} and decoupling drift Hamiltonians,
while numerical controls may profit from doing both in parallel.



One may contrast our method with the closed-loop control method
in~\cite{VioLloyd01} originally designed for quantum-map synthesis
using projective measurement of a coupled resettable ancilla qubit
plus full unitary control to enact
arbitrary quantum operations (including state transfers),
with Markovian evolution as the infinitesimal limit.
Applied to state transfer, the present method instead relies on a switchable local Markovian noise source
and requires {\em neither measurement nor an ancilla}~\footnote{
	The Supplement~\cite[App.~G]{SuppMat} explains how
	for {\em state transfer}, Markovian open-loop controllability already implies full
	state controllability (including transfers by non-Markovian processes),
	while for the lift to {\em Kraus-map controllability} it remains an open question whether
	closed-loop feedback control is not only sufficient (as established in \cite{VioLloyd01}), but
	also necessary in the sense that it could not be replaced by open-loop unitary control plus
	control over some local noise incorporated in a {\em non-Markovian master equation}
	of the recent type of~Refs.~\cite{DiosiFerialdi_NMarkovMasterEqn2014,
	Ferialdi_NMarkovMasterEqn2016, Vacc_NMarkovMasterEqn2016,
	Ferialdi_NMarkovSpinBosonJayCumm2017}.
}.\\[-2mm]

{\em Outlook.}
We have proven that by adding as a new control parameter 
bang-bang switchable Markovian noise on just one system qubit,
an otherwise coherently controllable $n$-qubit network can explore unprecedented
reachable sets: in the case of amplitude-damping noise
(or any noise process in its unitary equivalence class)
one can convert
{\em any} initial state $\rho_0$ into {\em any} target state $\rho_{\rm target}$,
while under switchable bit-flip noise
(or any noise process unitarily equivalent)
one can transfer any~$\rho_0$ into
any target~$\rho_{\rm target}\prec\rho_0$ {\em majorised by the initial state}.
These results have been further generalised and compared to equilibrating the system with
a finite-temperature bath. 

To our knowledge, this is the first time these features are systematically explored
as {\em open-loop} control problems and solved 
in a minimal setting
by coherent  local controls 
and bang-bang modulation of a single {\em local}  Markovian noise source.
For {\em state transfer}, our open-loop Markovian protocol ensures full state controllability,
and is as powerful as the closed-loop
measurement-based feedback scheme in \cite{VioLloyd01},
so it may simplify many experimental implementations.\\

{\em Conclusions.}
We have extended quantum optimal control platforms like {\sc dynamo}~\cite{PRA11} by controls
over Markovian noise amplitudes in otherwise coherently controllable systems.
We exemplified
initialisation 
to the pure zero-state, state erasure, and the interconversion of
random pairs of mixed states.
For finite temperatures, we showed that combining coherent and incoherent controls
supersedes algorithmic cooling~\cite{SMW05} and combines cooling with simultaneous unitary processing.
In a detailed worked example, we propose using explicit recent GMon settings~\cite{Mart13,Mart14}
for experimental implementation of switchable noise by fast tunable coupling to an
open transmission line together with coherent controls.

We thus anticipate that combining coherent with simplest incoherent controls
paves the way to many other unprecedented applications of
state transfer and quantum simulation exploring the limits of Markovian dynamics.

\begin{acknowledgments}
We wish to thank Daniel Lidar, Lorenza Viola, 
Jens Siewert, 
and Alexander Pechen for useful comments
mainly on the relation to their works.
This research was supported in part by the {\sc eu} projects {\sc siqs} and {\sc quaint},
exchange with {\sc coquit},
by {\em Deutsche Forschungsgemeinschaft} in {\sc sfb}~631 and
{\sc for}~1482,
and by the excellence network of Bavaria ({\sc enb}) through {\sc exqm}.
\end{acknowledgments}

\bibliography{control21ville}
%

\clearpage
\onecolumngrid
\tableofcontents
\clearpage

\appendix


\noindent

\section{Proofs of the Main Theorems and Generalising Remarks}\label{sec:proofs}
\setcounter{theorem}{0}
\setcounter{figure}{3}
For clarity of arguments, first we prove Theorems~1 and 2 of the main text under the
(unnecessary) simplifying assumption of diagonal drift and Lamb-shift Hamiltonians $H_0+H_{LS}$.
A simple Trotter argument proven in Corollary~\ref{cor:Trotter-Decoup} below then recovers
the stronger version for arbitrary forms of~$H_0$ and $H_{LS}$ given in the main text.

\begin{theorem}[non-unital extreme case]
\label{Athm:transitivity}
Let $\Sigma_a$ be an $n$-qubit bilinear control system as in Eqn.~(1) of the main text 
satisfying Eqn.~(4) 
for $\gamma=0$.
Suppose the $n^{\rm th}$ qubit (say) undergoes
amplitude-damping relaxation by $V_a:=\left(\begin{smallmatrix} 0 & 1\\ 0 & 0\end{smallmatrix}\right)$, 
the noise amplitude of which can be
switched in time between two values as $\gamma(t)\in\{0,\gamma\}$ with $\gamma>0$. 
If the free evolution Hamiltonian~$H_0$, e.g.,\ of Ising-$ZZ$ type
(and the Lamb-shift term $\Hlamb$) are diagonal,
and if there are no further sources of decoherence, then 
$\Sigma_a$
acts transitively on the set of all density operators $\pos_1$ in the sense
\begin{equation}
\overline{\reach_{\Sigma_a}^{\phantom{1}}(\rho_0)}=\pos_1 \quad\text{for all }\, \rho_0\in\pos_1\;,
\end{equation}
where the closure is understood as the limit $\gamma \tauf\to\infty$
and $\tauf$ is the duration of the control sequence.

\begin{proof}
We keep the proofs largely constructive. By unitary controllability $\rho_0$
may be made diagonal, so the vector of diagonal elements is $r_0~:=~\spec(\rho_0)$.
Since a diagonal $\rho$ commutes with a diagonal $H_{0}$, the
state remains stationary under the free evolution.
The evolution of the vector of diagonal elements under the noise follows
\begin{equation}\label{eqn:thm1}
r(t) =
R_a(t)\, r_0 :=
\left[\unity_2^{\otimes(n-1)}\otimes\begin{pmatrix}
		1 & 1-\expfactoramp\\
		0 & \expfactoramp
\end{pmatrix}\right]  r_0,
\end{equation}
where $\expfactoramp := e^{-\gamma t}$ and $R_a(t)$ is by
construction a stochastic matrix.
With the noise switched off, full unitary control includes arbitrary permutations
of the diagonal elements. Any of the pairwise relaxative transfers between
diagonal elements $\rho_{ii}$ and $\rho_{jj}$  (with $i\neq j$)
lasting a total time of $\tau$ can be neutralised by permuting $\rho_{ii}$ and $\rho_{jj}$ 
after a time
\begin{equation}\label{eqn:switch-time1}
\tau_{ij}:=
\tfrac{1}{\gamma} \ln \left(\frac{(\rho_{ii}/\rho_{jj})\,e^{\gamma \tau}+1}{(\rho_{ii}/\rho_{jj})+1}\right)
\end{equation}
and letting the system evolve under noise again for the remaining time $\tau-\tau_{ij}$.
Thus with $2^{n-1}-1$ such switches all but the one desired transfer can
be neutralised.
As $\rho(t)$ remains diagonal under all permutations, 
relaxative and coupling processes, one can obtain any state of the form
\begin{equation}
\rho(t)=
\diag(\ldots, \: {[\rho_{ii} + \rho_{jj}\cdot (1-e^{-\gamma t})]}_{ii},
 \: \ldots, {[\rho_{jj}\cdot e^{-\gamma t}]}_{jj}, \: \ldots).
\end{equation}
Sequences of such transfers between single pairs of eigenvalues 
$\rho_{ii}$ and $\rho_{jj}$  and their permutations then generate (for
$\gamma \tauf\to \infty$) the entire set of all 
diagonal density operators $\Delta\subset\pos_1$. 
By unitary controllability one gets all the unitary orbits $\mathcal U(\Delta)=\pos_1$.
Hence the result.
\end{proof}
\end{theorem}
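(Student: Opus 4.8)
The plan is to exploit full unitary controllability to reduce the problem to the classical dynamics of eigenvalues, and then to show that switchable amplitude damping drives the eigenvalue vector transitively over the probability simplex. First I would use the Lie-closure controllability hypothesis to diagonalise $\rho_0$, so that the only relevant datum becomes the probability vector $r_0:=\spec(\rho_0)$. Because a diagonal state commutes with a diagonal $H_0$ (and $\Hlamb$), free evolution leaves it stationary, and the only nontrivial diagonal dynamics is the noise, acting as the stochastic matrix $R_a(t)=\unity_2^{\otimes(n-1)}\otimes\left(\begin{smallmatrix}1 & 1-\expfactoramp\\ 0 & \expfactoramp\end{smallmatrix}\right)$ with $\expfactoramp=e^{-\gamma t}$. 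Since this map is \emph{non-unital}, it moves weight irreversibly from the $\ket{1}$-slot of the $n^{\rm th}$ qubit into the $\ket{0}$-slot, i.e.\ it can lower entropy --- exactly what transitivity (as opposed to mere majorisation) demands. The goal then reduces to steering $r_0$ to an arbitrary target probability vector $r_{\rm target}$ in the simplex $\Delta$.

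The central building block I would establish is an \emph{isolated, direction-selectable transfer between any chosen pair of diagonal entries}. Two ingredients are needed. First, full unitary control supplies all of $S_N$ as permutations of the diagonal, so any two indices $i\neq j$ can be moved into the two slots that the generator connects (those differing only in the state of qubit~$n$); hence any pair is addressable, and by choosing which index is placed in the $\ket{1}$-slot I can drain either way. Second, because $R_a(t)$ acts simultaneously on all $2^{n-1}$ conjugate pairs, I must suppress the unwanted transfers: a transfer of total duration $\tau$ in a pair $(\rho_{ii},\rho_{jj})$ is cancelled by swapping the two entries at a suitable intermediate time $\tau_{ij}$ and then letting the noise run for the remaining time, which returns that pair to its original values. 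With $2^{n-1}-1$ such timed swaps, all pairs but one are neutralised, realising the single pairwise transfer $(\rho_{ii},\rho_{jj})\mapsto(\rho_{ii}+(1-\expfactoramp)\rho_{jj},\,\expfactoramp\rho_{jj})$ for any fraction $\expfactoramp\in(0,1]$ fixed by the dwell time.

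Granting this primitive, transitivity over $\Delta$ follows constructively in two stages. First I would \emph{cool}: repeatedly transfer the weight of the excited slots onto a single basis state, driving $r$ to a pure (delta) vector in the limit $\gamma\tauf\to\infty$ --- the one place the closure is genuinely invoked. Second, from that delta vector I would \emph{redistribute}: placing target weights one site at a time, the $k$-th transfer uses the fraction $1-\expfactoramp=p_k/\sum_{m\le k}p_m\in(0,1]$, which is admissible because $p_k\le\sum_{m\le k}p_m$, so that after $N-1$ finite-time transfers the exact $r_{\rm target}$ is assembled. Thus the closure of the reachable set of diagonal states is all of $\Delta$, and applying the unitary orbit $\mathcal U(\Delta)=\pos_1$ gives the claim. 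The hard part is the isolation step of the second paragraph: cleanly extracting a single pairwise transfer from the generator's simultaneous action on all conjugate pairs, which rests on verifying the timed-swap identity for $\tau_{ij}$. Once that identity is in hand, the limit $\gamma\tauf\to\infty$ is confined to the purification stage and every redistribution step is finite-time.
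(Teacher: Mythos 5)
Your proposal is correct and follows essentially the same route as the paper's proof: diagonalise by unitary controllability, reduce to the stochastic action of $R_a(t)$ on the eigenvalue vector, isolate a single pairwise transfer by timed swaps neutralising the other $2^{n-1}-1$ conjugate pairs, and sweep with unitary orbits at the end. The paper supplies the explicit switching time $\tau_{ij}=\tfrac{1}{\gamma}\ln\bigl(\tfrac{(\rho_{ii}/\rho_{jj})\,e^{\gamma\tau}+1}{(\rho_{ii}/\rho_{jj})+1}\bigr)$ that you left unverified (a short computation confirms it exists and lies in $[0,\tau]$ for any pair), while your explicit cool-then-redistribute scheme---with the limit $\gamma\tauf\to\infty$ confined to the purification stage and all redistribution steps in finite time---spells out the generation of all diagonal states that the paper's proof leaves as a one-line assertion.
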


\begin{theorem}[unital case]
\label{Athm:majorisation}
Let $\Sigma_b$ be an $n$-qubit bilinear control system as in Eqn.~(1) of the main text 
satisfying Eqn.~(4) 
now with the $n^{\rm th}$ qubit (say) undergoing bit-flip relaxation with
$V_b:=\tfrac{1}{\sqrt{2}}\left(\begin{smallmatrix} 0 & 1\\ 1 & 0\end{smallmatrix}\right)$
with switchable noise amplitude 
$\gamma(t)\in\{0,\gamma\}$.
If the free evolution Hamiltonian~$H_0$, e.g.,\ of Ising-$ZZ$ type
(and the Lamb-shift term $\Hlamb$) are diagonal,
and if there are no further sources of decoherence, then 
in the limit $\gamma \tauf\to\infty$ (with $\tau$ as duration of the
control sequence) the reachable set to $\Sigma_b$
explores all density operators majorised by the initial state $\rho_0$ in the sense
\begin{equation}
\overline{\reach_{\Sigma_b}^{\phantom{1}}(\rho_0)}=\{\rho\in\pos_1 \,|\,\rho\prec\rho_0\}\;\text{for any }\, \rho_0\in\pos_1\,.
\end{equation}

\begin{proof}
Again start by unitarily diagonalizing the initial state, with
$r_0 := \spec(\rho_0)$.
The evolution under the noise remains diagonal following
\begin{equation}\label{eqn:thm2}
r(t) =
R_b(t) \, r_0 :=
\left[\unity_2^{\otimes(n-1)}\otimes\tfrac{1}{2}\begin{pmatrix}
		(1 + \expfactorbitflip)& (1-\expfactorbitflip)  \\
		(1-\expfactorbitflip) &  (1 + \expfactorbitflip)
\end{pmatrix}\right] r_0,
\end{equation}
where $\expfactorbitflip:= e^{-\gamma t}$ and $R_b(t)$ is doubly stochastic.
In order to limit the relaxative averaging to the first two eigenvalues,
first conjugate the diagonal~$\rho$ with the unitary
\begin{equation}\label{eqn:protect}
U_{12}:= \unity_2 \oplus
R^{\oplus 2^{n-1}-1},
\end{equation}
where $R = e^{-\frac{\pi}{2} Y/2}$ is a $\pi/2$ rotation around the $y$~axis,
to obtain $\rho':=U_{12}\rho U^\dagger_{12}$.
Then the relaxation acts as a {$T$-transform}~\footnote{
	A $T$-transform is a convex combination  
	$\lambda\unity+(1-\lambda) Q$, where $Q$~is a pair transposition
	matrix and $\lambda\in [0,1]$.}
on the first two eigenvalues of $\rho'$, while leaving the others invariant.

Yet the protected subspaces have to be decoupled from the
free evolution Hamiltonian $H_0$ and the Lamb-shift term $\Hlamb$ (both assumed diagonal for the moment).
Any diagonal $H_0' := H_0+\Hlamb$
decomposes as
$H_0' = H_{0,1} \otimes \unity_2 +H_{0,2} \otimes Z$,
where $H_{0,1}$ and $H_{0,2}$ are again diagonal.
The first term commutes with~$\rho'$
and can thus be neglected.
The second
can be sign-inverted
by $\pi$-pulses in $x$-direction on the noisy qubit,
which also leave the bit-flip noise generator invariant:
\begin{equation}
(\unity \otimes e^{i\pi X/2}) e^{-t(\Gamma+i\hat H_{0,2}\otimes Z)} (\unity \otimes e^{-i\pi X/2})
= e^{-t(\Gamma-i\hat H_{0,2}\otimes Z)}.
\end{equation}
Thus $H_0$ and the Lamb-shift term $\Hlamb$
may be fully decoupled in the Trotter limit (NB: as shown in Corollary~\ref{cor:Trotter-Decoup} below,
a similar argument holds for $H_0+H_{LS}$ of arbitrary form)
\begin{equation}\label{eqn:Trotter-dec}
\lim_{k\to\infty}\;(e^{-\tfrac{t}{2k}(\Gamma+i\hat H_{0,2}\otimes Z)}  e^{-\tfrac{t}{2k}(\Gamma-i\hat H_{0,2}\otimes Z)})^k =  e^{-t\Gamma}\;.
\end{equation}
By combining permutations of diagonal elements
with selective pairwise averaging by relaxation, any \mbox{$T$-transform} of~$\rho$~\footnote{
	$R_b(t)$ of Eqn.~\eqref{eqn:thm2} covers $\lambda\in [\tfrac{1}{2},1]$, while
	$\lambda\in [0,\tfrac{1}{2}]$ is obtained by unitarily swapping the elements
	before applying $R_b(t)$;
	$\lambda=\tfrac{1}{2}$ is obtained in the limit 
	$\gamma \tauf \to \infty$. 
	}
can be obtained in the limit $\gamma \tauf\to\infty$:
\begin{align}\label{eqn:T-trafo}
\notag
\rho(t) =
\diag\big(\ldots, \: &\tfrac{1}{2}[\rho_{ii} + \rho_{jj} + (\rho_{ii} - \rho_{jj})\cdot e^{-\gamma t}]_{ii}, \: \ldots,\\
&\tfrac{1}{2}[\rho_{ii} + \rho_{jj} + (\rho_{jj}-\rho_{ii}) \cdot e^{-\gamma t}]_{jj}, \: \dots \big).
\end{align}

Now recall that a vector $y\in\mathbb R^N$ majorises a vector 
$x\in\mathbb R^N$, denoted $x\prec y$,
if and only if there is a doubly stochastic matrix $D$ with $x=D y$, where $D$ has to be a product of at most $N-1$ such
$T$-transforms (e.g., Thm.~B.6 in~\cite{MarshallOlkin} or Thm.~II.1.10 in~\cite{Bhatia}).
The decomposition into $T$-transforms is essential, because {\em not every} doubly stochastic matrix 
can be written as a product of $T$-transforms \cite{MarshallOlkin}.
Actually, by the work of Hardy, Littlewood, and P{\'o}lya~\cite{HLP34} this sequence of 
\mbox{$T$-transforms} is constructive~\cite[p32]{MarshallOlkin} as will be made use of later.
Thus in the limit $\gamma \tauf\to\infty$
all vectors of eigenvalues $r\prec r_0$ can be reached and hence by unitary controllability
all the states $\rho\prec\rho_0$.

Finally, to see that one cannot go beyond the states majorised by the initial state,
observe that controlled unitary dynamics combined with bit-flip relaxation is still completely positive,
trace-preserving and {\em unital}. Thus it takes the generalised form of a {\em doubly-stochastic linear map} 
$\Phi$ in the sense of~Thm.~7.1 in~\cite{Ando89}, which for any hermitian matrix~$A$ ensures 
$\Phi(A)\prec A$. Hence (the closure of) the reachable set is indeed confined to $\rho\prec\rho_0$. 
\end{proof}
\end{theorem}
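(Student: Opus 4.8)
The plan is to run the non-unital argument of Theorem~\ref{Athm:transitivity} in parallel, but with the population-shifting amplitude-damping channel replaced by the population-\emph{averaging} bit-flip channel, so that the question becomes one of majorisation rather than free interconversion of eigenvalues. First I would invoke weak Hamiltonian controllability, Eqn.~(4), to diagonalise $\rho_0$ by a unitary, reducing the task to steering the eigenvalue vector $r_0:=\spec(\rho_0)$; since the state is then diagonal it commutes with the (assumed diagonal) $H_0$ and $\Hlamb$, so all population dynamics is carried by the noise alone. A direct computation shows that a bit-flip episode of duration $t$ acts on the diagonal as a doubly stochastic matrix averaging the pair of populations that differ only in the state of the noisy $n^{\rm th}$ qubit, with mixing weight $\tfrac{1}{2}(1+e^{-\gamma t})$ that tends to an equal average as $\gamma\tauf\to\infty$.

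The heart of the proof is to convert this rigid, \emph{simultaneous} averaging of all $2^{n-1}$ pairs into a single freely chosen $T$-transform on one selected pair of eigenvalues. For this I would first \emph{protect} the untargeted populations by conjugating with a unitary that acts as the identity on the targeted pair and as a $\pi/2$ rotation about $y$ on every other pair; this routes the relaxation so that it mixes only the chosen pair and leaves the rest fixed. I would then \emph{decouple} the drift: splitting the diagonal $H_0+\Hlamb$ into a part proportional to $\unity$ on the noisy qubit (which commutes with the protected state) and a part proportional to $Z$ (which can be sign-reversed by a $\pi$-pulse about $x$ that leaves the bit-flip generator invariant), a Trotter limit eliminates the drift entirely during the relaxation window. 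Interleaving this selective averaging with population permutations from unitary control — and noting that swapping the pair before averaging supplies the complementary range of mixing weights — yields an arbitrary $T$-transform of $r_0$ in the limit $\gamma\tauf\to\infty$.

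The forward inclusion then follows from classical majorisation theory. By Hardy, Littlewood and P{\'o}lya~\cite{HLP34}, one has $r\prec r_0$ exactly when $r=Dr_0$ for a doubly stochastic $D$, and — crucially, since not every doubly stochastic matrix is a product of $T$-transforms — such a $D$ may always be realised as a product of at most $N-1$ $T$-transforms~\cite{MarshallOlkin}. Concatenating that many selective-averaging blocks reaches every $r\prec r_0$, and a final diagonalising unitary delivers every $\rho\prec\rho_0$. For the reverse inclusion I would argue structurally rather than constructively: the total controlled evolution is completely positive, trace-preserving and \emph{unital}, because the bit-flip generator annihilates $\unity$; such a doubly-stochastic linear map $\Phi$ satisfies $\Phi(A)\prec A$ for every Hermitian $A$~\cite{Ando89}, confining the reachable set to $\{\rho\prec\rho_0\}$.

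I expect the protection-and-decoupling step to be the main obstacle. One must certify that a single relaxation window enacts \emph{exactly one} $T$-transform on the intended pair, with neither residual averaging of the other eigenvalues nor unitary leakage out of the protected diagonal configuration during an interval that becomes arbitrarily long as $\gamma\tauf\to\infty$; making the protecting conjugation and the $\pi$-pulse decoupling mutually compatible, and controlling the Trotter error uniformly in this limit, is the delicate part. Once a clean single $T$-transform is secured, the majorisation bookkeeping and the unitality bound are routine.
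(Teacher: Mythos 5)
Your proposal is correct and follows essentially the same route as the paper's own proof: diagonalisation by unitary control, protection of untargeted populations via the $\pi/2$ $y$-rotation conjugation, decoupling of the diagonal drift by $\pi$-pulses plus a Trotter limit, realisation of arbitrary $T$-transforms (with the swap supplying $\lambda\in[0,\tfrac12]$), the Hardy--Littlewood--P\'olya/Marshall--Olkin decomposition into at most $N-1$ $T$-transforms for the forward inclusion, and unitality (Ando) for the converse. The ``delicate point'' you flag at the end is handled in the paper exactly by the ingredients you already describe, so it is a matter of writing out the bookkeeping rather than a missing idea.
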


A physical system coupled to its environment can be described
using a Markovian master equation
in certain parameter regions, most notably in the weak-coupling limit.
In the standard derivation
(see Appendix~\ref{sec:heatbath} and Ch.~3.3 in~\cite{BreuPetr02})
the real part of the Fourier transform of the
bath correlation function enters the Lindbladian dissipator~$\Gamma$,
while the imaginary part induces a Lamb-shift term
$\Hlamb$ to the system Hamiltonian.
The Lamb shift is always switched alongside with the
dissipation~$\Gamma$, with their ratio constant.
While in many systems of interest $\Hlamb$ typically commutes with~$H_0$
(see \cite[Eqn.~(3.142)]{BreuPetr02}), next
we will relax this condition together with the assumption of diagonal drift Hamiltonians $H_0$,
since both were just introduced for simplicity.

\begin{corollary}\label{cor:Trotter-Decoup}
The diagonality requirement for the drift and Lamb-shift
Hamiltonians~$H_0$ and~$\Hlamb$ in the two theorems above
can in fact be relaxed,
since full unitary controllability ensures
one may always decouple
$H_0' := H_0+\Hlamb$
during the evolution under $\Gamma$.
\end{corollary}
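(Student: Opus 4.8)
The plan is to upgrade the single $\pi$-pulse decoupling of Eqn.~\eqref{eqn:Trotter-dec} to a full Pauli-group averaging, exploiting that full unitary controllability (Eqn.~(4)) makes arbitrarily fast Pauli pulses available as coherent controls. The mechanism I would build on is the covariance of a Lindblad generator under unitary conjugation: writing $\hat U(\rho):=U\rho U^\dagger$ for the superoperator of $U$, one has
\[
\hat U\,(\Gamma + i\hat H_0')\,\hat U^{-1}
= \Gamma_{U} + i\,\widehat{U H_0' U^\dagger},
\]
where $\Gamma_{U}$ is the dissipator obtained by replacing the Lindblad operator $V$ by $U V U^\dagger$. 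Symmetrised Trotterisation over a finite pulse group $G=\{U_g\}$ then realises, in the limit that generalises Eqn.~\eqref{eqn:Trotter-dec},
\[
\lim_{k\to\infty}\Big(\prod_{g\in G}\hat U_g\,
e^{-\frac{t}{|G|k}(\Gamma + i\hat H_0')}\,\hat U_g^{-1}\Big)^{k}
= e^{-t(\Gamma + i\hat{\bar H}_0')},\qquad
\bar H_0' := \tfrac{1}{|G|}\sum_{g}U_g H_0' U_g^\dagger,
\]
\emph{provided} every $U_g$ leaves the dissipator invariant, i.e.\ $\Gamma_{U_g}=\Gamma$. The whole corollary then reduces to choosing $G$ so that $\bar H_0'$ becomes harmless while this invariance holds.

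First I would fix $G$ to act by the full single-qubit Pauli group on each of qubits $1,\dots,n-1$, and on the noisy $n^{\rm th}$ qubit only by those Paulis preserving the relevant dissipator. The invariance check is immediate and is where the two noise types differ. For bit flip, $V_b\propto X$ is sent to $\pm X$ by every Pauli conjugation, and since the dissipator is bilinear in $(V_b,V_b^\dagger)$ it is unchanged under $V_b\mapsto -V_b$, so all of $\{\unity,X,Y,Z\}$ are admissible on qubit~$n$. For amplitude damping, $Z V_a Z=-V_a$ leaves the dissipator invariant, whereas $X,Y$ send $V_a\mapsto\pm V_a^\dagger$, i.e.\ to the time-reversed \emph{heating} generator, so only $\{\unity,Z\}$ is admissible there.

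Next I would compute the average. Since averaging a Pauli string over the full single-qubit Pauli group annihilates it unless the local factor is the identity, every term of $H_0'$ supported non-trivially on any of qubits $1,\dots,n-1$ is erased --- in particular the entire Ising-$ZZ$ drift. In the bit-flip case the surviving average over $\{\unity,X,Y,Z\}$ on qubit~$n$ retains only the identity, so $\bar H_0'=\tfrac{\trace H_0'}{2^{n}}\unity$ decouples completely and the effective noise generator is the pure dissipator $e^{-t\Gamma}$ --- exactly the object used after the $U_{12}$ rotation in the proof of Theorem~\ref{Athm:majorisation}. In the amplitude-damping case the average over $\{\unity,Z\}$ leaves $\bar H_0'=a\,\unity+b\,(\unity_2^{\otimes(n-1)}\otimes Z)$, a \emph{diagonal} operator; because the transfer argument of Theorem~\ref{Athm:transitivity} keeps $\rho(t)$ diagonal throughout the noise, $\bar H_0'$ commutes with $\rho(t)$ and is inert, so the effective action is again the purely dissipative $R_a(t)$ of Eqn.~\eqref{eqn:thm1}.

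The hard part is precisely the tension in the middle step: one wants the full Pauli group in order to annihilate $H_0'$ completely, yet the pulses on the dissipative qubit must leave $\Gamma$ invariant, which forbids $X,Y$ for amplitude damping and so blocks complete decoupling there. The resolution --- and the reason the diagonality of $H_0$ and $\Hlamb$ was only a convenience --- is that the unavoidable residual $\unity_2^{\otimes(n-1)}\otimes Z$ is diagonal and hence harmless on the diagonal states driving the construction, while for bit flip no residual survives at all because $V_b$ is itself (a multiple of) a Pauli. Substituting the decoupled evolution $e^{-t\Gamma}$ (respectively $e^{-t(\Gamma+i\hat{\bar H}_0')}$ with $\bar H_0'$ diagonal) in place of the diagonal-$H_0'$ evolution then reproduces both theorems verbatim for arbitrary $H_0$ and $\Hlamb$.
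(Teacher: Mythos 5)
Your proof is correct, but it takes a genuinely different---and considerably more elaborate---route than the paper's. The paper's own proof of Corollary~\ref{cor:Trotter-Decoup} is a one-liner based on \emph{direct compensation}: since full unitary controllability makes the inverse drift evolution available as a coherent control, one simply Trotterizes
$\lim_{k\to\infty} \big(e^{-t(\Gamma + i\hat H_0')/k}\, e^{-t(-i\hat H_0')/k}\big)^k = e^{-t\Gamma}$,
which cancels $H_0'$ exactly, for \emph{any} dissipator $\Gamma$, with no residual term and no case distinction between noise types. You instead promote the single $\pi_x$-pulse trick of Eqn.~\eqref{eqn:Trotter-dec} (used inside the proof of Theorem~\ref{Athm:majorisation}) to a full Pauli-group twirl, which forces you to respect the invariance constraint $\Gamma_{U_g}=\Gamma$ on the noisy qubit---a constraint you identify and resolve correctly: every Pauli preserves the bit-flip dissipator (so $\bar H_0'\propto\unity$ and decoupling is complete), whereas only $\{\unity,Z\}$ preserves amplitude damping, leaving the residual diagonal term $a\,\unity+b\,\unity_2^{\otimes(n-1)}\otimes Z$, which you then argue is inert because the construction in Theorem~\ref{Athm:transitivity} keeps the state diagonal during the noise periods. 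What your approach buys: the decoupling uses only instantaneous \emph{local} Pauli pulses rather than the global many-body compensating unitary $e^{+itH_0'/k}$, a weaker and experimentally friendlier control primitive, and it unifies the corollary with the decoupling mechanism already present inside the unital theorem. What the paper's approach buys: brevity, exact removal of $H_0'$ with no residual, and independence from both the single-qubit locality of the noise and the diagonal-state structure of the theorems' constructions---your argument, by contrast, leans on both (pulses on qubits $1,\dots,n-1$ commute with the dissipator only because the Lindblad operator acts trivially there, and the surviving $Z$ term is harmless only because Theorem~\ref{Athm:transitivity} operates on diagonal states).
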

\begin{proof} The Trotter limit
$\lim_{k\to\infty} \big(e^{-t(\Gamma + i\hat H_0')/k}
\: e^{-t(-i\hat H_0')/k}\big)^k = e^{-t \Gamma}$
gives an effective evolution under the dissipator~$\Gamma$ alone, and
the compensating terms $e^{t (i\hat H_0')/k}$ can obviously be obtained by unitary control.
\end{proof}

\medskip
Note that the theorems above are stated under further mildly simplifying conditions. So
\begin{enumerate}
\item the theorems  hold {\em a forteriori} if the noise amplitude
	is not only a bang-bang control $\gamma(t)\in\{0,\gamma\}$, but may vary in time within the entire 
	interval $\gamma(t)\in[0,\gamma]$;
\item the theorems are stated  independent of the choice of $\gamma > 0$; however, for the Born-Markov approximation
	to hold such as to give a Lindblad master equation, $\gamma$ may have to be limited according to 
	the considerations in App.~\ref{sec:heatbath} ({\em vide infra}); note that the theorems hold in particular also for $\gamma$
	sufficiently small to ensure adiabaticity~\cite{ABLZ12};
\item if several qubits come with switchable noise of the same type (unital or non-unital),
	then the (closures of the) reachable sets themselves do not alter, yet the control problems can be solved more efficiently;
\item
	a single switchable non-unital noise process equivalent to amplitude damping 
	on top of switchable unital ones suffices to make the system act transitively;
\item for systems with non-unital switchable noise equivalent to amplitude damping,
	the (closure of the) reachable set under non-Markovian conditions 
	cannot grow, since it already encompasses the entire set of density operators 
	(see Sec.~\ref{sec:controllabilities})---yet again the control problems may become easier to solve efficiently; 
\item likewise in the unital case, the reachable set does not grow under non- Markovian conditions, since the
 	Markovian scenario already explores all interconversions
        obeying the majorisation condition (see also Sec.~\ref{sec:controllabilities});
\item the same arguments hold for a {\em coded logical subspace} that is unitarily fully controllable and coupled
	to a single physical qubit undergoing switchable noise.
\end{enumerate}

\section{Generalised Noise Generators, Relation to Finite-Temperature Baths and Algorithmic Cooling}\label{app:B}

In this section we extend our analysis to finite-temperature noise.
We start with a formal mathematical generalization of the Lindblad generators
in Sec.~\ref{sec:genLindblad}, and show how the resulting conditions on reachability
are connected to algorithmic cooling.
Then, in Sec.~\ref{sec:heatbath}, we provide a physical derivation of our example control system
by weakly coupling a chain of qubits to a bosonic (or fermionic) Markovian bath of a finite temperature.
We provide some additional finite-temperature reachability results, and show that our approach
can go beyond algorithmic cooling.
We conclude in Sec.~\ref{sec:Lindblad-w-commuting-Lamb} where we discuss some technicalities
related to the Lamb shift.


\subsection{Generalised Lindblad Generators}\label{sec:genLindblad}
The noise scenarios of the above theorems can be
generalised to the Lindblad generator
$V_\theta:=\left(\begin{smallmatrix} 0 & \cos(\theta/2)\\ \sin(\theta/2) & 0 \end{smallmatrix}\right)$
with $\theta\in[0,\pi/2]$.
The Lindbladian and its exponential are now given by
\begin{equation}\label{eqn:GLtheta}
\Gamma(\theta) = \begin{pmatrix}
  \sin^2(\frac{\theta}{2}) & 0 & 0 & -\cos^2(\frac{\theta}{2})\\[1mm]
  0 &\tfrac{1}{2} & -\tfrac{1}{2}\sin(\theta) & 0\\[1mm]
  0 & -\tfrac{1}{2} \sin(\theta) & \tfrac{1}{2} & 0\\[1mm]
  -\sin^2(\frac{\theta}{2}) & 0 & 0 & \cos^2(\frac{\theta}{2})
\end{pmatrix}
\quad \text{and}
\end{equation}
\begin{equation}\label{eqn:expGLtheta}
e^{-\gamma t \,\Gamma(\theta)} =
\begin{pmatrix}
1 -(1-\varepsilon) \sin^2(\frac{\theta}{2}) &0 &0 & (1-\varepsilon) \cos^2(\frac{\theta}{2}) \\[1mm]
0 &\sqrt{\varepsilon}\cosh(\frac{\gamma t}{2} \sin(\theta)) &\sqrt{\varepsilon}\sinh(\frac{\gamma t}{2} \sin(\theta)) & 0\\[1mm]
0 &\sqrt{\varepsilon}\sinh(\frac{\gamma t}{2} \sin(\theta)) &\sqrt{\varepsilon}\cosh(\frac{\gamma t}{2} \sin(\theta)) & 0\\[1mm]
(1-\varepsilon) \sin^2(\frac{\theta}{2}) &0 &0 & 1 -(1-\varepsilon) \cos^2(\frac{\theta}{2})
\end{pmatrix}
\end{equation}
with $\varepsilon:=e^{-\gamma t}$.
The eigenvalues of $\Gamma(\theta)$ are $\{0, 1\}$
for the outer block
$\left(\begin{smallmatrix} \Gamma_{11} & \Gamma_{14} \\ \Gamma_{41} & \Gamma_{44} \end{smallmatrix}\right)$
pertaining to the evolution of the populations, i.e.\ the diagonal elements of $\rho(t)$, and
$\frac{1}{2}(1 \pm |\sin(\theta)|)$
for the inner block
$\left(\begin{smallmatrix} \Gamma_{22} & \Gamma_{23} \\ \Gamma_{32} & \Gamma_{33} \end{smallmatrix}\right)$
ruling the evolution of the coherences, i.e.\ the off-diagonal elements of~$\rho(t)$.

Choosing the initial state diagonal, the action on a diagonal vector of an $n$-qubit state
takes the following form that can be decomposed into a convex sum of
a pure amplitude-damping part and a pure bit-flip part (cf.~Eqs.~\eqref{eqn:thm1},\eqref{eqn:thm2}):
\begin{equation}
\label{eq:genlindR}
R_\theta(t)
= \unity_2^{\otimes (n-1)} \otimes \left[
\cos(\theta)
\begin{pmatrix}
  1 & (1-\varepsilon)\\
  0 &  \varepsilon
\end{pmatrix}
+\left(1-\cos(\theta)\right)
\frac{1}{2}
\begin{pmatrix}
  (1+\varepsilon) & (1-\varepsilon)\\
  (1-\varepsilon) & (1+\varepsilon)
\end{pmatrix}\right]\;.
\end{equation}

In order to limit the entire dissipative action over some fixed time $\tau$ to the first two eigenvalues (as in Thm.~1), 
one may switch again as in Eqn.~\eqref{eqn:switch-time1} after a time
\begin{equation}\label{eqn:switch-theta}
\tau_{ij}(\theta):=\frac{1}{\gamma}  \;
\ln \left(\frac{e^{\gamma\tau}\left(\frac{\rho_{ii}}{\rho_{jj}} -\tan^2(\frac{\theta}{2})\right)
  +\left(1-\tan^2(\frac{\theta}{2})\frac{\rho_{ii}}{\rho_{jj}}\right) }
    {\left(1-\tan^2(\frac{\theta}{2})\right)\left(\frac{\rho_{ii}}{\rho_{jj}}+1\right)}
\right)
=
\frac{1}{\gamma} \ln \Bigg(
\frac{e^{\gamma\tau}\big(\frac{\rho_{ii}}{\rho_{jj}}\bboltz  -1 \big) +\big(\bboltz -\frac{\rho_{ii}}{\rho_{jj}}\big)} 
{(\bboltz-1)\big(\frac{\rho_{ii}}{\rho_{jj}}+1\big)}
\Bigg)
\quad.
\end{equation}
The result reproduces Eqn.~\eqref{eqn:switch-time1} for $\theta\to 0$
and it coincides with Eqn.~\eqref{eqn:switch-bf} below by identifying
$\tan^2(\frac{\theta}{2}) = \frac{1-\cos(\theta)}{1+\cos(\theta)}$
with a fiducial inverse Boltzmann factor~$\bboltz^{-1} = e^{\beta \hbar \omega}$
describing the effect of the noise on diagonal states.
The switching condition is meaningful as long as
$0 \le \tau_{ij}(\theta) \le \tau$, which corresponds to
the condition
\begin{equation}\label{eqn:theta-stop}
\tan^2(\tfrac{\theta}{2}) \le \frac{\rho_{ii}}{\rho_{jj}} \le \cot^2(\tfrac{\theta}{2})\;.
\end{equation}

\noindent
If $\theta \neq \frac{\pi}{2}$, the noise qubit has the unique fixed-point state
\begin{equation}\label{eqn:fixp-theta}
\rho_\infty(\theta) = 
\begin{pmatrix}
  \cos^2(\frac{\theta}{2}) & 0\\
  0 & \sin^2(\frac{\theta}{2})
\end{pmatrix}\;.
\end{equation}
Note that the parameter $\theta$ corresponds to the inverse temperature
\begin{equation}
\label{eqn:AlgCooling-delta}
\beta(\theta) =
\tfrac{2}{-\hbar \omega} \operatorname{artanh}\left(\delta(\theta)\right)
\quad \text{with} \quad
\delta(\theta) := \cos(\theta) = \frac{b-1}{b+1}\;.
\end{equation}

Thus the relaxation by the single Lindblad generator $V_\theta$ shares the
fixed point with equilibrating the 
system via the noisy qubit with a local bath of temperature
$\beta(\theta)$.
As limiting cases, pure amplitude damping ($\theta=0$) is
brought about by a bath of zero temperature,
while pure bit-flip ($\theta = \tfrac{\pi}{2}$) shares the fixed point with
the infinite-temperature limit.
See Sec.~\ref{sec:heatbath} for the relation to physical heat baths.

In a single-qubit system with unitary control and a bang-bang switchable noise generator~$V_\theta$
it is straightforward to see that
one can (asymptotically) reach all states with purity less or equal to the
larger of the purities of the initial state~$\rho_0$ and~$\rho_\infty(\theta)$
(a special case also treated in~\cite{RBR12}),
\begin{equation}
\overline\reach_{\rm 1 qubit,\Sigma_\theta}(\rho_0) = \{\rho\,|\,\rho\prec\rho_0\} \cup \{\rho'\,|\,\rho'\prec\rho_\infty(\theta)\}\;.
\end{equation}
In contrast, for $n\geq 2$ qubits the situation is more involved:
relaxation of a diagonal state can only be limited to a single pair of
eigenvalues if all the remaining ones can be arranged in pairs each satisfying
Eqn.~\eqref{eqn:theta-stop}.

\subsubsection*{Connection to algorithmic cooling}

However, Eqn.~\eqref{eqn:theta-stop} poses no restriction  in an important special case, i.e.~the task of cooling:
starting from the maximally mixed state, optimal control protocols with period-wise relaxation by $V_\theta$ 
interspersed with unitary permutation of diagonal density operator elements clearly include
the partner-pairing approach~\cite{SMW05} to
\emph{algorithmic cooling} with bias~$\delta(\theta)$ defined in Eqn.~\eqref{eqn:AlgCooling-delta}
as long as $0\leq\theta<\frac{\pi}{2}$. Note that this type of algorithmic cooling proceeds also just on
the diagonal elements of the density operator, but it involves no transfers limited to a single pair
of eigenvalues. Let $\rhoalg$ define the state(s) with highest asymptotic purity achievable by
partner-pairing algorithmic cooling with bias $\delta$.
As the pairing algorithm is just a special case of unitary evolutions plus relaxation brought about by $V_\theta$, 
one arrives at
\begin{equation}
\label{eq:algocool-delta}
\overline\reach(\rho_0) \supseteq \overline\reach(\rhoalg) \quad\text{for any $\rho_0$}\;,
\end{equation}
because any state $\rho_0$ can clearly be made diagonal to evolve into a fixed-point state 
obeying Eqn.~\eqref{eqn:theta-stop}, from whence the purest state $\rhoalg$ can be reached by 
partner-pairing cooling.

To see this in more detail, note that a (diagonal) density operator $\rho_\theta$ of an $n$-qubit system 
is in equilibrium with a bath of inverse temperature $\beta(\theta)$ coupled to its terminal qubit, if the
pairs of consecutive eigenvalues satisfy
\begin{equation}
\frac{\rho_{ii}}{\rho_{i+1,i+1}}=\cot^2(\theta/2)\quad\text{for all odd $i < 2^n$}\;.
\end{equation}
Hence (for $\theta \neq \pi/2$) such a $\rho_\theta$ is indeed a fixed point under uncontrolled drift, i.e.~relaxation by $V_\theta$
and evolution under a diagonal Hamiltonian $H_0$ thus extending Eqn.~\eqref{eqn:fixp-theta} to $n$ qubits. 
Now, if (say) the first pair of eigenvalues is unitarily permuted
(with relaxation switched off),
a subsequent evolution under the drift term only affects the first pair of eigenvalues as
\begin{equation}
\begin{split}
R_\theta(t)
\begin{pmatrix}
  \sin^2(\tfrac{\theta}{2})\\[1mm]
  \cos^2(\tfrac{\theta}{2})
\end{pmatrix}
&=
\left[\begin{matrix}
    1-(1-\varepsilon)\sin^2(\tfrac{\theta}{2}) & (1-\varepsilon)\cos^2(\tfrac{\theta}{2})\\[1mm]
    (1-\varepsilon)\sin^2(\tfrac{\theta}{2}) & 1-(1-\varepsilon)\cos^2(\tfrac{\theta}{2})
\end{matrix}\right]
\begin{pmatrix}
  \sin^2(\tfrac{\theta}{2})\\[1mm]
  \cos^2(\tfrac{\theta}{2})
\end{pmatrix}\\[2mm]
&=
\begin{pmatrix}
  \cos^2(\tfrac{\theta}{2}) +\varepsilon(\sin^2(\tfrac{\theta}{2})-\cos^2(\tfrac{\theta}{2}))\\[1mm]
  \sin^2(\tfrac{\theta}{2}) -\varepsilon(\sin^2(\tfrac{\theta}{2})-\cos^2(\tfrac{\theta}{2}))
\end{pmatrix}
=
\left[\begin{matrix}
    \varepsilon & (1-\varepsilon)\\
    (1-\varepsilon) &   \varepsilon
  \end{matrix}\right] 
\begin{pmatrix}
  \sin^2(\tfrac{\theta}{2})\\[1mm]
  \cos^2(\tfrac{\theta}{2})
\end{pmatrix} \;.
\end{split}
\end {equation}
In other words, the evolution then acts as a $T$-transform on the first eigenvalue pair. 
Since the switching condition Eqn.~\eqref{eqn:theta-stop}
is fulfilled at any time, {\em all} $T$-transformations with  $\varepsilon\in[0,1]$
on the first pair of eigenvalues can be obtained and preserved during transformations
on subsequent eigenvalue pairs.

Hence from any diagonal fixed-point state $\rho_\theta$ (including $\rhoalg$ as a special case),
those other diagonal states (and their unitary orbits) can be reached that arise by pairwise $T$-transforms  
{\em only} as long as the
remaining eigenvalues can be arranged such as to fulfill the stopping condition Eqn.~\eqref{eqn:theta-stop}.
Suffice this to elucidate why for $n\geq 2$ a fully detailed determination of the asymptotic reachable set in the case
of unitary control plus a single switchable $V_\theta$ on one qubit is more  involved.
This will be made more
explicit in Theorem~\ref{th:bath-T-trafo} of the following section.

\subsection{Bosonic (or Fermionic) Heat Baths}\label{sec:heatbath}
Next we consider coupling to Markovian heat baths: while the physically relevant scenario usually pertains to
bosonic baths, fermionic ones can formally be handled analogously. 
Note that realistic baths composed of fermions at low energies typically consist of fermions being excited and 
de-excited rather than created or destroyed. These excitations are (hardcore) bosons and thus fall under the 
Bose case~\cite{Weiss99, HW04}. For this reason, we focus on bosonic baths in what follows.

\subsubsection*{Bath model}
\label{sec:bathmodel}
\noindent
Assume an ohmic bosonic (or fermionic) heat bath with the Hamiltonian
\be
H_{\text{bath}} := \int_0^\infty \wrt{\omega} \omega \, \big(b_\omega^\dagger b_\omega +\tfrac{1}{2}\big)\;,
\ee
in a state representing a canonical ensemble.
The bath coupling operator is
\be
\label{eq:bathcoupling}
B = \int_0^\infty \wrt{\omega} \sqrt{J(\omega)} \big(b_\omega +b_\omega^\dagger\big),
\ee
where
$J(\omega) = \omega f (\frac{\omega}{\omegacut} )$
describes the ohmic spectral density of the bath, and
$f(x) = (1+x^2)^{-1}$ is the Lorentz-Drude cutoff function.
The bath correlation function in the interaction picture generated by $H_{\text{bath}}$ is now
\begin{align}
\notag
\expt{B(s) B(0)}
&=
\int_0^\infty \wrt{\omega} \int_0^\infty \wrt{\omega'}
\sqrt{J(\omega) J(\omega')}
\expt{\big(e^{-i \omega s} b_\omega +e^{i \omega s} b_\omega^\dagger\big) \big(b_{\omega'} +b_{\omega'}^\dagger\big)}\\[2mm]
&=
\int_0^\infty \wrt{\omega} J(\omega)
\big( (1+n(\omega)) e^{-i \omega s} +n(\omega) e^{i \omega s}) \big),
\end{align}
where $n(\omega) = (e^{\beta \hbar \omega} \mp 1)^{-1}$ is the Planck (Fermi) function,
and $\beta=\frac{1}{k_B T_b}$~the inverse temperature of the bath.
Fourier transforming the bath correlation function
and then dividing it into hermitian and skew-hermitian parts,
\be
\label{eq:speccorr}
\speccorr(\omega)
= \int_0^\infty \wrt{s} e^{i s \omega} \expt{B(s) B(0)}
= \frac{1}{2} \gamma(\omega) +i S(\omega),
\ee
one obtains
\be
\gamma(\omega) = \int_{-\infty}^\infty \wrt{s} e^{i s \omega} \expt{B(s) B(0)}.
\ee
Assuming a reasonable bath state that
is both stationary under $H_{\text{bath}}$
and fulfills the Kubo-Martin-Schwinger (KMS) condition, we obtain
\be
\gamma(-\omega) = e^{-\beta \hbar \omega} \gamma(\omega).
\ee
The dissipation rates are given by
\be
\gamma(\omega) = 2 \pi (1 \pm n(\omega)) f(|\omega|/\omegacut)
\begin{cases}
\omega \quad\:\: \text{(bosons)},\\
|\omega| \quad \text{(fermions)}.
\end{cases}
\ee

\subsubsection*{Paradigmatic Ising control system}

Our example control system consists of a chain of qubits with Ising-ZZ coupling,
with each qubit individually resonantly driven.
The $k$th qubit Hamiltonian is (setting $\hbar=1$)
\be
H_k = \omega_k \tfrac{1}{2} Z\sys{k}
+\sum_q \Omega_q(t) \cos(\carrier_q t +\phi_q(t)) X\sys{k}.
\ee
The qubit-qubit nearest-neighbour Ising couplings are given by
\be\label{eqn:H0-Ising}
H_0 = \pi J \sum_{k=1}^{n-1} \tfrac{1}{2} Z\sys{k} Z\sys{k+1}.
\ee
Further assume that
$|\omega_k| \gg |\Omega_q(t)|$,
$|\omega_k| \gg |J|$,
the qubits are driven in resonance
($\carrier_q = \omega_q$), and that
the qubit splittings $\omega_k$ are well separated.
The final, $n$th qubit is further coupled to a heat bath
at inverse temperature~$\beta$, as described in the previous section.
The qubit-bath coupling is
\be
H_{\text{int}} = \kappa(t) A \otimes B,
\ee
where $\kappa(t)$ is a tunable coupling coefficient,
$A = X\sys{n}$,
and $B$~is the bath coupling operator.

Following the standard derivation for the Lindblad equation
using the Born-Markov approximation in the weak-coupling limit~\cite[Ch.~3.3.1]{BreuPetr02},
we transform to the rotating frame generated by
\be
\Hrf = \sum_{k=1}^n \carrier_k \tfrac{1}{2} Z\sys{k} +H_\text{bath}.
\ee
Since the system part of $\Hrf$ is diagonal,
$H_0$ is unaffected by the rotating frame.
Since the qubit splittings $\omega_k$ are well separated one may
apply the RWA with no crosstalk, and the qubit Hamiltonians become
\be
H_k'(t) \approx
\tfrac{1}{2}\Omega_k(t) \big(\cos(\phi_k(t)) X\sys{k} +\sin(\phi_k(t)) Y\sys{k} \big).
\ee

The bath degrees of freedom are traced over, and
will appear in the equation of motion
only as the Fourier transform of the bath correlation function,
see Eq.~\eqref{eq:speccorr}.
Our choice of~$\Hrf$ yields two jump operators
\be\label{eqn:A-omega}
A(\omega_n) = \I \otimes \sigma_{-} \text{\quad and\quad} A(-\omega_n) =  \I \otimes \sigma_{+}.
\ee
One obtains a Lindblad equation for the system in the rotating frame,
with the Hamiltonian
\begin{align}
H_u(t) = H_0 +\tfrac{1}{2} \sum_{k=1}^n \Omega_k(t) \big(\cos(\phi_k(t)) X\sys{k} +\sin(\phi_k(t)) Y\sys{k} \big)
+\Hlamb,
\end{align}
where the Lamb shift is
\be\label{eqn:LS-weak}
\Hlamb = \kappa^2(t) \sum_{\omega} S(\omega) A^\dagger(\omega) A(\omega)
=
\kappa^2(t) \big(S(\omega_n) \ketbra{0}{0}_n +S(-\omega_n) \ketbra{1}{1}_n\big)
\hat{=}
\underbrace{\kappa^2(t) \big(S(\omega_n)-S(-\omega_n)\big)}_{\lambda} \tfrac{1}{2} Z\sys{n},
\ee
and the Lindblad dissipator
\begin{align}
\notag
\Gamma(\rho) &=
-\kappa^2(t) \big(
\gamma(\omega_n)
\underbrace{
\Big(
(\I \otimes \sigma_{-}) \rho (\I \otimes \sigma_{-}^\dagger)
-\frac{1}{2} \{\ketbra{0}{0}_n, \rho \}
\Big)
}_{-\Gamma_{\sigma_-}(\rho)}
+\gamma(-\omega_n)
\underbrace{
\Big(
(\I \otimes \sigma_{+}) \rho (\I \otimes \sigma_{+}^\dagger)
-\frac{1}{2} \{\ketbra{1}{1}_n, \rho \}
}_{-\Gamma_{\sigma_+}(\rho)}
\Big)\\
&=
\underbrace{\kappa^2(t) \gamma(\omega_n) (b+1)}_{\gamma}
\;\,
\underbrace{
\Big(
\frac{1}{\bboltz+1} \Gamma_{\sigma_-}
+\frac{1}{\bboltz^{-1}+1} \Gamma_{\sigma_+}
\Big)}_{\Gamma'} (\rho),
\end{align}
where we have introduced
$\bboltz := e^{-\beta \hbar \omega_n}$
to denote the Boltzmann factor of the bath-coupled qubit.
Note that regardless of the bath coupling coefficient $\kappa(t)$,
the ratio of the Lamb shift magnitude $\lambda$ and the dissipation rate $\gamma$ is given by
\be
\frac{\lambda}{\gamma} =
\frac{S(\omega_n)-S(-\omega_n)}{2 \gamma(\omega_n) (b+1)}.
\ee
This ratio only depends on $\bboltz$ and $|\omega_n|/\omegacut$.
The latter we in our finite-temperature examples fix to~$1/5$ (somewhat arbitrarily).
Note that in our convention the ground state of a spin-$\tfrac{1}{2}$-qubit
is $\ket{0} := (\begin{smallmatrix} 1\\0\end{smallmatrix})=\ket{\negthinspace\uparrow}$ in agreement with~\cite{EBW87}.
Thus we must have $\omega_n < 0$,
and for nonnegative temperatures $\bboltz \ge 1$.

Acting on the Liouville space of the final qubit, the Lindblad superoperator $\Gamma'$ takes the form
\be\label{eqn:bath-liouv}
\Gamma'
=
\frac{1}{\bboltz+1}
\begin{pmatrix}
1 & 0 & 0 & 0\\
0 & \frac{1}{2} & 0 & 0\\
0 & 0 & \frac{1}{2} & 0\\
-1 & 0 & 0 & 0
\end{pmatrix}
+
\frac{1}{\bboltz^{-1}+1}
\begin{pmatrix}
0 & 0 & 0 & -1\\
0 & \frac{1}{2} & 0 & 0\\
0 & 0 & \frac{1}{2} & 0\\
0 & 0 & 0 & 1
\end{pmatrix}
=
\begin{pmatrix}
\frac{1}{\bboltz+1} & 0 & 0 & \frac{-1}{\bboltz^{-1}+1}\\
0 & \frac{1}{2} & 0 & 0\\
0 & 0 & \frac{1}{2} & 0\\
\frac{-1}{\bboltz+1} & 0 & 0 & \frac{1}{\bboltz^{-1}+1}
\end{pmatrix}.
\ee
The eigenvalues of $\Gamma'$ are again $\{0, 1\}$
for the outer block and $\{\frac{1}{2}, \frac{1}{2}\}$
for the inner block.
In the zero-temperature limit $T_b\to 0+$ one finds $\bboltz \to \infty$
and thus only the $\Gamma_{\sigma_+}$ term remains.
This corresponds to the purely amplitude-damping case, with
$\ketbra{0}{0}$ in the kernel of the Lindbladian.
In the limit $T_b \to \infty$ we have
$\bboltz \to 1$, and hence one obtains
\be
\lim_{T_b \to \infty} \Gamma'
=: \Gamma_\infty
=
\tfrac{1}{2}
\begin{pmatrix}
1 & 0 & 0 & -1\\
0 & 1 & 0 & 0\\
0 & 0 & 1 & 0\\
-1 & 0 & 0 & 1
\end{pmatrix}
=
\tfrac{1}{2}\;
\Gamma_{\{\sigma_+, \sigma_-\}}.
\ee
This is equivalent to dissipation under the two Lindblad generators
$\{X, Y\}$ since
$\Gamma_{\{X, Y\}} = 2 \Gamma_{\{\sigma_+, \sigma_-\}}$.
In contrast, $X$ as the only Lindblad generator (generating bit-flip noise) gives
\be
\Gamma_{\{X\}}
=
\begin{pmatrix}
1 & 0 & 0 & -1\\
0 & 1 & -1 & 0\\
0 & -1 & 1 & 0\\
-1 & 0 & 0 & 1
\end{pmatrix}
\ee
in agreement with Eqn.~\eqref{eqn:GLtheta}.
The propagators $e^{-\gamma t \Gamma_\nu}$ generated by $\Gamma_{\{\sigma_+, \sigma_-\}}$
{\em vs.} $\Gamma_{\{X\}}$
act indistinguishably
on {\em diagonal} density operators (and those with purely
imaginary coherence terms $\rho_{12}=\rho^*_{21}$). However,
the relaxation of the real parts of the coherence terms $\rho_{12}=\rho^*_{21}$ differs: 
$\Gamma_{\{X\}}$ leaves them invariant
while $\Gamma_{\{\sigma_+, \sigma_-\}}$ does not.
In other words, $\Gamma_{\{X\}}$ has the nontrivial
invariant subspaces used in Eqn.~\eqref{eqn:protect},  while $\Gamma_\infty$  does not.


The evolution of a diagonal state remains diagonal
under the general dissipator of Eqn.~\eqref{eqn:bath-liouv}.
The restriction of~$\Gamma'$ to the diagonal subspace is given by
\be
\Gamma''
=
\begin{pmatrix}
\frac{1}{\bboltz+1} &\frac{-1}{\bboltz^{-1}+1}\\[2mm]
\frac{-1}{\bboltz+1} & \frac{1}{\bboltz^{-1}+1}
\end{pmatrix}.
\ee
It has the eigenvalues $\{0,1\}$, which makes it idempotent, and thus the
corresponding propagator (using $\varepsilon(t) := e^{-\gamma t}$)
is given by
\be
\label{eqn:bf-evol}
R_T(t) := e^{-\gamma t \Gamma''} = \I +(\varepsilon(t)-1) \Gamma''\;.
\ee

The noise propagator in the diagonal subspace is again a convex sum of
a pure amplitude-damping part and a pure bit-flip part
(cf.~Eqs.~\eqref{eqn:thm1},\eqref{eqn:thm2},\eqref{eq:genlindR}):
\be
R_T(t) =
\unity_2^{\otimes (n-1)} \otimes \left[
\frac{b-1}{b+1}
\begin{pmatrix}
  1 & (1-\varepsilon)\\
  0 &  \varepsilon
\end{pmatrix}
+\frac{2}{b+1}\cdot
\frac{1}{2}
\begin{pmatrix}
  (1+\varepsilon) & (1-\varepsilon) \\
  (1-\varepsilon) & (1+\varepsilon)
\end{pmatrix}
\right].
\ee

Now we generalise the switching condition of  Eqn.~\eqref{eqn:switch-time1} for a
finite-temperature bosonic (fermionic) bath:
we ask under which conditions one can undo the dissipative evolution for the populations of level
$i$ and $j$ over a fixed time $\tau$
by letting the system evolve for a time $\tau_{ij}$,
then swapping the populations of levels $i$ and $j$, letting the system evolve for the remaining time $(\tau-\tau_{ij})$
and swapping again. Denoting the population swap by~$Q$, for any
$\rho_{ii}, \rho_{jj}$ and $\tau$, in a formal step one thus has to find $\tau_{ij}$ such that
\be
Q \circ R_T(\tau-\tau_{ij}) \circ Q \circ R_T(\tau_{ij})
\begin{pmatrix}
\rho_{ii}\\
\rho_{jj}
\end{pmatrix}
= 
\begin{pmatrix}
\rho_{ii}\\
\rho_{jj}
\end{pmatrix}.
\ee
A somewhat lengthy but straightforward calculation yields the {\em switching-time condition}
for $\tau_{ij}$ reading
\be
\label{eqn:switch-bf}
\tau_{ij}
=
\frac{1}{\gamma} \ln \Bigg(
\frac{e^{\gamma\tau}\big(\frac{\rho_{ii}}{\rho_{jj}}\bboltz -1\big) +\big(\bboltz -\frac{\rho_{ii}}{\rho_{jj}}\big)} 
{(\bboltz-1)\big(\frac{\rho_{ii}}{\rho_{jj}}+1\big)}
\Bigg),
\ee
which in the limit of $T \to 0+$ (i.e.\ $\bboltz \to \infty$) reproduces Eqn.~\eqref{eqn:switch-time1} and
in its general form it coincides with Eqn.~\eqref{eqn:switch-theta}.
Limiting the switching time $\tau_{ij}$ to the physically meaningful interval
$0\leq\tau_{ij}\leq\tau$ then translates into a {\em stopping-time condition} for
$\tau$ related to the admissible population ratios reading
\begin{equation}
\label{eqn:bf-stab}
\bboltz^{-1}
\leq \frac{\rho_{ii}}{\rho_{jj}} \leq
\bboltz\;,
\end{equation}
which turns into a limitation only for non-zero temperatures (i.e.\ finite Boltzmann factors~$b$).
Hence for both bosonic and fermionic baths the relaxative transfer
between $\rho_{ii}$ and $\rho_{jj}$ can only be undone if their ratio
falls into the above interval.
On the other hand, in terms of a single qubit coupled to the bosonic (fermionic) bath
this population ratio would relate to
the thermal equilibrium state
\begin{equation}
\rho_{\text{eq}}
:=
\frac{1}{1+\bboltz}\begin{pmatrix} \bboltz & 0\\ 0 & 1\end{pmatrix}\quad.
\end{equation}

Another way of understanding~Eq.~\eqref{eqn:bf-stab} is in terms of temperatures.
The population ratio $\rho_{ii}/\rho_{jj}$ corresponds to an effective temperature
$T_q = \frac{-\hbar \omega_n}{k \ln(\rho_{ii}/\rho_{jj})}$
(possibly negative).
$T_q$ will always thermalize towards~$T_b$
(if negative, $T_q$ will fall to $-\infty$, wrap around to $+\infty$ and then fall towards~$T_b$),
as illustrated in Fig.~\ref{fig:thermalization}.
A population swap negates~$T_q$.
Thus, iff $|T_q| < T_b$, the thermalization cannot be reversed (this
is a no-return zone, bath is more ``entropic'' than the state).
Also, $T_q$ will never enter the forbidden zone $|T| < \min(|T_q|, T_b)$.

To sum up, in baths of finite~$T_b$ one can always
reverse dissipative population transfer between passive population pairs
(as in Theorem~\ref{thm:transitivity}) for some values of~$T_q$, 
but one cannot protect the inactive population pairs (as in Theorem~\ref{thm:majorisation}),
because the noise lacks the protected subspaces pure bit-flip noise has.

\begin{figure}[!ht]
\includegraphics[width=0.6\columnwidth]{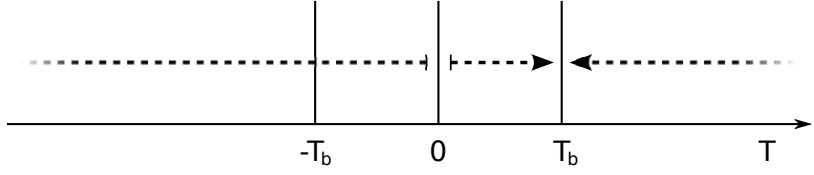}
\caption{\label{fig:thermalization}
Thermalization of a qubit coupled to a bath of temperature~$T_b$.
}
\end{figure}

\subsubsection*{Further reachability results}

\begin{theorem}
\label{th:bath-T-trafo}
Consider a system of $n$~qubits with a drift Hamiltonian~$H_0$ leading to a connected coupling topology,
where the local controls and~$H_0$ suffice to give full unitary control.
If one of the $n$ qubits is in a switchable way  coupled to a  bosonic or fermionic bath 
(of any temperature)
such that the Lamb-shift term $\Hlamb$ commutes with the system Hamiltonian $H_0$,
one finds
\begin{enumerate}
\item[(1)] In the single-qubit case ($n=1$), any $T$-transform can be performed; in particular,
	any pair $(\rho_{ii},\rho_{jj})$ can be averaged (in a perfect way in finite time if the temperature is finite).
\item[(2)] For several qubits ($n\geq 2$), any $T$-transform can be performed only as long as
	all the other level populations can be arranged in pairs fulfilling the stopping condition  Eqn.~\eqref{eqn:bf-stab}.
\end{enumerate}
\end{theorem}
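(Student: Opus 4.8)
The plan is to work entirely on the diagonal (population) subspace. By full unitary controllability any state may be brought to diagonal form, and since $\Hlamb$ is assumed to commute with $H_0$ both drift terms can be switched off during the dissipative windows exactly as in Corollary~\ref{cor:Trotter-Decoup}. What then remains is the population propagator $R_T(t)=\unity+(\varepsilon(t)-1)\,\Gamma''$ of Eqn.~\eqref{eqn:bf-evol}, which on the terminal qubit couples the $2^{n-1}$ pairs $(\rho_{\vec a0},\rho_{\vec a1})$ through the single idempotent generator $\Gamma''$ and, being stochastic, preserves each pair's total weight $\rho_{ii}+\rho_{jj}=s$ while relaxing it monotonically along that line toward the equilibrium ratio $\rho_{ii}/\rho_{jj}=\bboltz$.

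For the single-qubit case~(1) I would exploit that both the relaxation trajectory and every $T$-transform of the pair live on the same line $\rho_{11}+\rho_{22}=s$: relaxation moves the first population monotonically toward $\tfrac{s\bboltz}{\bboltz+1}$, while a population swap $Q$ (available by unitary control) reflects it about the midpoint $s/2$. Composing the swap $Q$ with relaxation intervals then sweeps out the whole $T$-transform segment between $(\rho_{11},\rho_{22})$ and $(\rho_{22},\rho_{11})$, so every $T$-transform $\lambda\unity+(1-\lambda)Q$ is obtained, with the required times made explicit by the switching analysis leading to Eqn.~\eqref{eqn:switch-bf}. Because for finite $\bboltz$ the equilibrium satisfies $\tfrac{s\bboltz}{\bboltz+1}>s/2$, a trajectory started (after swapping, if necessary) below $s/2$ crosses the equal-population point in finite time, giving exact averaging; this finite-time crossing degenerates only at $\bboltz=1$ (infinite temperature), where the average coincides with the fixed point and is reached only asymptotically, which is precisely the stated caveat.

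For $n\ge2$ the obstruction in~(2) is that $R_T(t)$ acts on all $2^{n-1}$ pairs at once and at the same rate, so a $T$-transform confined to one target pair requires neutralising the relaxation induced on the remaining $2^{n-1}-1$ spectator pairs. I would neutralise each spectator pair by the swap-back construction of Theorem~\ref{Athm:transitivity}, swapping its two populations after the time $\tau_{kl}$ of Eqn.~\eqref{eqn:switch-bf} so that it returns exactly to its initial values; since unitary control fixes the pairing of the spectator levels only up to permutation, and since $\tau_{kl}$ lies in the admissible window $[0,\tau]$ precisely when the ratio obeys the stopping condition Eqn.~\eqref{eqn:bf-stab}, a clean single-pair $T$-transform is realisable exactly when the spectator levels can be arranged into pairs each satisfying $\bboltz^{-1}\le\rho_{kk}/\rho_{ll}\le\bboltz$. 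Running these neutralisations as the sequential pairwise schedule of Theorem~\ref{Athm:transitivity}, rather than in one simultaneous window, lets the target pair accumulate its transform while every spectator is restored.

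The hard part is the necessity half of~(2): one must rule out that a spectator pair violating Eqn.~\eqref{eqn:bf-stab} could nonetheless be compensated by some cleverer interleaving of permutations and further relaxation. Here I would invoke the thermalisation/no-return picture of Fig.~\ref{fig:thermalization}: a pair whose effective temperature satisfies $|T_q|<T_b$ sits in a no-return zone, so its relaxation over the window cannot be reversed by any swap, and the induced change cannot be offloaded onto the other pairs without disturbing them, forcing the net action away from a pure single-pair $T$-transform. Converting this entropy/monotonicity statement into a rigorous obstruction is the main technical step, whereas everything else reduces cleanly to the per-pair analysis already established in Sec.~\ref{sec:genLindblad}.
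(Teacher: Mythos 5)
Your proposal is correct and follows essentially the same route as the paper: part (1) by identifying the diagonal-subspace relaxation propagator of Eqn.~\eqref{eqn:bf-evol} composed with unitary swaps as a $T$-transform (the paper does this algebraically, solving for $\varepsilon(\tau(\lambda))$ in Eqn.~\eqref{eqn:bf-eps}; your geometric picture of sweeping the segment $\rho_{11}+\rho_{22}=s$ and crossing the midpoint in finite time for finite $\bboltz$ is equivalent, including the degeneration at $\bboltz=1$), and part (2) by swap-back neutralisation of the spectator pairs, which admits a switching time in $[0,\tau]$ precisely when the stopping condition Eqn.~\eqref{eqn:bf-stab} holds. The necessity half of (2) that you flag as the remaining technical step is not proven in the paper either: its proof of (2) is a single sentence asserting the sufficiency direction, and the no-return-zone/thermalisation picture you invoke appears only as informal motivation in the text preceding the theorem, so your proposal matches the paper's actual standard of rigour while being more explicit about where it stops.
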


\begin{proof}
Unitary permutations can always be used to arrange the population pair in a suitable location.

\begin{enumerate}
\item[(1)]
In a single qubit, one may formally identify Eqn.~\eqref{eqn:bf-evol} with a $T$-transform,
\begin{equation}
\left(
\I +(\varepsilon(\tau(\lambda)) -1) \Gamma''
\right)
\begin{pmatrix} \rho_{ii} \\ \rho_{jj} \end{pmatrix} =
\left((1-\lambda) \I +\lambda Q\right)
\begin{pmatrix} \rho_{ii} \\ \rho_{jj} \end{pmatrix}
\end{equation}
and solve for $\varepsilon$, yielding
\begin{equation}\label{eqn:bf-eps}
\varepsilon(\tau(\lambda))
= e^{-\gamma \tau(\lambda)}
=
1 -\lambda (1+\bboltz)
\frac{1-(\rho_{ii}/\rho_{jj})}
{\bboltz -(\rho_{ii}/\rho_{jj}) }
\,.
\end{equation}
Assume first that~\eqref{eqn:bf-stab} is fulfilled.
We may ensure that $\rho_{ii}/\rho_{jj} \le 1$ by doing a unitary
\SWAP{} if necessary, and
can always find a physical $0 < \varepsilon \le 1$ for any
$0 \le \lambda \le 1$.
If \eqref{eqn:bf-stab} is not fulfilled,
the $0 \le \lambda \le 1$ parameter interval can be divided into two subintervals, one
requiring a \SWAP{}, the other one not.
The dividing $\lambda$ point between the subintervals (corresponding to
$\rho_{\text{eq}}$)
can only be reached in the limit~$\gamma \tau \to \infty$.

\item[(2)] For $n$-qubit systems (with $n\geq 2$), the second assertion then follows 
	provided all passive population pairs can be ordered such
	that they simultaneously fulfill the stopping conditon Eqn.~\eqref{eqn:bf-stab}.
\end{enumerate}
\end{proof}

Generalising Example 1 from zero temperature to finite temperatures (in some analogy to algorithmic cooling~\cite{SMW05}), 
one can cool the maximally mixed state $\tfrac{1}{N}\unity_N$ to approximate the ground state by the following diagonal state
\begin{equation}\label{eqn:cool-bf}
\rhoalg := \frac{1}{Z} \diag\left(
b^{\frac{N}{2}}, b^{\frac{N}{2}-1}, b^{\frac{N}{2}-1}, b^{\frac{N}{2}-2}, b^{\frac{N}{2}-2}, \dots, b^k, b^k, \dots, b^1, b^1, b^0
\right)\,,
\end{equation}
where the partition function takes the form
$Z := 1 +2\frac{b -b^{N/2}}{1 -b} +b^{N/2}$.
One finds the following conservative inclusions for the reachable set at {\em finite temperatures} $0<T<\infty$
(while the limiting case $T\to 0$ is exactly settled by Theorem~1 and
the bit-flip analogue to $T \to \infty$ yet with a single Lindblad generator is settled by Theorem~2
\footnote{NB: Theorem 2 with a single Lindblad generator does not correspond to a bath with
$T \to \infty$ and two Lindblad generators, since the latter lacks the protected subspaces.
See the paragraph prior to Theorem~\ref{th:bath-T-trafo}.}).
\begin{theorem}\label{thm:reach-with-bath}
For both bosonic and fermionic baths of finite temperatures $0<T<\infty$ and any initial state $\rho_0\in\pos_1$, the following observations hold in $n$-qubit systems 
(otherwise satisfying the conditions of Theorems 1 and 2):
\begin{enumerate}
\item[(1)] From any initial state $\rho_0\in\pos_1$, the maximally mixed state $\tfrac{1}{N}\unity$ can be 
	reached by averaging.
\item[(2)] Regardless of the initial state, from the maximally mixed state $\tfrac{1}{N}\unity$ in turn, 
	at least a state of the purity of the one by algorithmic cooling, $\rho_{\rm alg}$ of Eqn.~\eqref{eqn:cool-bf}, 
	can be reached~\footnote{Test~1 below provides numerical evidence that one can indeed go beyond algorithmic cooling.
}.
\item[(3)] From $\rho_{\rm alg}$ (or the purest diagonal state $\rho_{\rm p}$ reachable) all those states
	$\rho\prec \rho_{\rm alg}$ (or $\rho\prec \rho_{\rm p}$) can be reached that can be obtained by a 
	sequence of $T$-transforms with each step fulfilling the stopping condition of Eqn.~\eqref{eqn:bf-stab}.
\end{enumerate}
\end{theorem}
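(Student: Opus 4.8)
The plan is to treat the three assertions separately, in each case first using full unitary controllability to diagonalise the state and then analysing the induced action on the vector of populations $\spec(\rho)$, on which the switchable bath coupling acts through the single-qubit propagator \eqref{eqn:bf-evol}. Throughout, the new ingredient relative to Theorems~1 and~2 is that at finite temperature the noise has no protected subspaces, so relaxation can be confined to (or undone on) a chosen pair of populations only when the remaining populations obey the stopping condition \eqref{eqn:bf-stab}; keeping track of this constraint is the crux of all three parts. For assertion~(1) I would diagonalise $\rho_0$ and then equalise its eigenvalues by a sequence of pairwise averagings. Each averaging is a $T$-transform with $\lambda=\tfrac12$, which by Theorem~\ref{th:bath-T-trafo}(1) is realisable exactly: since for finite positive temperature the fixed-point ratio is $\bboltz>1$, a pair prepared (by a unitary \SWAP) with population ratio below~$1$ relaxes \emph{through} its equalisation point, so stopping the evolution there averages it. Interspersing these averagings with permutations — choosing them, where possible, so that the $2^{n-1}$ simultaneously driven pairs share a common ratio and equalise together — drives all $N=2^n$ populations toward their common mean $1/N$. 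The observation making this consistent is that averaging only pushes ratios toward~$1$, which lies in the interior of $[\bboltz^{-1},\bboltz]$, so the stopping condition \eqref{eqn:bf-stab} needed to protect the bystander pairs is progressively satisfied and $\tfrac1N\unity$ is reached.

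Assertion~(2) I would obtain directly from the algorithmic-cooling discussion preceding this section: partner-pairing cooling is a special case of unitary control interspersed with relaxation by the bath generator, and starting from $\tfrac1N\unity$ it produces the diagonal state $\rhoalg$ of \eqref{eqn:cool-bf} with the bath's bias, cf.~\eqref{eq:algocool-delta}. Hence at least the purity of $\rhoalg$ is reachable from $\tfrac1N\unity$, and therefore, by part~(1), from any $\rho_0$; the footnoted numerics indicate that the true purest reachable diagonal state $\rho_{\rm p}$ may exceed this bound.

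For assertion~(3) I would combine the majorisation machinery of Theorem~2 with the finite-temperature realisability criterion. Given a target $\rho\prec\rhoalg$, the Hardy--Littlewood--P\'olya construction yields an explicit finite sequence of $T$-transforms carrying $\spec(\rhoalg)$ to $\spec(\rho)$; each such $T$-transform is implemented by relaxing a single population pair for the switching time \eqref{eqn:switch-bf} while permuting to protect the others, which by Theorem~\ref{th:bath-T-trafo}(2) is admissible precisely when the inactive populations can be arranged within $[\bboltz^{-1},\bboltz]$. Thus exactly those $\rho\prec\rhoalg$ are reachable whose $T$-transform decomposition can be ordered so that \eqref{eqn:bf-stab} holds at every step, which is the assertion. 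The converse inclusion — that one cannot leave the majorisation cone of $\rhoalg$ — follows as in Theorem~2, since controlled unitary dynamics combined with the bath relaxation remains completely positive, trace-preserving and unital.

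The main obstacle in all three parts is the bookkeeping forced by the stopping condition \eqref{eqn:bf-stab}: unlike pure bit-flip noise, finite-temperature relaxation cannot be frozen on the bystander populations, so every elementary averaging or $T$-transform must be scheduled together with a permutation that places the untouched populations into admissible pairs. Making assertion~(3) a sharp characterisation rather than the full cone $\rho\prec\rhoalg$ of Theorem~2 hinges on exactly this, and part~(1) likewise requires arguing that a valid such schedule exists all the way to $\tfrac1N\unity$; the hard part is therefore not any single relaxation step but the global feasibility of an admissible ordering of steps.
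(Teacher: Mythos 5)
Your parts (1) and (2), and the sufficiency direction of part (3), follow essentially the paper's own route: part (1) is pairwise averaging interleaved with permutations, with the key observation that already-averaged pairs sit at ratio $1$ in the interior of $[\bboltz^{-1},\bboltz]$ and can therefore be stabilised while the remaining pairs are processed; part (2) reduces to the implementability of partner-pairing algorithmic cooling by the bath-driven control system acting on diagonal states; and the forward direction of part (3) implements each stopping-condition-compliant $T$-transform via Theorem~\ref{th:bath-T-trafo}(2), which is precisely the paper's one-line proof (``direct consequence of point (2) in Theorem~\ref{th:bath-T-trafo}'').

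However, your part (3) contains a genuine error. You claim the reachable set from $\rhoalg$ is \emph{exactly} the set of states obtainable by admissible $T$-transform sequences, and you justify the converse inclusion by asserting that ``controlled unitary dynamics combined with the bath relaxation remains completely positive, trace-preserving and unital.'' At finite temperature $0<T<\infty$ the bath dissipator is \emph{not} unital: from Eqn.~\eqref{eqn:bath-liouv}, applying $\Gamma'$ to the vectorised identity leaves a population component $\tfrac{1}{\bboltz+1}-\tfrac{1}{\bboltz^{-1}+1}=\tfrac{1-\bboltz}{1+\bboltz}\neq 0$ whenever $\bboltz\neq 1$; its fixed point is the thermal state, not $\tfrac{1}{N}\unity$. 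Indeed, unitality would contradict part (2) of the very theorem you are proving, since cooling from $\tfrac{1}{N}\unity$ to $\rhoalg$ strictly increases purity, which no unital map can achieve. Correspondingly, the theorem is deliberately phrased only as a sufficiency statement (``can be reached''): the paper stresses immediately after the proof that Theorem~\ref{thm:reach-with-bath} gives but a \emph{conservative} estimate of the reachable set, and its Test~2 provides numerical evidence that one may go beyond $T$-transforms---directly contradicting your ``exactly.'' Dropping the converse claim (and the concluding remark that part (3) is a ``sharp characterisation'') leaves a proof that coincides with the paper's.
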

\begin{proof}
(1) From any initial state $\rho_0\in\pos_1$, the maximally mixed state $\tfrac{1}{N}\unity$ can be reached by
averaging, since by
Theorem~\ref{th:bath-T-trafo} we can always average any single pair of eigenvalues $(\rho_{ii},\rho_{jj})$.
All the eigenvalue pairs of $\rho_0$ can be averaged noting that
the pairs reaching the average can each be stabilised according to Eqn.~\eqref{eqn:bf-stab}.
After $n$ rounds of
averaging and sorting the max.\ mixed state is obtained, compare also the erasing task in Example~2 
of Appendix~\ref{sec:furtherresults} below.
\\[1mm]
(2) From $\tfrac{1}{N}\unity$ in turn, the diagonal state $\rho_{\rm alg}$ of Eqn.~\eqref{eqn:cool-bf} can be reached: 
In analogy to algorithmic cooling \cite{SMW05}
(generalising Example~1 of the main text) after the first round of equilibration,
half of the populations are (up to normalisation) proportional to $\bboltz$, the other half to~$1$.
This procedure of sorting and splitting degenerate eigenvalues is to be repeated more than $2^n$ times in $n$-qubit systems. 
Note that after sorting by descending magnitude, neighbouring eigenvalues always obey 
$b^{-1} \leq \tfrac{\rho_{ii}}{\rho_{i+1,i+1}}\leq b$.
Yet finally the inner pairs of eigenvalues end up degenerate as in Eqn.~\eqref{eqn:cool-bf}, because after
sorting by descending magnitude, pairs $(\rho_{ii},\rho_{i+3,i+3})$ cannot be stabilised, because for these pairs
the switching condition Eqn.~\eqref{eqn:switch-bf} is no longer fulfilled.
\\[1mm]
(3) Direct consequence of point (2) in Theorem~\ref{th:bath-T-trafo} above.
\end{proof}

It is important to note that Theorem~\ref{thm:reach-with-bath} gives but a {\em conservative}
estimate of the actual reachable set. It is limited by what we can readily prove rather than
what one can physically achieve: numerical evidence shows by {\bf Test~1} below that states purer than
the ones by algorithmic cooling can be obtained, and by {\bf Test~2} it also shows
that one may go beyond $T$-transforms---a probable common reason being that state transfer is not
limited to happen between diagonal states only.

\medskip
One may compare the performance of a switchable coupling between a heat bath and
(a single qubit of) the system
with algorithmic cooling by studying certain illuminating test cases numerically.
Our test system is the one described in the beginning of this section,
a linear chain of qubits with Ising-type nearest-neighbor couplings with local $x$ and~$y$ controls on each qubit.
The system is coupled to a bosonic heat bath.
The bath coupling is assumed tunable with $\gamma(t)\in[0, 5 J]$.
In this setting, we pursue the following two types of test questions:
\begin{enumerate}
\item[{\bf Test 1:}]
Starting from $\rho = \unity_N/N$, can one get closer to $\ketbra{0}{0}$, i.e.\ to a purer state,
using switchable noise than by using algorithmic cooling, which ends up in the state $\rhoalg$?

\item[{\bf Test 2:}]
Starting from $\rhoalg$, how close can one get to the state
$\rho_x = \frac{1}{Z} \diag\left(1, x, x, \ldots\right)$ by virtue of switchable noise, where we define
$x = (Z-1) / (N-1)$?
Here $\rho_x \prec \rhoalg$, and $\rho_x$ is the state obtained by averaging
all the eigenvalues of $\rhoalg$ except the highest one.
\end{enumerate}

\begin{table}[h!]
\caption{Noise-Switching Going beyond Algorithmic Cooling: Numerical Results}\vspace{2mm}
\begin{tabular}{l c c c c}
\hline\hline\\[-5mm]
 && \multicolumn{2}{c}{\rule[.1mm]{60mm}{1pt}} & \rule[.1mm]{60mm}{1pt} \\[1mm]
 && \multicolumn{2}{c}{\bf Test~1$^*$} & {\bf Test~2}\\[-2mm]
System & Boltzmann & \multicolumn{2}{c}{\rule[.1mm]{60mm}{1pt}} & \rule[.1mm]{60mm}{1pt} \\[1mm]
$\sharp$ qubits\phantom{XX} & factor $\bboltz=e^{\beta \hbar |\omega_n|}$ && Ground state population & Frobenius norm error $\delta_F$ \\[1mm]
\hline\\[-3mm]
1~qubit & $4$
& no &(going beyond $\rhoalg$ seems impossible)
& $0$\\[1mm]
\hline\\[-3mm]
2~qubits
& $2$
& yes & $0.45336 > \frac{4}{9} \approx 0.44444$ 
& $4.0604 \times 10^{-3}$\\
& $4$
& yes & $0.65308 > \frac{16}{25} \approx 0.64000$
& $6.5486 \times 10^{-3}$\\
& $8$
& yes & $0.80004 > \frac{64}{81} \approx 0.79012$
& $3.0545 \times 10^{-3}$\\[2mm]
\hline\\[-3mm]
3~qubits
& $2$
& yes & $0.35575 > \frac{2^4}{3^2\cdot5} \approx 0.35556$
& -\\[0.9mm] 
& $4$
& yes & $0.60833 > \frac{2^8}{5^2\cdot17} \approx 0.60235$
& -\\[0.9mm] 
& $8$
& yes & $0.78009 > \frac{2^{12}}{3^4 \cdot 5 \cdot 13} \approx 0.77797$
& -\\[0.9mm] 
& $16$
& yes & $0.88503 > \frac{2^{16}}{17^2 \cdot 257} \approx 0.88237$
& -\\[0.9mm]
& $32$
& yes & $0.94139 > \frac{2^{20}}{3^2 \cdot 5^2 \cdot 11^2 \cdot 41} \approx 0.93939$
& -\\[0.9mm]
\hline\hline\\[-5mm]
\multicolumn{5}{c}{{\scriptsize{*) NB: Test~1 compares {\em finite-time} optimal control to
algorithmic cooling with cooling intervals of {\em infinite length} for perfect exponential decay.}}
}\\
\end{tabular}
\end{table}


When comparing unitary control extended with switchable noise to algorithmic cooling,
it is important to note that our control schemes come with two advantages: (i) not only
can one achieve (slightly) higher purities than by algorithmic cooling, but even more
importantly (ii) cooling and unitary control may proceed {\em simultaneously}. One thus arrives
at shorter and more efficient control sequences as compared to those manual
(or paper-and-pen) approaches, where cooling and unitary population transfer have to be separated.

\subsection{Further Remarks on Lamb Shifts and Time Dependence of the Lindbladian}
\label{sec:Lindblad-w-commuting-Lamb}

For completeness, here we recollect results for two standard cases:
(a) the usual weak-coupling limit~\cite{BreuPetr02} to baths
covering the entire temperature range and 
(b) the singular coupling limit~\cite{GK76,FG76,BreuPetr02} which concomitantly invokes the 
high-temperature limit~\cite{FG76,GFKVS78}.
We finally comment on the distinction to the adiabatic scenario in~\cite{ABLZ12}.

(a)
In the standard approach to the weak-coupling limit case,
invoking the rotating-wave approximation (RWA)
results in a Lamb shift that commutes with the full system Hamiltonian~$H_S$,
as the jump operators $A(\omega)$
are by construction eigenoperators of~$H_S$.
Consequently, in a time-dependently controlled system both
the dissipator and the Lamb shift often depend on~$H_u(t)$ and thus
are time-dependent as well, typically in a nonlinear way, see for instance
Eqs.~(50) and~(55) in~\cite{ABLZ12}.
In the derivation of Eqn.~\eqref{eqn:LS-weak} above we used a rotating frame
generated by the local $Z$~terms alone to obtain simple, time-independent
jump operators (Eqn.~\eqref{eqn:A-omega}) of the required form.
This approximation is justified since we assume that both the local controls
as well as the $J$~coupling are weak compared to the qubit splittings.
The resulting Lamb shift,
\be
\Hlamb =
\kappa^2(t) \big(S(\omega_n)-S(-\omega_n)\big) \tfrac{1}{2} Z\sys{n}\,,
\ee
still commutes with the Ising-type drift Hamiltonian~$H_0$ of Eqn.~\eqref{eqn:H0-Ising}.

(b)
In the singular coupling limit~\cite{Davies74,GK76,FG76},
we also obtain a master equation of the Lindblad form~\cite[Ch.~3.3.3]{BreuPetr02}.
With a system-bath coupling
$H_\text{int} = \kappa(t) A \otimes B$,
we obtain a Lamb shift
\be
\Hlamb = 
		\kappa^2(t)\; S(0)\; A^2
\ee
and a Lindblad dissipator (take $V=A$)
\be
\Gamma(\rho)
= -\kappa^2(t) \;\gamma(0)\; \Big(A \rho A -\frac{1}{2}\{A^2, \rho\} \Big)
\ee
where for $S(0)$ and $\gamma(0)$
the frequency argument is kept for completeness, yet it is redundant once assuming singular
coupling i.e.\ a delta-correlated bath. This is because the singular coupling limit
(the physical realization of which requires infinite temperature) comes with strictly white noise
entailing $S$ and $\gamma$ are frequency-independent.
Unlike in the weak-coupling case, 
where $\Hlamb$ always commutes with the system Hamiltonian~$H_S$ 
(compare \cite[Eqn.~(3.142)]{BreuPetr02}),
in the singular coupling case the commutativity of the Lamb shift depends on the system:
In our model system consisting of a string of qubits as described in Sec.~\ref{sec:heatbath},
we have $A^2 = \I$ and consequently $\Hlamb$ effectively vanishes.

In contrast, in the adiabatic regime~\cite{ABLZ12} one has slow large-amplitude sweeps
which thus go beyond an RWA with respect to {\em a single constant carrier frequency}. 
This scenario does not relate to our work.

\section{Suggested Implementation by GMons with Switchable Coupling to Open Transmission Line}
\label{App:GMon}

Superconducting qubits have gone through various iterations of designs,
starting from intuitive ones with macroscopically distinct basis states
like charge and flux qubits, to robust designs like the
transmon~\cite{Koch_2007, Schreier_2008}.
Transmons are weakly anharmonic oscillators whose energy spectrum is insensitive
to slow charge noise---not only are they operated at a flat operating
point of the parametric spectrum, but the total bandwidth of charge
modulation is exponentially suppressed, so even higher-order noise contributions
are small. These advances have led to superior coherence properties~\cite{Paik_2011, DRAG}.
In practical implementations, one has to be aware that the non-computational energy
levels are relatively close by, which we consider to be a non-fundamental
practicality for now.

In the first generation of proposals for superconducting qubits, it
was highlighted that {\em in situ} tunable couplers would be a desirable
feature of the new platform. Still, for the next few generations of
experiments, fixed coupling that could be made effective or noneffective
by tuning the relative frequencies of the qubits was implemented.
This technique is also used in the popular circuit QED approach, where
the coupling element is a spatially distributed resonator and qubits
are frequency-tuned relative to it. Tunable couplers were implemented
for flux qubits~\cite{Mooji_1999,Orlando_1999,Makhlin_1999,Makhlin_2001,Plourde_2004,Hime_2006}.

The fast tunable-coupler-qubit design devised in the Martinis group is 
called GMon~\cite{Mart09, Mart13,Mart14}. 
They have implemented tunable couplings with rather similar parameters between qubits and
between transmission lines (one of them open)~\cite{Mart14}, and there is
no reason why the same should not work between a line and a qubit. 
Recently, the GMon has solved a lot of technological challenges,
rendering it an effective tunable-coupling strategy between qubits and resonators, 
which has also been achieved in~\cite{Hoffman_2011, Srinivasan_2011}. ---
In order to be more realistic, we will treat the GMons as effective qutrit systems:
note that here $H_0$ does not commute with $\Hlamb$
nor is $H_0$ diagonal, which poses no problem of principle, as discussed in App.~A, p3, under point (1).

\subsubsection*{GMon control system}

Here we assume a simple control scheme, where the chain of GMons is
driven by a single microwave signal, and individual GMons are tuned in and
out of resonance by adjusting their individual level splittings.
GMons can be approximated as three-level systems with the Hamiltonian (setting $\hbar=1$)
\be
H_k := \omega_k(t) a_k^\dagger a_k -\Delta_k \ketbra{2}{2}_k
+\Omega(t) \cos(\carrier t +\phi(t)) (a_k+a_k^\dagger),
\ee
where the level splittings $\omega_k(t)/(2\pi)$ are tunable in the range $3$--$10$~GHz,
the anharmonicity $\Delta_k/(2\pi) \approx 400$~MHz,
$\Omega(t)$ is the amplitude and $\phi(t)$ the phase of a microwave drive
at carrier frequency~$\carrier$,
and $a$~is the truncated lowering operator
\be
a := \begin{pmatrix}
0 & 1 & 0\\
0 & 0 & \sqrt{2}\\
0 & 0 & 0
\end{pmatrix}.
\ee
Our system consists of $n$~GMons in a line, with nearest-neighbor couplings given by
\be
H_0 := \pi J \sum_{k=1}^{n-1} \frac{1}{2} \big(a\sys{k}^\dagger a\sys{k+1}
+a\sys{k}a\sys{k+1}^\dagger \big),
\ee
where $J \approx 4 \times 40$~MHz. 
The final GMon is further coupled to an open transmission line which
functions as an ohmic bosonic bath at inverse temperature~$\beta$,
as described in Sec.~\ref{sec:bathmodel}.
The GMon-line coupling is
\be\label{eqn:H-int}
H_{\text{int}} := \kappa(t)\; A \otimes B
\ee
where $\kappa(t)$ is a tunable coupling coefficient,
$A = a\sys{n} +a\sys{n}^\dagger$, and
$B$ is the bath coupling operator.
The bath spectral density cutoff frequency $\omegacut/(2\pi) \approx 40$~GHz.
Concerning tunability,
the physical parameters providing the qubit-bath interaction are set
by the electrical parameters of the fabricated system such as the line
impedance and the coupling capacitance, which do not change under the
external controls. There can be a weak dependence of the inverse
capacitance and inductance matrices entering the Hamiltonian during
the time-dependent tuning of the qubits which only influences the
coupling to the line in the regime of ultra-strong coupling between
the elements, which is anyway incompatible with other assumptions in
this paper.

Another comment may be in order here: In a real environment,
relaxation occurs into the homogeneous half-open transmission line
terminated by a matched resistor, i.e., a resistor connected to the
line without reflections. This is a paradigmatic realization of an
ohmic heat bath. The information contained in the dissipated photons is
not used and hence is not straightforwardly observed.
In the low-temperature
case, qubits decay through spontaneous emission hence creating a
photon of exponential spatial-temporal shape that is absorbed by the
resistor. Detecting these photons would only be possible by an
elaborated setup from open-line circuit QED.

With these stipulations---and justifying the Markov-approximation in the next paragraph---we may 
derive the Lindblad equation by using the Born-Markov approximation in the weak-coupling limit
after transforming to the rotating frame generated by
\be
\Hrf = \sum_{k=1}^n \carrier\; a\sys{k}^\dagger a\sys{k} +H_\text{bath}.
\ee
As in Eqn.~\eqref{eq:speccorr}, after tracing over the bath we are left with
the Fourier transform $\speccorr(\omega)$ of the bath correlation function,
which is separated into its hermitian part $\gamma(\omega)$ and skew-hermitian part $S(\omega)$.

Since $[a^\dagger a, a] = -a$, we have
$e^{i\Hrf t} a\sys{k} e^{-i\Hrf t} = e^{-i \carrier t} a\sys{k}$.
Thus $H_0$ is unaffected by the rotating frame.
Our choice of~$\Hrf$ yields two jump operators,
$A(\carrier) = a\sys{n}$ and $A(-\carrier) = a\sys{n}^\dagger$.
Assuming that $\carrier$ is large enough for the RWA to hold, we
obtain the Lindblad equation for the system in the rotating frame,
with the Hamiltonian
\begin{align}
H_u(t) &=
H_0
+\sum_{k=1}^n \Big[(\omega_k(t)-\carrier) a\sys{k}^\dagger a\sys{k} -\Delta_k \ketbra{2}{2}_k
+\tfrac{1}{2} \Omega(t) \big(e^{i \phi(t)} a\sys{k} + e^{-i \phi(t)} a\sys{k}^\dagger\big)\Big]
+\Hlamb\\[2mm]
&=
H_0
+\sum_{k=1}^n \Big[(\omega_k(t)-\carrier) a\sys{k}^\dagger a\sys{k} -\Delta_k \ketbra{2}{2}_k
+\tfrac{1}{2} \Omega(t) \big(\cos(\phi(t)) (a\sys{k}+a\sys{k}^\dagger) +\sin(\phi(t))
i  (a\sys{k} -a\sys{k}^\dagger \big)\Big]
+\Hlamb
\end{align}
where the Lamb shift is
\be
\Hlamb =
\kappa^2(t) \; \Big(S(\carrier) a\sys{n}^\dagger a\sys{n} +S(-\carrier) a\sys{n} a\sys{n}^\dagger\Big)
\hat{=}
\underbrace{\kappa^2(t) \; \Big(S(\carrier) +S(-\carrier) \Big)}_{\lambda} a\sys{n}^\dagger a\sys{n},
\ee
and the Lindblad dissipator
\begin{align}
\label{eq:gmon-diss}
\Gamma(\rho) &=
-\kappa^2(t) \Big[
\gamma(\carrier)
\underbrace{
\left(
a\sys{n} \rho a\sys{n}^\dagger
-\tfrac{1}{2} \{a\sys{n}^\dagger a\sys{n}, \rho \}
\right)
}_{-\Gamma_{a}(\rho)}
+\gamma(-\carrier)
\underbrace{
\left(
a\sys{n}^\dagger \rho a\sys{n}
-\tfrac{1}{2} \{a\sys{n} a\sys{n}^\dagger, \rho \}
\right)
}_{-\Gamma_{a^\dagger}(\rho)}
\Big]\\[2mm]
&=
\underbrace{2 \kappa^2(t) \gamma(\carrier) (b+1)}_{\gamma}
\;\,
\underbrace{
\tfrac{1}{2}
\Big(
\frac{1}{\bboltz+1}
\Gamma_{a}
+\frac{1}{\bboltz^{-1}+1}
\Gamma_{a^\dagger}
\Big)}_{\Gamma'}(\rho)
\end{align}
with
the Boltzmann factor $\bboltz = e^{-\beta \hbar \carrier} < 1$.


We obtain for the ratio of the Lamb-shift magnitude $\lambda$ and the dissipation rate $\gamma$
\be
\frac{\lambda}{\gamma} =
\frac{S(\carrier)+S(-\carrier)}{2 \gamma(\carrier) (b+1)}
= -\frac{1}{4} \cdot \frac{1-b}{1+b} \cdot \frac{\omegacut}{\carrier}.
\ee

For a single-GMon system ($n=1$), in the absence of driving ($\Omega(t) = 0$)
the Hamiltonian~$H(t)$ is diagonal, and we obtain the
instantaneous decay rates (a.k.a. Einstein coefficients)
\begin{align}
\Gamma_{1\to 0} &= \bra{0}\,\Gamma(\ketbra{1}{1})\,\ket{0} =
\kappa^2 \gamma(\carrier)\\
\Gamma_{0\to 1} &= \kappa^2 \gamma(-\carrier)\\
\Gamma_{2\to 1} &= 2\kappa^2 \gamma(\carrier)\\
\Gamma_{1\to 2} &= 2\kappa^2 \gamma(-\carrier).
\end{align}

In practice a transmission line is not infinitely long. It can be
rendered effectively infinite by terminating the line by a matched
load~\cite{Pozar05} in the form of a resistor of the same impedance.
Cautiously using that resistor also helps to realize the different
coupling regimes of interest.
On the one hand, by using capactive coupling or impedance mismatch,
the coupling between qubit and line is typically weak, realizing the
weak coupling regime at low temperatures. On the other hand, one can
realize the high temperatures of the singular coupling regime by
mounting the terminating resistor at room temperature as shown in~\cite{Chen11}.

Moreover, note that the flux controls of the GMon maintain the symmetry of the Hamiltonian
and thus of the system-bath coupling up to the point where we get admixtures of the continuum, which can be avoided.
So this essentially means that the calibration of the energy splitting of the
$z$~controls needs to include the Lamb shifts, thus weakly altering the calibration curve.
Finally, in order to avoid limitations on the size of the local $z$-shifts, one may alternatively
introduce individual $x$ and~$y$ microwave controls on every qutrit.


\subsubsection*{Justifying the Born-Markov and the secular approximations}
\hspace{-3.5mm}\begin{minipage}[H!]{0.81\columnwidth}
The physical validity of the Lindblad equation is guaranteed as usual by
separation of the relative time scales in the
system. Let
$\tau_B$ denote the bath correlation time, moreover let
$\tau_S = {2\pi}/\carrier$ be the time scale set by the smallest transition frequency difference
of the rotating frame generator (which in our case is equal to the control carrier frequency) and
take $\tau_R = 1/\gamma_*$ as the time scale of the relaxation, while
$\tau_C = {2\pi}/{|\Omega|}$ shall be the control time scale.
The approximate values used in the GMon setting are given in the table on the right and 
are derived below.

\quad In this setting the Born-Markov approximation holds by $\tau_R \gg \tau_B$, while
the secular approximations hold by virtue of $\tau_R \gg \tau_S$
and $\tau_C \gg \tau_S$.
\end{minipage}
\hspace{5mm}
\begin{minipage}[H!]{0.15\columnwidth}
  \begin{tabular}{ll}
    \hline \hline\\[-4mm]
    time scale & GHz\\
    \hline
    $1/\tau_B$ & $550$\\
    $1/\tau_S$ & $4.8$\\
    $1/\tau_R$ & $\le 0.8$\\
    $1/\tau_C$ & $\le 0.8$\\
    $|J|$ & $0.16$\\
    \hline \hline
  \end{tabular}
\end{minipage}
\vspace{1em}

Let us derive an estimate for the bath correlation time~$\tau_B$.
The ohmic environment with an ultraviolet cutoff at~$\omega_B$ at temperature~$T$
has the following regimes for the decay of the correlation function:
\begin{center}
\begin{tabular}{l l l}
$t\leq1/\omega_B$: & quadratic decay & $1-(\omega_B t )^2$\\[1mm]
$1/\omega_B < t<1/T$: & power-law decay & $(\omega_B t)^{-\zeta}$\\[1mm]
$1/T\leq t$: & exponential decay & $e^{-\omega_B t}$
\end{tabular}
\end{center}

With the UV-cutoff only inducing power-law decay, it may be somewhat surprising to see Markov 
approximation hold at low temperatures. 
The rationale for this is that one invokes the Markov
approximation in the rotating frame, hence demanding $\alpha\omega_B\ll k_BT$ for the decay rate
(with $\alpha$ as a Legget-type Ohmian dissipation).
Thus for given system frequency and temperature, there is always a damping weak enough to make
the environment seem \/`hot\/'.

Taking the more conventional viewpoint that $\tau_B = \omega_B^{-1}$, the question is how to justify
this setting. In most cases, this is due to spurious elements, for example a capacitor shunting out the
tunable coupling inductor at high frequencies. Like any UV-cutoff it is subtle, yet
a safe assumption for $\omega_B$ to be twice the energy gap of nano-scale aluminum,
$\hbar\omega_B = 2\Delta_{Al} \approx 3.5 k_B T_c(Al) \approx k_B \cdot 4.2$~K,
yielding $\omega_B \approx 550$~GHz, and thus
$\tau_B = 1/\omega_B \approx 1.8$~ps.

In our bath model, Eq.~\eqref{eq:bathcoupling}, we use a cutoff of the shape
$(1+x^2)^{-1}$. These are called Drude cutoffs and are typical
characteristics of spurious reactive elements, capacitances or
inductances, shunting the coupling to the environment. While the
precise value of such spurious couplings depend on the fine-tuned
experiment design, a safe (thus somewhat pessimistic) assumption is to set
the spectral density cutoff $\omegacut$
to roughly half the value of $\omega_B$ derived above.
\color{black}


\subsubsection*{Choice of the rotating frame}

\noindent
In the standard derivation of the Lindblad equation in the weak coupling limit, the entire
system Hamiltonian
$H_S = H_0 +\sum_{k=1}^n H_k$
is included in the rotating frame generator~$\Hrf$.
Instead, to simplify the derivation we only include the dominant part~$\sum_{k=1}^n \carrier \: a_k^\dagger a_k$.
This introduces a small perturbation to the dissipator and the Lamb shift.
For the purposes of this study, we justify this choice of~$\Hrf$ as follows:
\begin{enumerate}
\item
To obtain local Lindblad operators, we want to keep $\Hrf$ local,
and thus leave the coupling Hamiltonian~$H_0$ outside~$\Hrf$. This is an acceptable
approximation since $\|H_0\| \ll \|H_S\|$.
\item
We want $\Hrf$ to be time-independent to keep the dissipator and
Lamb shift time-independent. Hence we must also leave the
time-dependent control terms outside~$\Hrf$.
In~\cite{djr2014} the authors present as an example a
case where the time-dependence of the control Hamiltonian results in a
strongly time-dependent dissipator.
We, on the other hand, use resonant control in which the controls
already have to be much weaker than the Hamiltonian they are resonant with
to get rid of the counter-rotating terms.
\item
The level splittings~$\omega_k$ are only
allowed to vary in a small range around~$\carrier$ to justify leaving
the $\omega_k(t)-\carrier$ terms outside~$\Hrf$.
In our control scheme we chose to have only one resonant control
signal (one carrier frequency).
Due to the anharmonicity~$\Delta_k$ the level splittings $\omega_k/(2 \pi)$ must be allowed
to vary $\pm 200$~MHz to enable resonance with
both the $0\leftrightarrow1$ and the $1\leftrightarrow2$ transitions.
Leaving the anharmonicity outside~$\Hrf$
introduces a small error in the Einstein coefficients
involving the state~$\ket{2}$.
\end{enumerate}

\subsubsection*{Numerical results for the GMon setting}

To illustrate the reachability properties of the experimental GMon control system
proposed above we present various numerically optimized results for a system of two GMons, namely
\begin{itemize}
\item[(1)]
initialisation from the maximally mixed state to the ground state,
\item[(2)]
preparation of a maximally entangled two-qutrit GHZ state from the maximally mixed state,
\item[(3)]
preparation of a mixed PPT-entangled two-qutrit state~\cite{Siewert2016,SES16}
from the maximally mixed state, and
\item[(4)]
erasure from the ground state to the maximally mixed state.
\end{itemize}
In these examples we use a Boltzmann factor of $b = 0.001$ which corresponds to a bath
temperature of \mbox{$T_b \approx 35$}~mK,
as well as the idealized zero-temperature case of~$b = 0$.
The sequence durations are given in terms of the inverse geometrized GMon-GMon coupling
$1/J \approx 6.25$~ns, were chosen to be as short as possible while still reaching a suitably low error,
and are all much shorter than the GMon coherence time $T_2\simeq 10 \mu s$.

\begin{table}[h!]
\caption{Summary of Reachability Results for GMons}\vspace{2mm}
\begin{tabular}{l c c c c c c c}
\hline\hline\\[-3mm]
Task & Boltzmann factor $\bboltz = e^{-\beta \hbar \carrier}$ & \quad Duration & \quad Frobenius-norm error $\delta_F$\\[2mm]
\hline\\[-4mm]
(1) Initialisation to ground state & $10^{-3}$ &$10/J$ & $2.63 \times 10^{-3}$\\
\quad -- " -- & 0 & $10/J$ & $1.70 \times 10^{-4}$\\
(2) {\sc ghz}-type state & $10^{-3}$ &$10/J$ & $3.15 \times 10^{-3}$\\
\quad -- " -- & 0 &$10/J$ & $6.99 \times 10^{-4}$\\
(3) {\sc ppt}-entangled mixed state~\cite{Siewert2016,SES16} & $10^{-3}$ &$3/J$ & $< 1 \times 10^{-4}$\\
\quad -- " -- & 0 & $3/J$ & $< 1 \times 10^{-4}$ \\
(4) Erasure to max.~mixed state & $10^{-3}$ &$2/J$ & $< 1 \times 10^{-4}$\\
\hline\hline
\end{tabular}
\end{table}

Note that in the finite-temperature case the
preparation of the pure states (1,2) is necessarily limited in fidelity,
whereas in the zero-temperature case
one can prepare a pure state exactly by exponential decay (i.e.\ in the limit $\gamma \tau \to \infty$
defining the closure of the reachable sets in Theorems~1 and 2).
In contrast, the mixed target states (3,4) that reside in the interior of the set of density operators~$\pos_1$
can be reached exactly in finite time.

\subsubsection*{Sum-up of experimental requirements}
Though the GMon setting described here is readily available and thus lends itself for implementing 
switchable noise, it should not be by no means exceptional, since the only requirements are
\begin{itemize}
\item a unitarily fully controllable system

\item a fast switchable coupling to a dominant noise source, with $1/\tau_R \gg \gamma_0$
	where $\gamma_0$ is the scale of potential unavoidable noise
\item inter-system couplings $J$ that dominate any unavoidable noise, $|J|\gg\gamma_0$,
  to ensure high fidelity, see also Example 1b in App.~\ref{sec:furtherresults}
  (NB: in the GMon setting $J\simeq 160$ MHz, while $1/T_2\simeq 1/10 \mu s = 100$ kHz)
\item a Born-Markov approximation ensured by a bath that equilibrates faster than the
    system relaxes under the switchable noise, $\tau_R \gg \tau_B$.
\end{itemize}
For computational reasons, one usually wants to invoke additional secular approximations
that ignore various rapidly rotating terms in the master equation.

\section{\grape Extended by Incoherent Controls}\label{sec:grape_incoherent}

\noindent
In state transfer problems the fidelity error function used in~\cite{PRA11} is valid
if the purity remains constant, or if the target
state is pure. In contrast to closed systems, in open ones these conditions need not hold. 
Thus here we use a full
Frobenius-norm based error function instead:
$
\delta_F^2 :=
\left\|X_{M:0}-X_\text{target}\right\|_F^2,
$
where $X_{k:0} = X_k \cdots X_1 \text{vec}(\rho_0)$ is the vectorised
state after time slice~$k$,
$X_k = e^{-\Delta t  L_k}$ is the
propagator for time slice~$k$ in the Liouville space, and
$L_k := i\hat H_u(t_k) + \gamma(t_k) \Gamma$.
The gradient of the error is obtained as
\begin{align}
\Partial{\delta_F^2}{u_j(t_k)}
&=
2 \Re \trace\Big((X_{M:0}-X_\text{target})^\dagger \Partial{X_{M:0}}{u_j(t_k)}\Big),\qquad \text{where}\\[3mm]
\Partial{X_{M:0}}{u_j(t_k)}
&= X_M \cdots X_{k+1}\Partial{X_k}{u_j(t_k)} X_{k-1}\cdots X_1 \text{vec}(\rho_0).
\end{align}

\noindent
The exact expression for the partial derivatives of~$X_k$ given in~\cite{PRA11}
requires $L_k$ to be normal, which does not hold in the general case of
open systems of interest here.
Instead we may use, e.g., the finite difference formula to compute
the gradient.
The optimal value of the difference 
is obtained as a trade-off between
the accuracy of the gradient and numerical rounding error, which starts to
deteriorate when the difference becomes very small.
A more preferable option may be to use the auxiliary matrix method~\cite{vLoan78,Kuprov15}
to compute the gradient via series expansions~\cite{DuistermaatKolk2000} based on the formal identity
\be
\exp \left(
\begin{pmatrix}
  -L_k & -i \hat H_j \\
  0 & -L_k
\end{pmatrix} \Delta t
\right)
=
\begin{pmatrix}
  X_k & \Partial{X_k}{u_j(t_k)}\\
  0 & X_k
\end{pmatrix}
\quad
\text{or analogously}
\quad
\exp \left(
\begin{pmatrix}
  -L_k & -\Gamma \\
  0 & -L_k
\end{pmatrix} \Delta t
\right)
=
\begin{pmatrix}
  X_k & \Partial{X_k}{\gamma(t_k)}\\
  0 & X_k
\end{pmatrix}\;.
\ee
An appropriate preconditioning was described in~\cite{DieciPapini2001}.

\section{Further Numerical Results}\label{sec:furtherresults}

\begin{figure}[!ht]
\hspace{7mm}{\sf (a)}\hspace{40mm}\sf{(b)} \hspace{40mm}{\sf (c)}$\hfill$\\
\includegraphics[width=0.24\columnwidth]{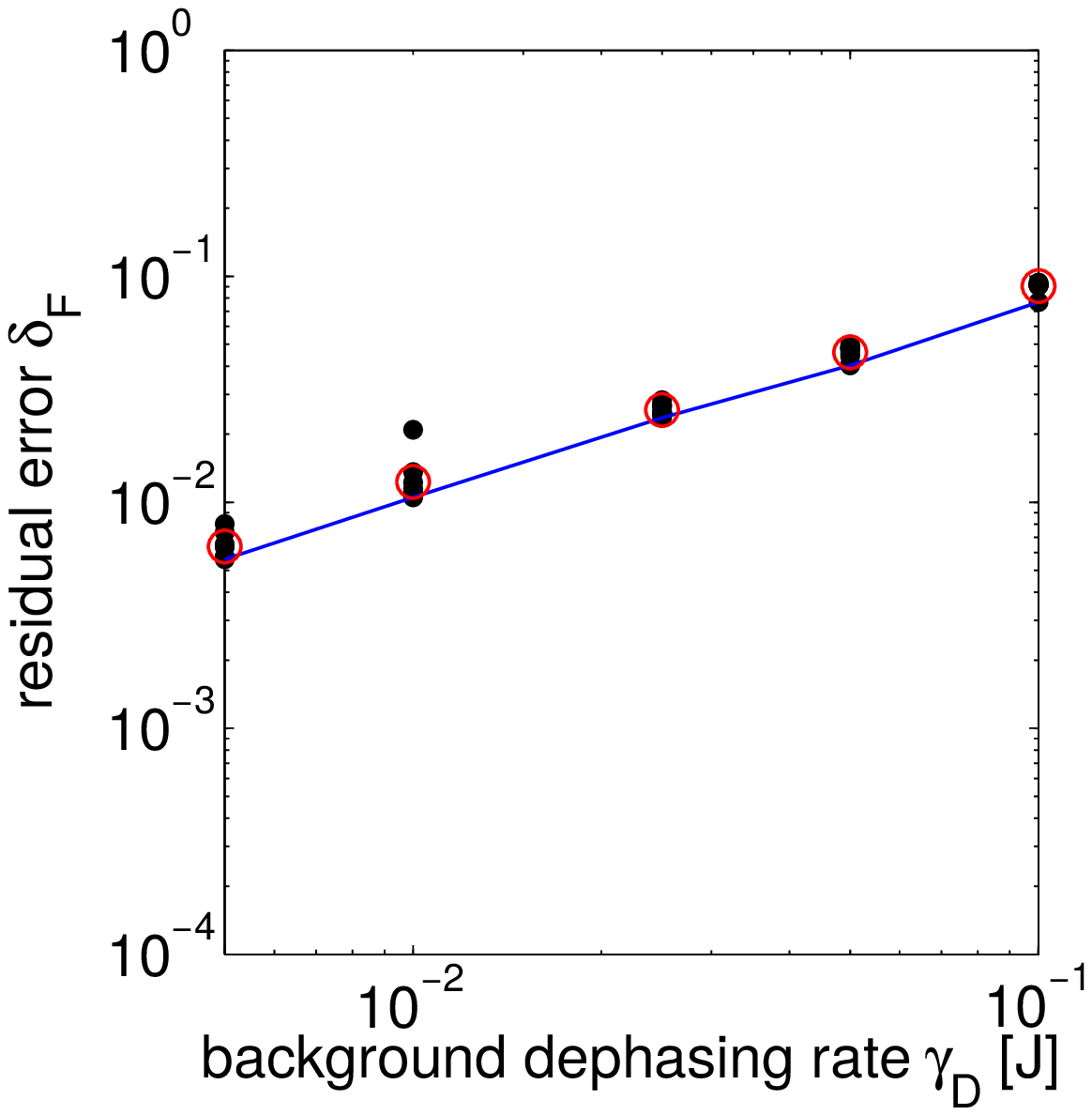}
\includegraphics[width=0.24\columnwidth]{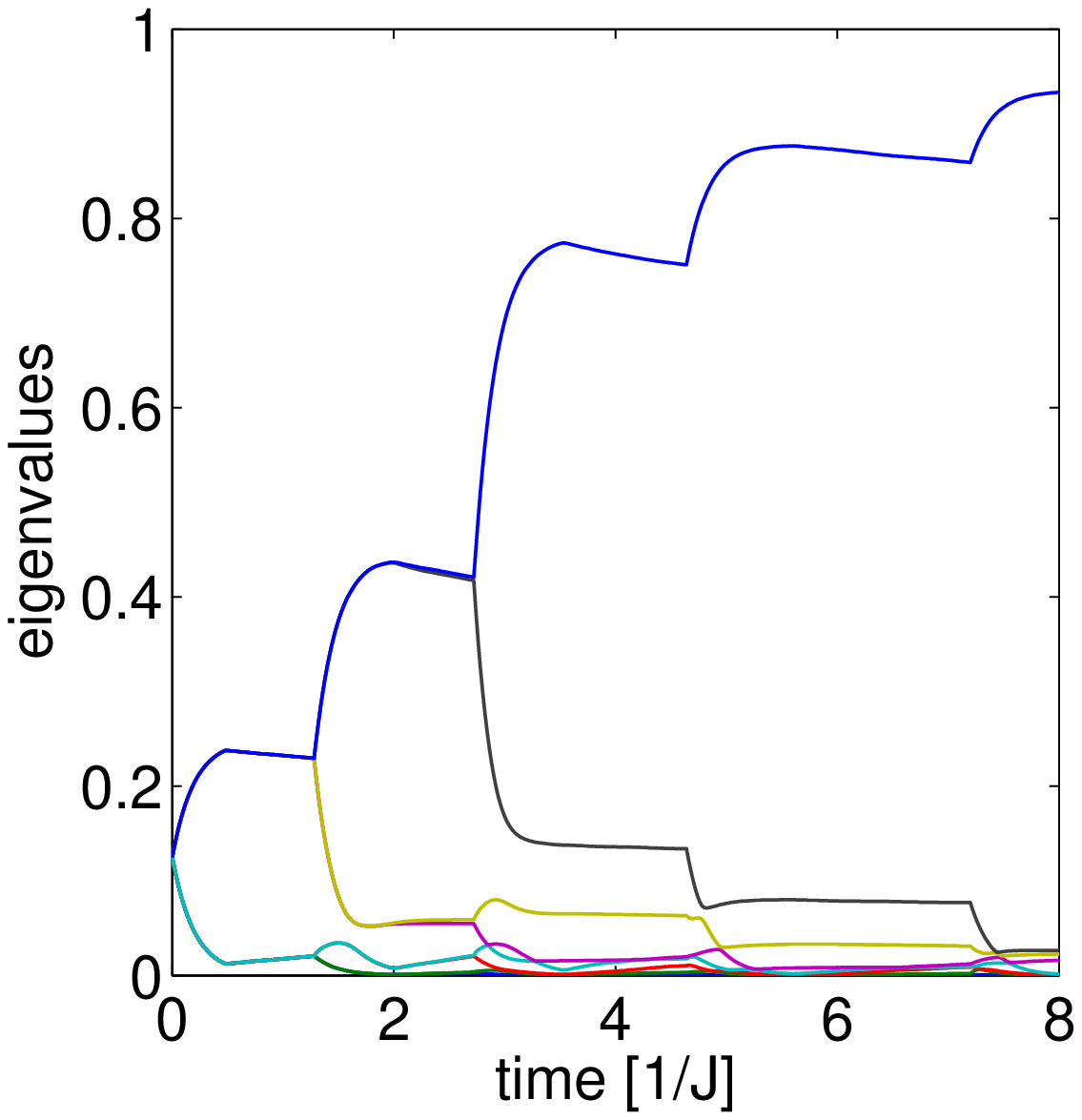}
\includegraphics[width=0.5\columnwidth]{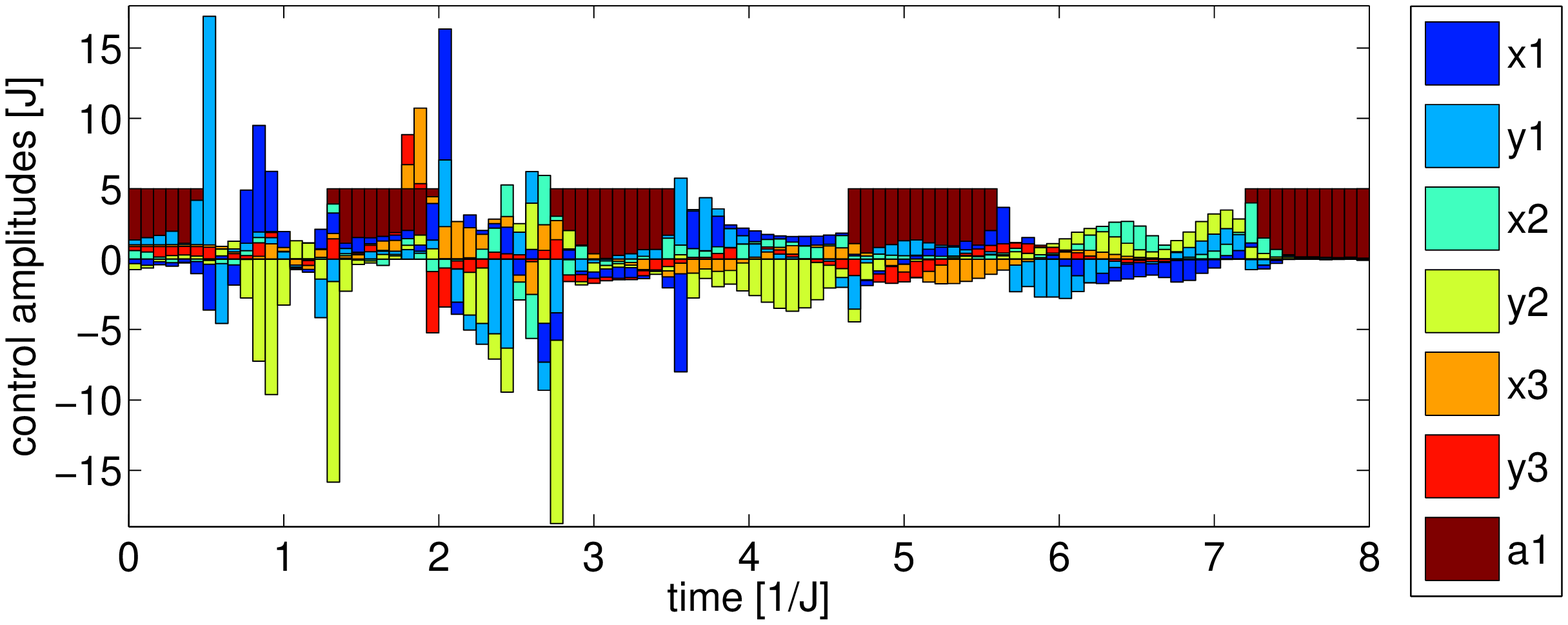}
\caption{\label{fig:th0-dep}
The same initialization-to-$\ket{000}$ task as in Fig.~1 of the main text, 
but with additional non-switchable background
dephasing noise on all the three qubits.
(a)~Error versus dephasing rate~$\gamma_D$ (with $\gamma_D/J\in\{0.005, 0.01, 0.025, 0.05, 0.1\}$)
for sequences of duration~$\tauf=8/J$.
The dots (red circles for averages) are individual numerical optimal control runs with 
random initial sequences.
(b)~Evolution of the eigenvalues under the best sequence for the strongest background noise ($\gamma_D = 0.1 J$)
leading to the zero-state with a considerably low residual error of $\delta_F\approx 0.077$.
This sequence (c) shows five relaxative periods with maximal noise amplitude 
on qubit one ($\gamma_{a1}$) for transforming eigenvalues.
}
\end{figure}

Here we present some further numerical optimization results for
the model Ising control system to complement those in the main paper.

{\bf Example~1b.}
Interestingly, the initialisation task of Example~1 in the main text can still be accomplished
to a good approximation in the presence of unavoidable constant dephasing noise on all the
three qubits.
This is shown in Fig.~\ref{fig:th0-dep} for a range of dephasing rates reaching from
$1\%$ to $20\%$ of the coupling constant. Though the dephasing does not affect the 
evolution of diagonal states, it interferes with the \iSWAP{}s needed to
permute the eigenvalues. For $\gamma_D=0.2~J$, numerical optimal control suggests the
sequence Fig.~\ref{fig:th0-dep} (c) with five dissipative steps and increasing time intervals for the
\iSWAP{}s.

\medskip
{\bf Example~2}.
Here we consider erasing the pure initial
state~$\ket{00\ldots0}$
to the maximally mixed state~$\rho_{\text{th}}$ by controlled bit-flip noise of Eqn.~(3) 
to illustrate the scenario of Thm.~\ref{thm:majorisation}.
For $n$~qubits, one may use a similar $n$-step protocol as in Example~1,
this time approximately erasing each qubit to a state proportional to~$\unity$.
Again one finds that the residual error $\delta_F$  is minimal for equal~$\tau_q$ to give
$
\delta_{F}^2(\expfactorbitflip)
= \tfrac{1}{2^n} \big( \left(1 +\expfactorbitflip^2\right)^n -1\big),
$
where $\expfactorbitflip:= e^{-\gamma_* \tau_q}$.
This yields
\be
\label{eq:0thest}
\tauf_b = \binom{n}{2}\tfrac{1}{J} -\tfrac{n}{2\gamma_*} \ln \big((2^n \delta_{F}^2 +1)^{1/n} -1\big).
\ee
Once again Fig.~\ref{fig:0th}(a) shows that numerical optimal control finds much faster solutions than 
this simplistic protocol.
To better illustrate the qualitative features of the solutions,
we use a weaker noise than in the other examples, with $\gamma_* = 2.5 J$.
Consequently the noise amplitude tends to be maximised throughout the optimised sequence
with the unitaries fully parallelised, as shown in the example
sequence (c), and reflected in the eigenvalue flow (b).
This works so well because
$\rho_{\text{th}}$ is the unique state majorised by every other state,
and thus all admissible eigenvalue transfers lead towards the goal.

The advantage of optimal control -based erasure becomes evident when 
comparing it to free evolution:
Pure bit-flip noise on one qubit (without coherent controls) would just 
average pairs of eigenvalues once if the free evolution
Hamiltonian is a mere Ising-$ZZ$ coupling, which commutes with the initial state.
Hence free evolution does not come closer to the maximally mixed state than  $\delta_F\approx 0.61$
and only by allowing for unitary control, erasure becomes feasible for the Ising chain.
%
%

\begin{figure}[!ht]
\hspace{10mm}{\sf (a)}\hspace{40mm}\sf{(b)} \hspace{40mm}{\sf (c)}$\hfill$\\
\includegraphics[width=0.20\columnwidth]{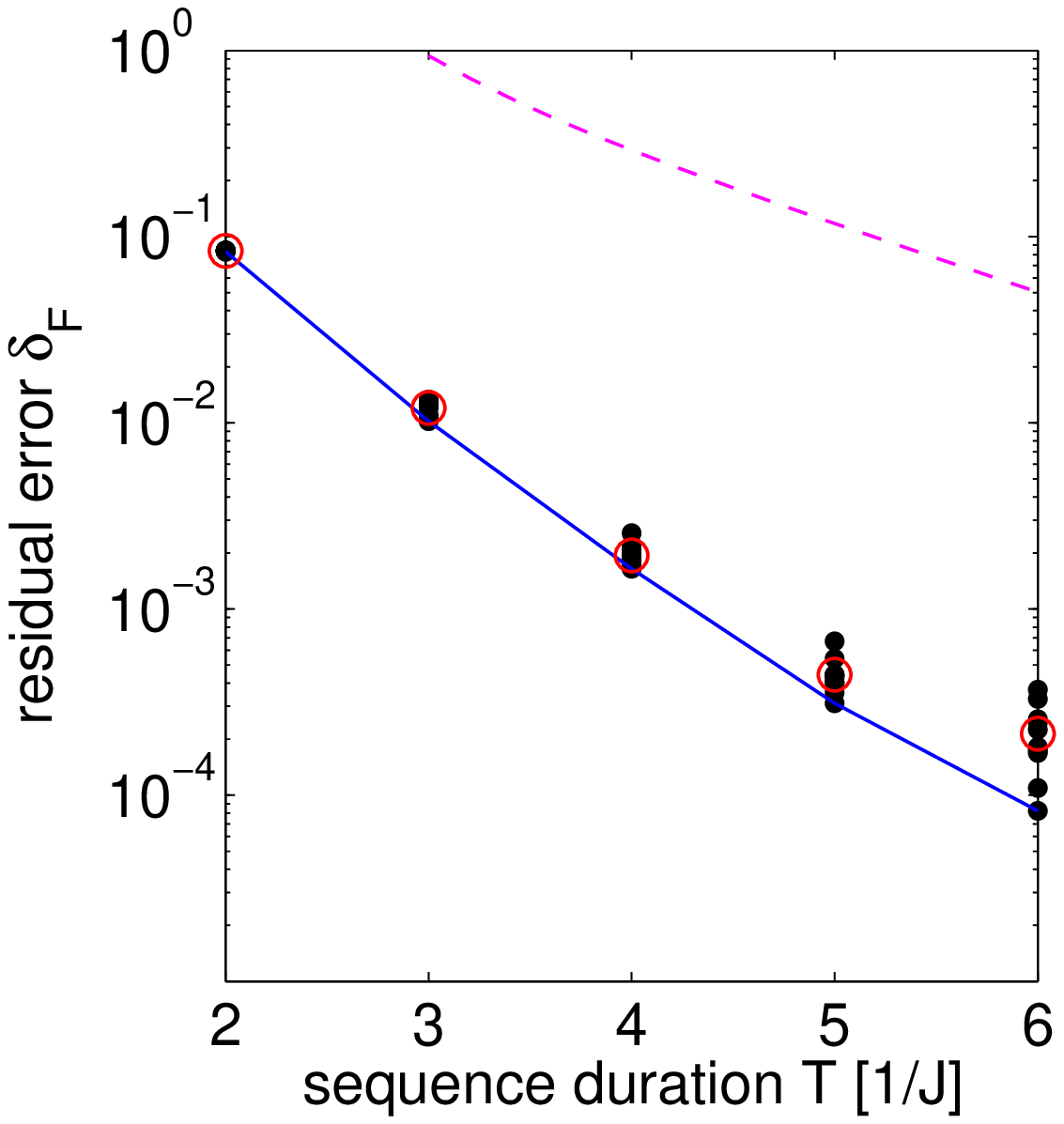}
\includegraphics[width=0.20\columnwidth]{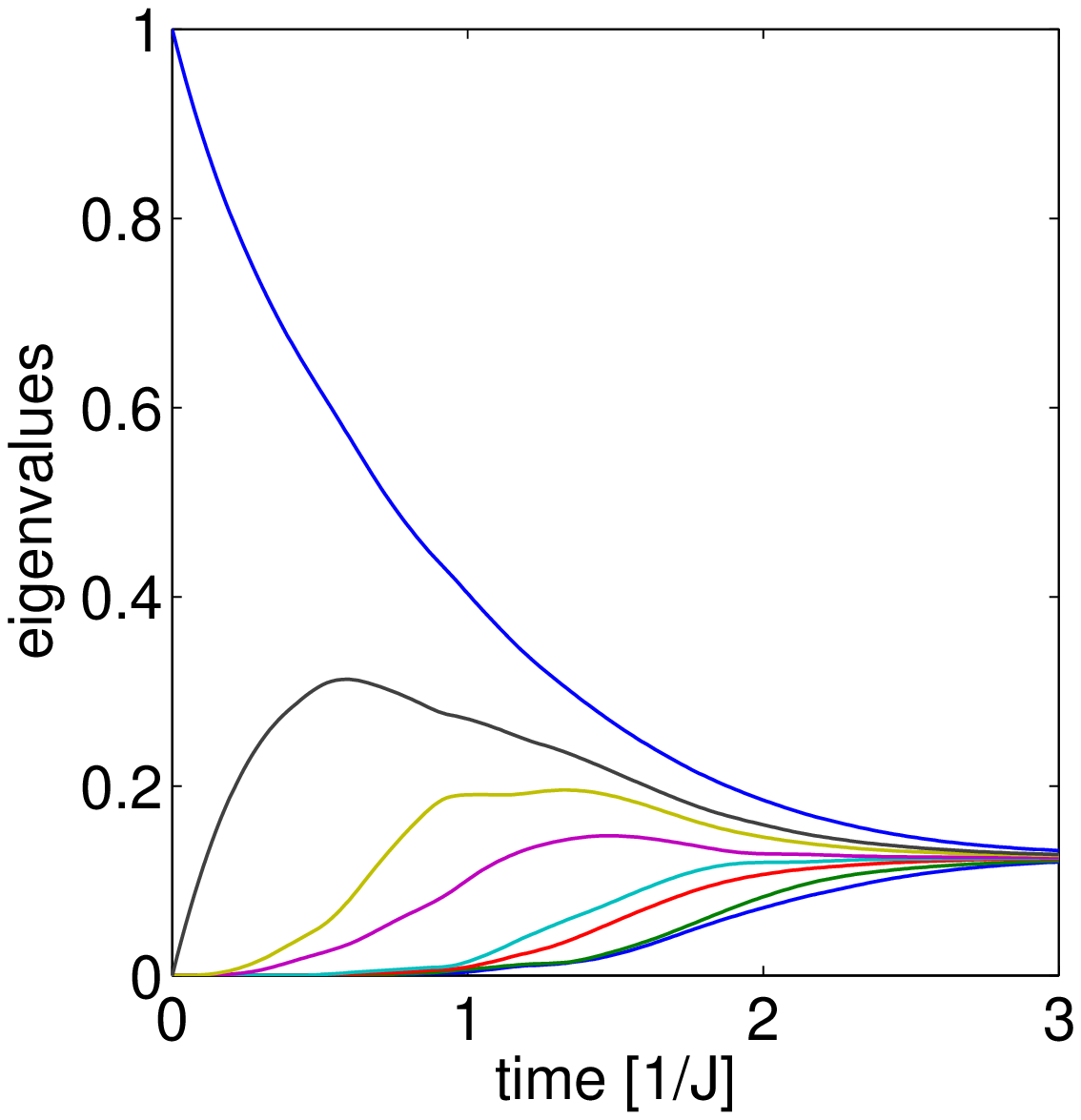}
\includegraphics[width=0.45\columnwidth]{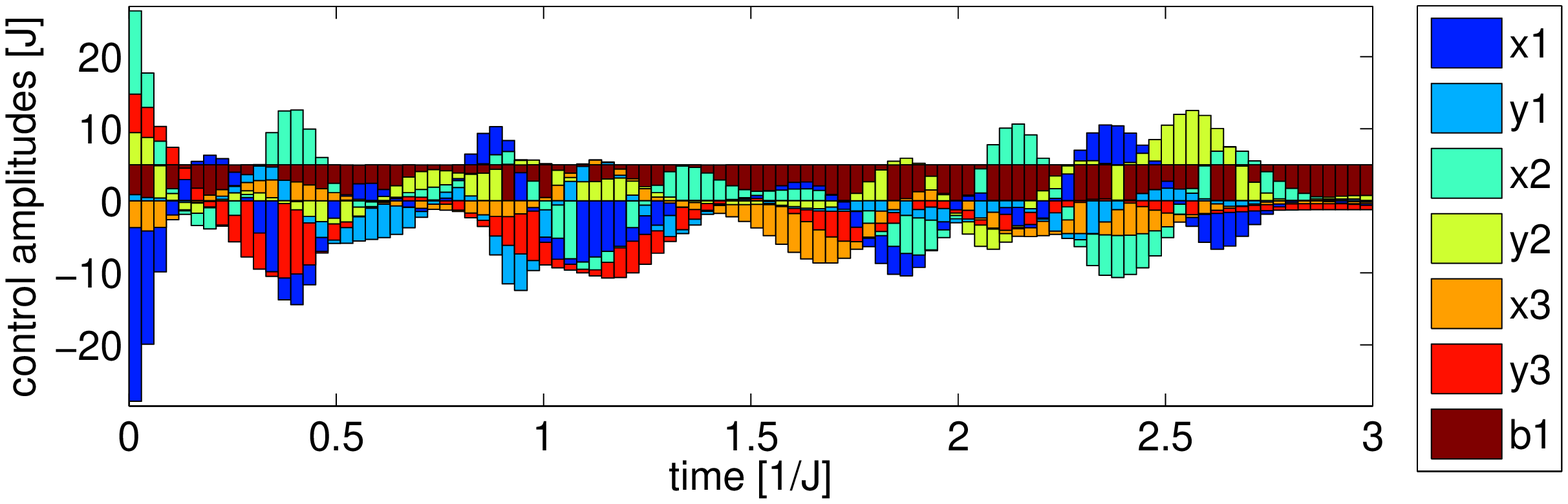}
\caption{\label{fig:0th}
Transfer from the zero-state~$\ket{000}$ to the maximally mixed state~$\rho_\text{th}=\frac{1}{8}\unity$
in a \mbox{3-qubit} Ising-$ZZ$ chain
with controlled bit-flip on qubit one
and local $x,y$-pulse controls on all 
qubits as in {\bf Example~2}.
(a)~Error versus total duration~$\tauf$, with the dashed line as the upper bound from Eqn.~\eqref{eq:0thest}.
Dots (red circles for averages) denote individual numerical optimal control runs with 
random initial sequences.
(b)~Evolution of the eigenvalues under the controls of the best of the
$\tauf=3/J$ solutions.
The corresponding control sequence (c) shows that the noise is always
maximised, and the unitary actions
generated by ($u_{x\nu}, u_{y\nu}$) are fully parallelised with it.
}
\end{figure}

\bigskip
{\bf Example~5}
The final example addresses entanglement generation in a system
similar to the one in~\cite{BZB11}. It consists of
four trapped ion qubits coherently controlled by lasers. 
On top of individual local $z$-controls ($u_{z1},\dots,u_{z4}$) on each qubit, one can
pulse on all the qubits simultaneously by the joint $x$ and $y$-controls
$F_\nu:=\tfrac{1}{2}\sum_{j=1}^4 \sigma_{\nu j}$ with $\nu=x,y$. 
In contrast to an experimental implementation by discrete gates, in our formal 
simplification of the model system, here we also allow 
the quadratic terms $F^2_\nu:=(F_\nu)^2$ to be pulsed continuously together
with the other coherent and incoherent controls. 
All the control amplitudes are expressed
as multiples of an interaction strength $a$.
In contrast to~\cite{BZB11}, where the protocol resorts to an ancilla qubit to be 
added  (following~\cite{VioLloyd01}) for a {\em measurement-based
circuit on the $4+1$ system}, here we do {\em without the ancilla qubit} by
making just the terminal qubit subject to controlled amplitude-damping 
noise with strength $\gamma_{a1}$,
to drive the system from the high-$T$ initial state
$\rho_{\text{th}}:=\tfrac{1}{2^n}\unity$ to the pure entangled target state
$\ket{\text{GHZ}_4} = \tfrac{1}{\sqrt{2}}(\ket{0000}+\ket{1111})$.
As shown in Fig.~\ref{fig:blatt}, the
optimised controls use the noise with maximal amplitude over its entire
duration interrupted just by two short periods of purely unitary control.
%
\begin{figure}[!ht]
\hspace{1.5mm}{\sf (a)}\hspace{95mm}{\sf (b)\hspace{55mm}}$\hfill$\\[0mm]
\includegraphics[width=0.47\columnwidth]{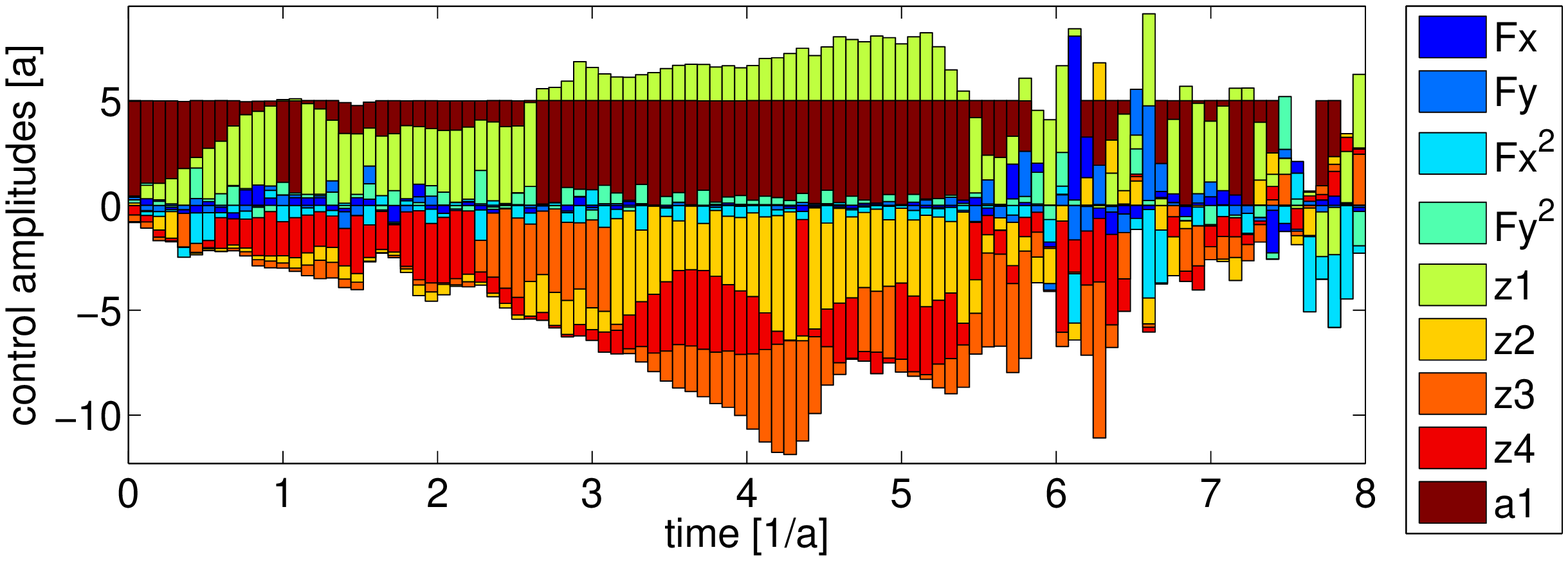}
\hspace{5.5mm}\includegraphics[width=0.47\columnwidth]{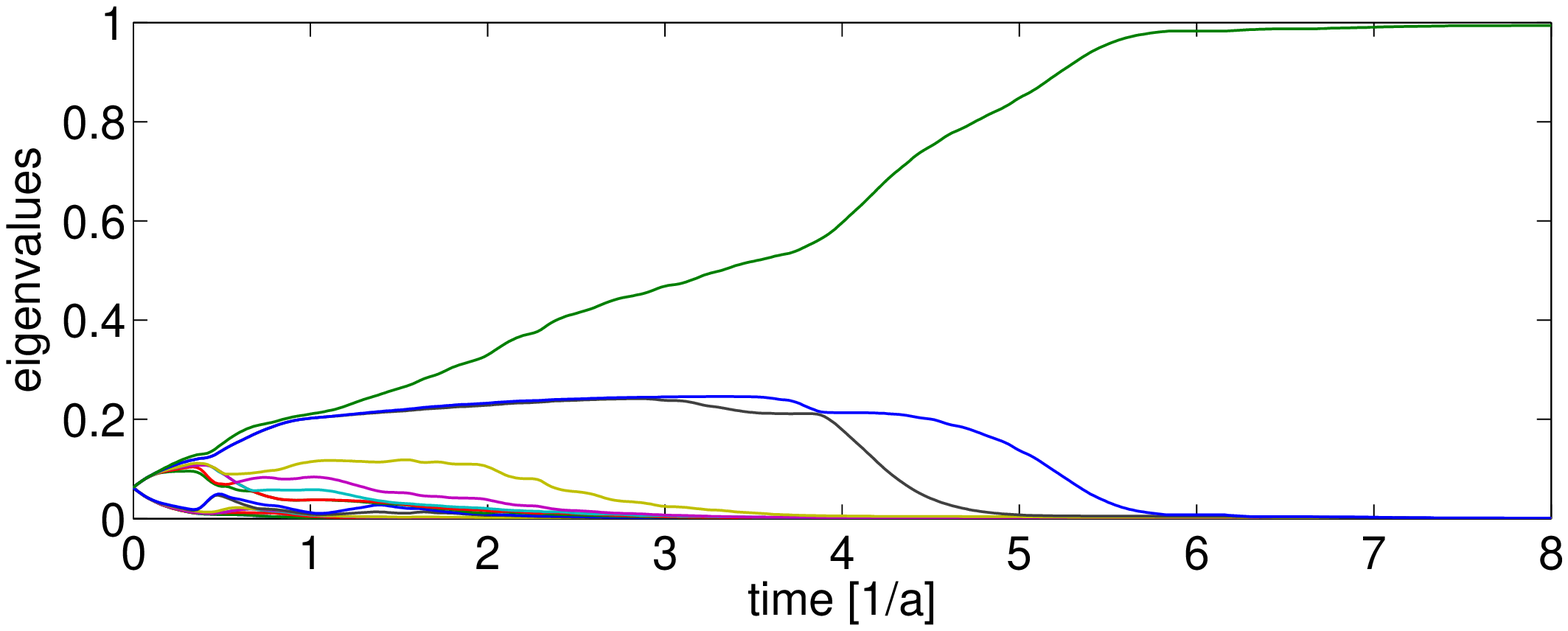}\\[7mm]
\hspace{1.5mm}{\sf (c)}$\hfill$\\[-6mm]
\raisebox{0mm}{\hspace{-3mm}\includegraphics[width=0.28\columnwidth]{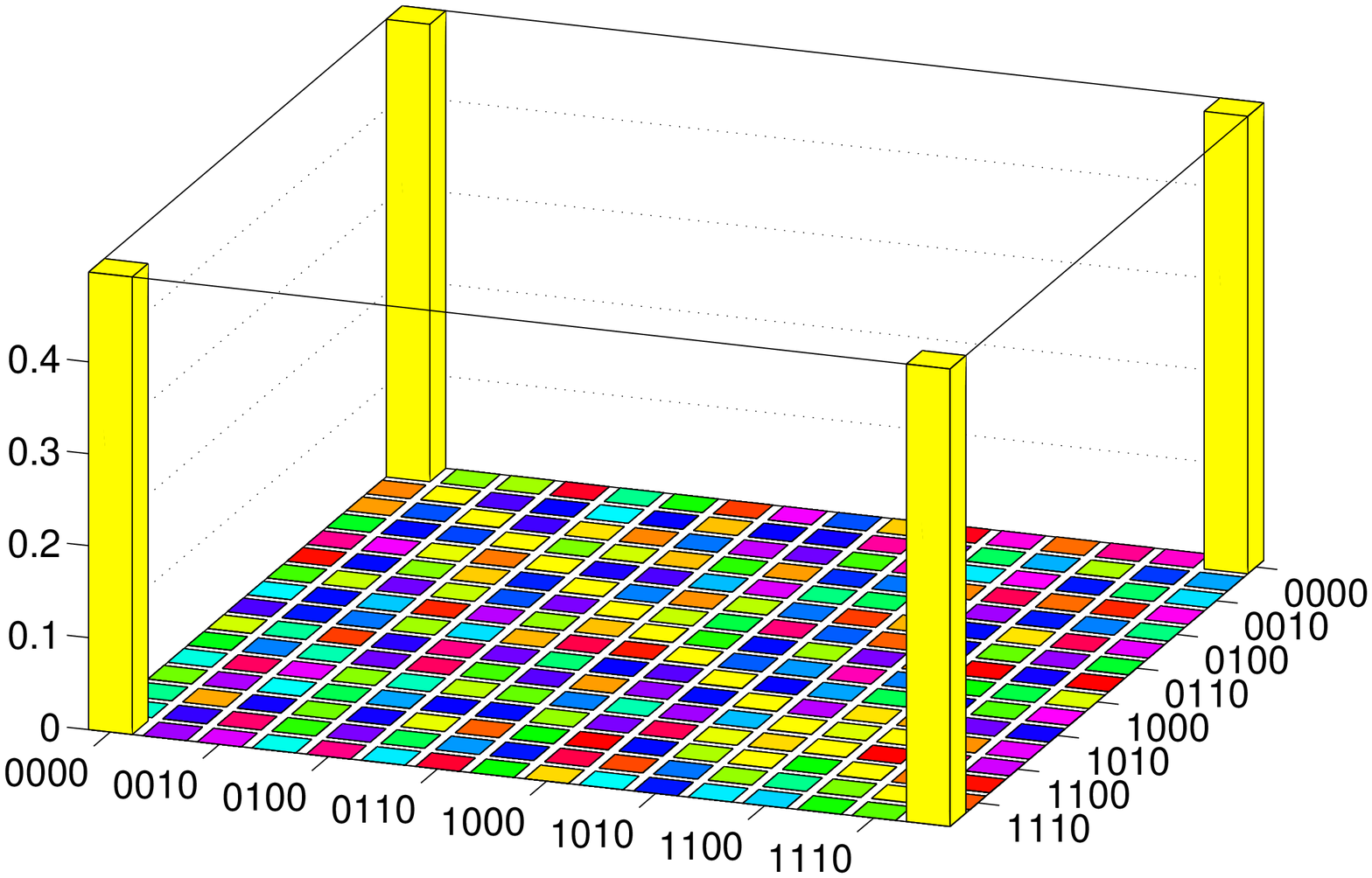}}
\hspace{20mm}
\raisebox{-0mm}{\hspace{3mm}\includegraphics[width=0.28\columnwidth]{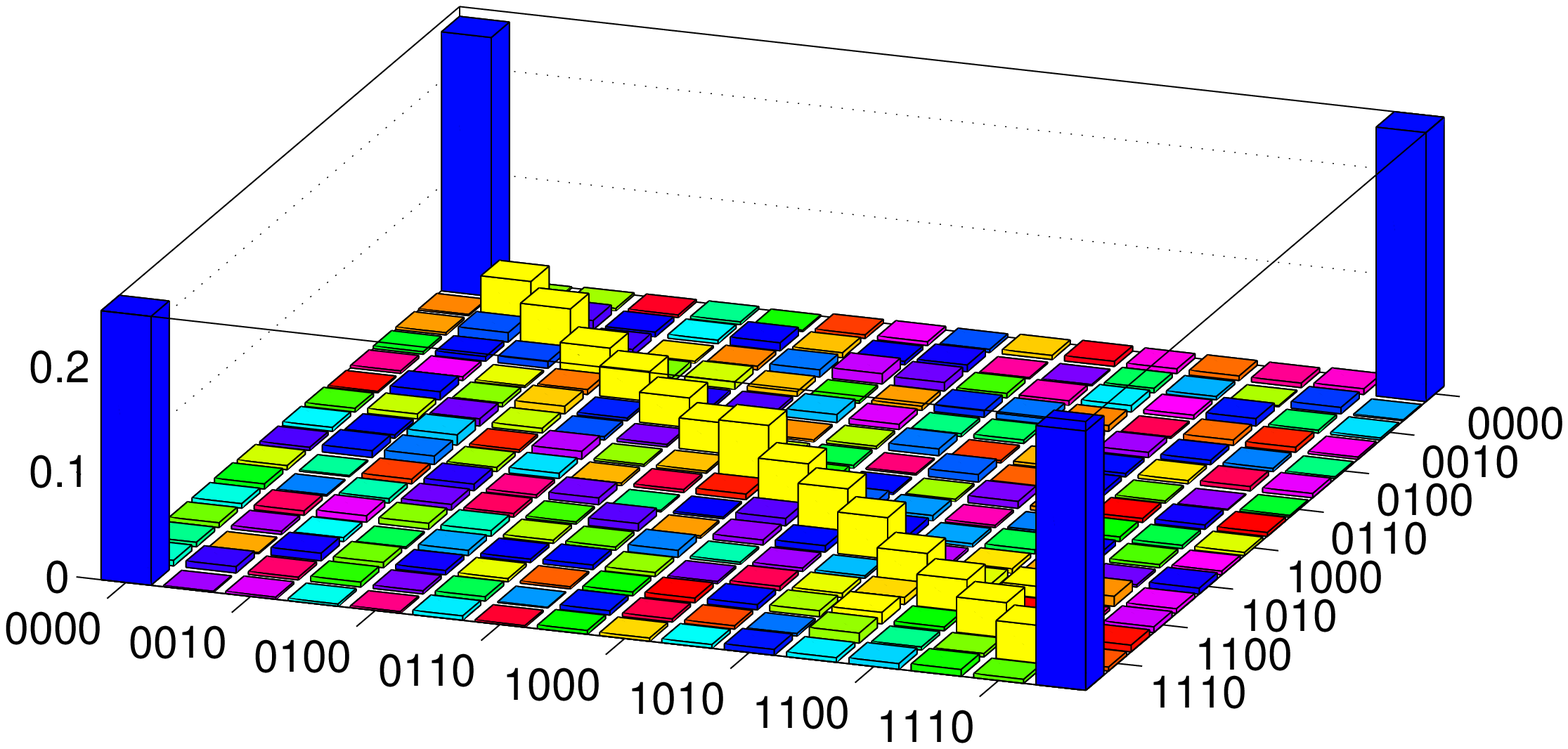}}\\
\raisebox{2mm}{\scriptsize \hspace{52mm} error $\times 100$}\\[0mm]
\caption{\label{fig:blatt}
State transfer from the high-$T$ state to the four-qubit GHZ state in the
ion-trap system of the formal {\bf Example~5} similar to~\cite{BZB11}. By
controlled amplitude damping on one qubit, one
can do without closed-loop measurement-based circuits involving an additional ancilla qubit
as required in~\cite{BZB11,VioLloyd01}.
Our  sequence (a) drives the system
to the state (c), which differs from the target state $\ket{\text{GHZ}_4}$
by an error of $\delta_F\approx 5\times10^{-3}$.
The time evolution of the eigenvalues (b) illustrates parallel action
on all the eigenvalues under the sequence.
}
\end{figure}

\medskip

Finally the difference between optimising amplitude-damping non-unital transfer (as in Thm.~\ref{thm:transitivity})
and bit-flip unital transfer (as in Thm.~\ref{thm:majorisation}) becomes evident:
In the {\em non-unital case}, transitive action on the set of all density operators clearly 
helps to escape from suboptimal intermediate control sequences during the optimisation.
Yet in {\em the unital case}, the majorisation condition 
$\rho_{\rm target}\prec\rho(t)\prec\rho_0$ for all $0\leq t\leq \tauf$
and the boundary conditions $\rho(0)=\rho_0$, $\rho(\tauf)=\rho_{\rm target}$ (at worst for \mbox{$\gamma_* \tauf\to\infty$})
explain potential algorithmic traps: 
one may easily arrive at an intermediate state $\rho_m(t)\prec\rho_0$ that comes closer
to the target state, but will never reach it as it fails to meet
the reachability condition
$\rho_{\rm target}\prec\rho_m(t)$.

\clearpage
\newpage

\section{Scheme for Constructing Majorized Diagonal States Following Hardy, Littlewood, and P{\'o}lya}\label{sec:HLP}

The work of Hardy, Littlewood, and P{\'o}lya~\cite{HLP34} (HLP) can be translated into a {\em constructive}
scheme ensuring the majorisation condition
$\rho_{\rm target}\prec\rho(t)\prec\rho_0\;\text{for all $0\leq t\leq \tauf$}$
to be fullfilled for all intermediate steps. Let
the initial and the target state be given as diagonal vectors with the eigenvalues of the respective density 
operator in descending order, so
$\rho_0=:\diag(y_1, y_2, \dots, y_N)$ and $\rho_{\rm target}=:\diag(x_1, x_2, \dots, x_N)$.
Following~\cite[p32f]{MarshallOlkin}, fix $j$ to be the largest index such that $x_j<y_j$ and let
$k>j$ be the smallest index with $x_k>y_k$. Define $\delta:=\min\{(y_j-x_j),(x_k-y_k)\}$ and
$\lambda:=1-\delta/(y_j-y_k)$. This suffices to construct
\begin{equation}
y':=\lambda y + (1-\lambda) Q_{jk}\; y 
\end{equation}
satisfying $x\prec y' \prec y$. Here the pair-permutation $Q_{jk}$ interchanges the coordinates 
$y_k$ and $y_j$ in $y$.
So  $y'$ is a \mbox{$T$-transform} of $y$, and Ref.~\cite{MarshallOlkin} shows that by $N-1$ successive steps of
$T$-transforming and sorting, $y$ is converted into $x$.
Now the $T$-transforms $\lambda\unity+(1-\lambda)Q_{jk}$
can actually be  brought about by {\em switching on the bit-flip noise} according to 
Eqn.~\eqref{eqn:T-trafo} for a time interval of duration
\begin{equation}\label{eqn:tau-jk}
\tau_{jk}:= - \tfrac{1}{\gamma}\,\ln |\,1-2\lambda\,|\;.
\end{equation}

With these stipulations (and for simplicity assuming a diagonal drift plus Lamb-shift Hamiltonian $H_0+H_{LS}$
to avoid Trotterization as in Corollary~\ref{cor:Trotter-Decoup})
one obtains an iterative analytical scheme for
transferring any $\rho_0$ by unitary control and switchable bit-flip noise on a terminal qubit
into any $\rho_{\rm target}$ satisfying the reachability condition
$\rho_{\rm target}\prec\rho_0$.

\bigskip

\begin{minipage}[H!]{.95\columnwidth}
\begin{center}
\begin{tabular}{c c l}
\hline\hline\\[-3mm]
&& {\bf Scheme for Transferring Any $n$-Qubit Initial State $\rho_0$ into Any
Target State $\rho_{\rm target}\prec\rho_0$ } \\ 
&&{\bf by Unitary Control and Switchable Bit-Flip Noise on Terminal Qubit:}\\[1mm]
\hline\\[-2mm]
&(0) & switch off noise to $\gamma=0$, diagonalise target $U_x\rho_{\rm target}U^\dagger_x=:\diag(x)$
		to obtain vector of eigenvalues \\ 
	&& in descending order $x=(x_1, x_2, \dots, x_N)$; keep $U_x$;\\
&(1) & apply unitary evolution to diagonalise $\rho_0$ and set $\tilde{\rho}_0=:\diag(y)$;\\
&(2) & apply unitary evolution to sort $\diag(y)$ in descending order $y=(y_1, y_2, \dots, y_N)$;\\
&(3) & determine index pair $(j,k)$ by the HLP scheme (described in the text above);\\
&(4) & apply unitary evolution to permute entries $(y_1,y_j)$ and $(y_2,y_k)$ of $y$, so $\diag(y)=\diag(y_j,y_k,\dots)$;\\
&(5) & apply unitary evolution $U_{12}$ of Eqn.~\eqref{eqn:protect} to turn $\rho_y=\diag(y)$ into protected state;\\
&(6) & switch on {\bf bit-flip noise on terminal qubit}  $\gamma(t)=\gamma$
		for duration $\tau_{jk}$ of Eqn.~\eqref{eqn:tau-jk} to obtain $\rho_{y'}$\\
		&&(while decoupling as in Eqn.~\eqref{eqn:Trotter-dec});  \\
&(7) & to undo step (5), apply inverse unitary evolution $U_{12}^\dagger$ to re-diagonalise $\rho_{y'}$ 
		and obtain next iteration of \\ 
	&&  diagonal vector $y=y'$ and $\rho_y=\diag(y)$;\\
&(8) & go to (2) and terminate after $N-1$ loops ($N:=2^n$);\\
&(9) & apply inverse unitary evolution $U_x^\dagger$ from step (0) to take final $\rho_y$ to 
  $U_x^\dagger \rho_y U_x \simeq \rho_{\rm target}$.\\[1mm]
\hline\hline
\end{tabular}
\end{center}
\end{minipage}

\medskip

\noindent
Note that the general HLP scheme need not be time-optimal:
For instance, a model calculation shows that just the dissipative
intervals for transferring $\diag(1,2,3,\dots, 8)/36$ into $\unity_8/8$
under a bit-flip relaxation-rate constant $\gamma=\frac{5}{2} J$ and
achieving the target with $\delta_F=9.95 \times 10^{-5}$ sum up to $\tauf_{\rm relax}=12/J$ in the HLP-scheme,
while a greedy alternative can make it within $\tauf'_{\rm relax}=6.4/J$ and a residual error of $\delta_F=6.04\times 10^{-5}$.

\bigskip

\section{Outlook on the Relation to Extended Notions of Controllability in Open Quantum Systems}\label{sec:controllabilities}
The current results also pave the way to an outlook on controllability aspects of
open quantum systems on a more general scale, since they are much more intricate than in the case of closed systems 
\cite{VioLloyd01,Alt03,Alt04,Rabitz07b,DHKS08,Yuan09,Yuan11,ODS11,Pechen11,KDH12}.

Here we have taken profit from the fact that like in closed systems 
(where pure-state controllability is strictly weaker than full unitary controllability \cite{AA03,SchiSoLea02a}),
in open quantum systems
{\em Markovian state transfer} appears less demanding than the operator lift to the most
general scenario of {\em arbitrary quantum-map generation} (including non-Markovian ones) first connected to
{\em closed-loop feedback} control in \cite{VioLloyd01}. Therefore in view of experimental implementation,
the question arises how far one can get with {\em open-loop} control including noise modulation and whether the border to
{\em closed-loop feedback} control is related to (if not drawn by) Markovianity \footnote{
Note that in favourable cases, one can absorb non-Markovian relaxation by enlarging the system of interest
by tractably few degrees of freedom and treat the remaining dissipation in a Markovian way \cite{PRL_decoh2,JPB_decoh}
}.

\medskip
As used for the mathematical definition of Markovianity in the main text,
due to their defining divisibility properties \cite{Wolf08a,Wolf08b} that allow for an exponential construction
(of the connected component)  as {\em Lie semigroup} \cite{DHKS08}, {\em Markovian} quantum maps 
are indeed a well-defined special case of the more general completely positive trace-preserving 
(CPTP) semigroup of Kraus maps, which clearly comprise non-Markovian ones, too. 
While some controllability properties of general Kraus-{\em map generation}
have been studied in \cite{VioLloyd01,Rabitz07b}, 
a full account of controllability notions in open systems should also encompass {\em state-transfer} 
to give the following major scenarios:
\begin{enumerate}
\item Markovian state-transfer controllability ({\MSC}),
\item Markovian map controllability (\MMC),
\item general (Kraus-map mediated) state controllability (\KSC) (including the infinite-time limit of \/`dynamic state controllability\/' (\DSC) \cite{Rabitz07b,Pechen11}),
\item general Kraus-map controllability (\KMC) \cite{VioLloyd01,Rabitz07b}. 
\end{enumerate}
Writing \/`$\subseteq$\/' and \/`$\subsetneqq$\/' in some abuse of language for \/`weaker than\/' and
\/`strictly weaker than\/', one obviously has at least
$
\MSC \subseteq \KSC\; \text{and}\; \MMC \subseteq \KMC, 
$
while $\DSC\subsetneqq\KMC$ was already noted in the context of control directly over the Kraus operators \cite{Rabitz07b}.
In pursuing control over environmental degrees of freedom, 
Pechen \cite{Pechen11, Pechen12} also proposed a scheme, where both coherent plus incoherent light 
(the latter with an extensive series of spectral densities depending on ratios over the difference of eigenvalues 
of the density operators to be transferred) 
were shown to suffice for interconverting arbitrary states with non-degenerate eigenvalues in their density-operator
representations.

Yet the situation outlined above is more subtle, since unital and non-unital cases may differ. 
In this work, we have embarked on unital and non-unital Markovian state controllability, 
\MSC\ \footnote{For simplicity, first we only consider
	the extreme case of non-unital maps (such as amplitude damping) allowing for pure-state fixed points and 
           postpone the generalised cases
	parameterised by $\theta$ in Appendix~\ref{app:B} till the very end.}: 

Somewhat surprisingly, in the {\em non-unital} case (equivalent to amplitude damping),
the utterly mild conditions of unitary controllability plus
bang-bang switchable noise amplitude on one single internal qubit (no ancilla) suffice for acting transitively on 
the set of all density operators (Theorem~\ref{thm:transitivity}). Hence these features fulfill the maximal condition $\KSC$ already. 
In other words, for cases of non-unital noise equivalent to amplitude damping (henceforth indexed by \/`{\sf nu}\/'),
$\KSC_{nu}$ implies $\KSC$.
Moreover, under the reasonable assumption that the mild conditions in Theorem~\ref{thm:transitivity} 
are in fact the {\em weakest}
for controlling Markovian state transfer $\MSC_{nu}$ in our context, Theorem~\ref{thm:transitivity} shows that
$\MSC_{nu}$ implies $\KSC$ via $\KSC_{nu}$. 
So in the (extreme) non-unital cases, there is no difference between Markovian and 
non-Markovian state controllability. ---
On the other hand in order to compare non-unital with unital processes, taking  Theorems~\ref{thm:transitivity} and \ref{thm:majorisation} together
proves $\MSC_{u}\subsetneqq\MSC_{nu}$, since the former is restricted by the majorisation condition (see
Theorem~\ref{thm:majorisation}).

Similarly, in the {\em unital} case (equivalent to bit-flip), the mild conditions of unitary controllability plus
bang-bang switchable noise amplitude on one single internal qubit suffice for achieving {\em all}  state transfers obeying majorisation
(Theorem~\ref{thm:majorisation}). Hence again they fulfill the maximal condition $\KSC_u$ at the same time. 
This is because  state transfer under {\em every} unital CPTP Kraus map (be it Markovian or non-Markovian)  has to meet 
the majorisation condition; so we get $\KSC_u$. On the other hand, the majorisation condition itself imposes the
restriction $\KSC_u\subsetneqq\KSC$.
Again, under the reasonable assumption that the mild conditions in Theorem~\ref{thm:majorisation} 
are in fact the weakest
for controlling Markovian state transfer $\MSC_{u}$ in our context, Theorem~\ref{thm:majorisation} shows that
$\MSC_{u}$ implies $\KSC_u$. Thus also in the unital case, there is no difference between Markovian and 
non-Markovian state controllability.

The results on these two cases, i.e.\  the non-unital and the unital one (in the light of Appendix~\ref{app:B} seen as the limits $\theta=0$ and
$\theta=\tfrac{\pi}{2}$, respectively), can therefore be summarized as follows:
\begin{corollary}
In the two scenarios of Theorem~\ref{thm:transitivity} (non-unital) and \ref{thm:majorisation} (unital), 
Markovian state controllability already implies Kraus-map mediated state controllability and one finds
\begin{equation}
\begin{array}{c c c c c}
\MSC_{nu} & \Longrightarrow & \KSC_{nu}  & \Longrightarrow & \KSC \\[2mm]
\bigcup\negthickspace\nparallel & &\bigcup\negthickspace\nparallel & 
	\begin{turn}{-45}\raisebox{1.8mm}{$\bigcup\negthickspace\nparallel$}\end{turn} &\\[2mm]
\MSC_u & \Longrightarrow & \KSC_u & &\\ 
\end{array}
\end{equation}
\end{corollary}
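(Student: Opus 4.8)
The plan is to read the diagram as a bookkeeping statement about reachable sets and to certify each arrow separately: the two horizontal \/`$\Longrightarrow$\/' as saturation claims (the Markovian protocol already attains the maximal state transfer allowed in its symmetry class) and the three vertical \/`$\subsetneqq$\/' as strict-inclusion claims (the unital case is a proper restriction of the non-unital one). First I would fix each class, $\MSC$, $\KSC$ and their $nu/u$ refinements, relative to the same family of unitarily controllable $n$-qubit systems obeying Eqn.~(4), so that every implication is a well-typed statement and the arrows compose.

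For the top row I would invoke Theorem~\ref{thm:transitivity}: switchable amplitude damping on one qubit together with full unitary controllability makes the system act transitively on all of $\pos_1$, i.e.\ $\overline{\reach_{\Sigma_a}(\rho_0)}=\pos_1$ for every $\rho_0$. Since transitivity means every target is reachable from every initial state, and since no Kraus map whatsoever can leave $\pos_1$, the Markovian protocol already saturates general (Kraus-mediated) state controllability. This gives $\MSC_{nu}\Longrightarrow\KSC_{nu}$ and, because full controllability cannot be exceeded, $\KSC_{nu}\Longrightarrow\KSC$; indeed $\KSC_{nu}$ coincides with full state controllability $\KSC$.

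For the bottom row I would invoke Theorem~\ref{thm:majorisation}: bit-flip noise plus unitary control reaches exactly $\{\rho\in\pos_1\mid\rho\prec\rho_0\}$. The upper bound rests on the already-used fact (Thm.~7.1 of~\cite{Ando89}) that every unital completely-positive trace-preserving map $\Phi$ obeys $\Phi(A)\prec A$ --- crucially for \emph{all} unital Kraus maps, Markovian or not. Hence the Markovian protocol saturates the unital Kraus-map bound with no residual gap, yielding $\MSC_{u}\Longrightarrow\KSC_{u}$. The three strict inclusions $\MSC_{u}\subsetneqq\MSC_{nu}$, $\KSC_{u}\subsetneqq\KSC_{nu}$ and $\KSC_{u}\subsetneqq\KSC$ then follow by comparing the two reachable sets: the unital one is confined to the majorisation cone $\{\rho\prec\rho_0\}$ while the non-unital one fills all of $\pos_1$. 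To certify strictness it suffices to exhibit one admissible non-unital transfer violating majorisation --- e.g.\ increasing purity when cooling the maximally mixed state toward a pure state --- which is impossible under any unital map.

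The main obstacle is definitional rather than computational: the two horizontal arrows are only as strong as the premise, flagged in the text, that the hypotheses of Theorems~\ref{thm:transitivity} and~\ref{thm:majorisation} are the \emph{weakest} conditions for Markovian state transfer in each class; otherwise $\MSC$ need not be a well-defined antecedent. I would therefore phrase $\MSC\Longrightarrow\KSC$ as conditional on this minimality assumption while keeping the $\KSC$-level identities --- which follow unconditionally from transitivity and from the Ando bound --- logically separate. A final point of care is that the diagram asserts only \/`weaker-than\/' relations at the level of \emph{state} transfer, so the open lift to Kraus-\emph{map} controllability $\KMC$ is deliberately left untouched.
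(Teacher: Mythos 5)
Your proposal is correct and follows essentially the same route as the paper: Theorem~\ref{thm:transitivity} saturates $\KSC$ via transitivity on $\pos_1$ (so $\MSC_{nu}\Rightarrow\KSC_{nu}\Rightarrow\KSC$), Theorem~\ref{thm:majorisation} together with the Ando bound $\Phi(A)\prec A$ for \emph{all} unital CPTP maps saturates $\KSC_u$, the strict inclusions come from the majorisation cone being a proper subset of $\pos_1$, and the horizontal arrows are conditioned on the same minimality assumption the paper explicitly invokes (``under the reasonable assumption that the mild conditions \ldots are in fact the weakest''). Your separation of the unconditional $\KSC$-level identities from the conditional $\MSC\Rightarrow\KSC$ implications, and your explicit purity-increase witness for strictness, are just slightly more careful phrasings of what the paper does.
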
 
 However, whether $\MSC_\theta\Longrightarrow\KSC_\theta$ also holds in the
finite-temperature generalisation of Appendix~\ref{app:B}, where
$\theta$ can range over the entire interval $\theta\in[0,\tfrac{\pi}{2}]$ (with $\theta=0$ giving the limiting cases
$\MSC_{nu}, \KSC_{nu}$ and $\theta=\tfrac{\pi}{2}$ yielding $\MSC_u, \KSC_u$), currently remains an open question.

This has an important consequence for experimental implementation of {\em state transfer} in open quantum systems: 
On a general scale in $n$-qubit systems, unitary control plus measurement-based closed-loop feedback from one 
resettable ancilla (as, e.g.,  in Ref.~\cite{BZB11} following \cite{VioLloyd01})
can be replaced by unitary control plus open-loop bang-bang switchable non-unital noise (equivalent to amplitude damping)
on a single internal qubit. This is because both scenarios are sufficient to ensure Markovian and non-Markovian
state controllability \KSC. {\bf Example~5} in the main part illustrates this general simplifying feature.

\medskip

The privileged situation of quantum-state transfer and simulation can be elucidated by comparison to
the afore-mentioned more demanding task of {\em quantum-map synthesis}: Even in the connected component
of quantum maps (i.e.\ arbitrarily close to the identity) there exist non-Markovian maps which thus cannot
be constructed exponentially. More precisely, Wolf and Cirac identified a class of indivisible single-qubit
channels \cite{Wolf08a} (ibid.\ Thm.~23 on rank-three channels with diagonal Lorentz normal forms),
which were shown to extend into the connected component \cite{DHKS08}.
Since (at least) those maps cannot be constructed exponentially and thus do not follow a 
Lindblad master equation, they serve as easy counter examples excluding that \MMC already implies \KMC.
So Markovian map controllability is strictly weaker than Kraus-map controllability, i.e.\ $\MMC\subsetneqq \KMC$.

Yet some questions with regard to the operator lift to map synthesis remain open:
Assessing a demarcation
between \MMC and \KMC (and their unital versus non-unital variants) seems to require 
different proof techniques than used here. 
In a follow-up study we will therefore further develop our lines of assessing the differential geometry 
of Lie semigroups in terms and their Lie wedges \cite{DHKS08,ODS11} to this end,
since judging upon Markovianity on the level of Kraus maps is known to be more intricate \cite{Wolf08b,Wolf12}.
More precisely,
{\em time-dependent Markovian} channels come with a general form of a Lie wedge
in contrast to {\em time-independent Markovian} channels, whose generators form the special structure of a
Lie semialgebra (i.e.\ a Lie wedge closed under Baker-Campbell-Hausdorff multiplication). In  \cite{DHKS08},
we have therefore drawn a detailed connection between these differential properties of Lie semigroups and the 
different notions of divisibility studied as a defining property of Markovianity in the seminal work \cite{Wolf08a}.

Again, these distinctions will decide on simplest experimental implementations in the sense
that measurement-based {\em closed-loop feedback} control may be required for non-Markovian maps in \KMC, 
while {\em open-loop} noise-extended control may suffice for Markovian maps in \MMC.
More precisely, 
{\em closed-loop feedback} control was already shown to be {\em sufficient} for \KMC  in \cite{VioLloyd01}
(which was the aim that work set out for), yet it remains to be
seen {\em to which extent} it is also {\em necessary}.
Wherever it turns out to be unnecessary, measurement-based closed-loop feedback control on a system extended by one 
resettable ancilla \cite{VioLloyd01,BZB11,SMB12} would be not be stronger than our open-loop scenario of full unitary control extended by (non-unital) noise modulation. This has direct bearing on the simplification of quantum simulation 
experiments \cite{SMB12}. Therefore a demarcation line between Markovian and non-Markovian maps in differential 
geometric terms will be highly useful.



\end{document}